\newtheorem{theo}{Theorem}[section]
\newtheorem{lemma}[theo]{Lemma}
\newtheorem{prop}[theo]{Proposition}
\newtheorem{cor}[theo]{Corollary}
\theoremstyle{definition}
\newtheorem{defi}[theo]{Definition}
\theoremstyle{plain}
\newtheorem{rem}[theo]{Remark}
\newenvironment{proofof}[1]{\begin{proof}[Proof of #1]}{\end{proof}}
\newenvironment{reminder}[1]{\bigskip
	\noindent {\bf Reminder of #1  }\em}{\smallskip}
\newcommand{\poly}{\operatorname*{poly}}
\newcommand{\AMcc}{\textsf{AM}^{\cc}} 
\newcommand{\MA}{\textsf{MA}}
\newcommand{\PP}{\textsf{PP}}
\newcommand{\PPcc}{\textsf{PP}^{\cc}}
\newcommand{\NP}{\textsf{NP}}
\newcommand{\BP}{\textsf{BP}}
\newcommand{\UPP}{\textsf{UPP}}
\newcommand{\UPPcc}{\textsf{UPP}^{\cc}} 
\newcommand{\cc}{\textsf{cc}}
\newcommand{\eps}{\epsilon}
\newcommand{\polylog}{\operatorname*{polylog}}
\newcommand{\AND}{\textsf{AND}}
\newcommand{\OR}{\textsf{OR}}
\renewcommand{\epsilon}{\varepsilon}
\def\ShowAuthNotes{1}
\newcommand{\authnote}[2]{\ \\ \textcolor{red}{\parbox{0.9\linewidth}{[{\footnotesize {\bf #1:} { {#2}}}]}}\newline}
\newcommand{\authnote}[2]{}
\newcommand{\lnote}[1]{\authnote{Lijie}{#1}}
\let\svfootnoterule\footnoterule
\renewcommand\footnoterule{\vfill\svfootnoterule}
\newcommand{\SAT}{\textsf{SAT}}
\newcommand{\SETH}{\textsf{SETH}}
\newcommand{\MaxIP}{\textsf{Max-IP}}
\newcommand{\IntMaxIP}{\textsf{$\mathbb{Z}$-Max-IP}}
\newcommand{\IntOV}{\textsf{$\mathbb{Z}$-OV}}
\newcommand{\Hopcroft}{\IntOV}
\newcommand{\logstar}{\log^{*}}
\newcommand{\WT}{\widetilde}
\newcommand{\BCPtwo}{\textsf{Bichrom.-$\ell_2$-Closest-Pair}}
\newcommand{\BCPone}{\textsf{Bichrom.-$\ell_1$-Closest-Pair}}
\newcommand{\FPtwo}{\textsf{$\ell_2$-Furthest-Pair}}
\newcommand{\MAXSAT}{\textsf{MAX-SAT}}
\newcommand{\ENP}{\textsf{E}^{\textsf{NP}}}
\newcommand{\SYM}{\textsf{SYM}}
\newcommand{\SYMAND}{\SYM\circ\AND}
\newcommand{\THR}{\textsf{THR}}
\newcommand{\THRTHR}{\THR\circ\THR}
\newcommand{\SYMTHR}{\SYM\circ\THR}
\newcommand{\SUM}{\textsf{SUM}}
\newcommand{\DISJOR}{\textsf{DOR}}
\newcommand{\GAPOR}{\textsf{Gap-OR}}
\newcommand{\MAJ}{\textsf{MAJ}}
\newcommand{\ETHR}{\textsf{ETHR}}
\newcommand{\NEXP}{\textsf{NEXP}}
\newcommand{\WMaxIP}{\textsf{Weighted-Max-IP}}
\newcommand{\LT}{\textsf{LT}}
\newcommand{\HLT}{\widehat{\textsf{LT}}}
\newcommand{\NC}{\textsf{NC}}
\newcommand{\APSP}{\textsf{APSP}}
\newcommand{\EMAJ}{\textsf{EMAJ}}
\newcommand{\KSAT}{\textsf{$k$-SAT}}
\newcommand{\TC}{\textsf{TC}}
\newcommand{\TIME}{\textsf{TIME}}
\newcommand{\SUMOR}{\textsf{SUMOR}}
\newcommand{\psirevx}{\psi^{x}_\textsf{rev}}
\newcommand{\psirevy}{\psi^{y}_\textsf{rev}}
\newcommand{\MAX}{\textsf{Max}}
\newcommand{\AMA}{\textsf{AMA}}
\newcommand{\ckt}{\mathscr{C}}
\newcommand{\com}{\textsf{com}}
\newcommand{\IND}{\textsf{IND}}
\newcommand{\dt}{\textsf{dt}}
\newcommand{\ACC}{\textsf{ACC}}
\newcommand{\NQP}{\textsf{NQP}}
\newcommand{\MAXkSAT}[1][k]{\textsf{MAX-$#1$-SAT}}
\title{Toward Super-Polynomial Size Lower Bounds for Depth-Two Threshold Circuits}
\author{Lijie Chen\thanks{Email: lijieche@mit.edu. Supported by an Akamai Fellowship.}\\MIT}
\date{}
\begin{document}
	\clearpage\maketitle
	\thispagestyle{empty}
	\begin{abstract}
		Proving super-polynomial size lower bounds for $\TC^0$, the class of constant-depth, polynomial-size circuits of Majority gates, is a notorious open problem in complexity theory. A major frontier is to prove that $\NEXP$ does not have poly-size $\THRTHR$ circuit (depth-two circuits with linear threshold gates). 
		
		In recent years, R.~Williams proposed a program to prove circuit lower bounds via improved algorithms. In this paper, following Williams' framework, we show that the above frontier question can be resolved by devising \emph{slightly} faster algorithms for several fundamental problems:        
		
		\begin{itemize}
			\item \textbf{Shaving Logs for \FPtwo.} An $n^2 \poly(d) / \log^{\omega(1)} n$ time algorithm for \FPtwo\ in $\mathbb{R}^d$ for polylogarithmic $d$ implies $\NEXP$ has no polynomial size $\THRTHR$ circuits. The same holds for Hopcroft's problem, $\BCPtwo$ and Integer $\MaxIP$.
			
			\item \textbf{Shaving Logs for Approximate $\BCPtwo$.} An $n^2 \poly(d) / \log^{\omega(1)} n$ time algorithm for $(1+1/\log^{\omega(1)} n)$-approximation to $\BCPtwo$ or \BCPone\ for polylogarithmic $d$ implies $\NEXP$ has no polynomial size $\SYMTHR$ circuits.
			
			\item \textbf{Shaving Logs for Modest Dimension Boolean $\MaxIP$.} An $n^2 / \log^{\omega(1)} n$ time algorithm for Bichromatic Maximum Inner Product with vector dimension $d = n^\eps$ for any small constant $\eps$ would imply $\NEXP$ has no polynomial size $\THRTHR$ circuits. Note there is an $n^2\polylog(n)$ time algorithm via fast rectangle matrix multiplication.
		\end{itemize}
	
		Our results build on two structure lemmas for threshold circuits: a poly-size $\THRTHR$ circuit can be written as
		
		\begin{itemize}
			\item an $\OR$ of polynomially many poly-size $\THR\circ\MAJ$ circuits;
			
			\item an $\OR$ of sub-exponentially many poly-size $\MAJ\circ\MAJ$ circuits, or as an $\OR$ of polynomially many sub-exponential size $\MAJ\circ\MAJ$ circuits.
		\end{itemize}
		The second structure lemma itself only gives a randomized reduction, which we derandomize nondeterministically to apply Williams' connection. 
		
		With similar techniques, we also show slightly improved algorithms for $\MAXSAT$ and $\KSAT$ would imply interesting circuit lower bounds:
		
		\begin{itemize}
			\item \textbf{Better Algorithms for $\MAXSAT$ Implies Super-quasi-polynomial $\SYMAND$ Lower Bounds.} A $2^{n \cdot (1 - 1/2^{ (\log m)^{o(1)} } ) }$ time algorithm for $\MAXSAT$ implies that $\NEXP$ has no quasi-polynomial size $\SYMAND$ circuits. This is to be contrasted with $\textsf{CNF-}\SAT$, which admits a $2^{n \cdot (1 - 1/\log(m/n))}$ time algorithm.
			\item \textbf{Better Algorithms for $\KSAT$ Breaks the $\log\log n$ Depth Barrier for $\TC$ Circuits.} An algorithm for $\KSAT$ in $2^{n \cdot (1 - 1/k^{1 / \omega(\log\log k)} )}$ time implies that $\ENP$ has no linear size (in terms of wires) $O(\log\log n)$-depth $\TC$ circuits. The best known algorithm runs in $2^{n \cdot (1 - 1 / O(k))}$ time.            
		\end{itemize}
		
	\end{abstract}
	\addtocounter{page}{-1}
	\newpage

\section{Introduction}

What interesting functions do not have polynomial-size $\TC^0$ circuits? Despite substantial research effort on this question~\cite{HajnalMPST93,AllenderK10,AmanoM05,ChenS15,ForsterKLMSS01,GoldmannHR92,groeger1993linear,HansenP10,HansenP15,ImpagliazzoPS13,ImpagliazzoPS97,nisan1993communication,PaturiS94,RoychowdhuryOS94,Williams14THR,Tamaki16,AlmanCW16,KaneW16} it is consistent with current knowledge that $\NEXP$ has polynomial-size $\THRTHR$ or $\SYMTHR$ circuits\footnote{$\THRTHR$ refers to depth-$2$ circuits consisting of linear threshold gates. $\SYMTHR$ refers to depth-$2$ circuits consisting of a top $\SYM$ gate and many bottom $\THR$ gates. See Section~\ref{sec:circuits} for formal definitions.}.

In 2011, a breakthrough result of R.~Williams~\cite{Wil14ACC,Wil13} showed that $\NEXP$ does not have polynomial-size $\ACC^0$ circuits, by connecting an appealing algorithmic approach to circuit lower bounds: circuit lower bounds can be proved by slightly-better-than-trivial circuit-analysis algorithms for problems such as satisfiability or derandomization. Along these lines, several subsequent works follow Williams' program~\cite{Williams13NEXPBPP,Williams14THR,Ben-SassonV14,JahanjouMV15,AlmanCW16,Williams16Derand,Tamaki16}, and lower bounds for more circuit classes have been proved by introducing new algorithms, or tightening the connection itself. For an example of the latter, in the recent exciting work by Murray and Williams~\cite{MurrayW17}, it is shown that $\sf{NTIME}[n^{\poly(\log n)}]$ does not have polynomial-size $\ACC^0 \circ \THR$ circuits, via a new Easy Witness Lemma.

The next big challenge for complexity theorists would be to apply Williams' connection to prove that $\NEXP$ (even $\NQP$\footnote{$\sf{NTIME}[n^{\poly(\log n)}]$}) is not contained in depth-$2$ threshold circuits. In fact, partial results are already made. In~\cite{Tamaki16,AlmanCW16}, it is shown that $\ENP$ is not contained in $n^{2-o(1)}$ size $\THRTHR$ circuits.

In this paper, we apply Williams' connection, together with many new and old tools from the structure theory of threshold circuits, to show that super-polynomial circuit lower bounds for $\THRTHR$ or $\SYMTHR$ would follow from tiny improvements (shaving all polylogs) over the running time of many fundamental problems in \emph{computational geometry}.

We also consider two other well-studied fundamental problems $\MAXSAT$ and $\KSAT$: the canonical $\textsf{NP-hard}$ optimization problem and the canonical $\textsf{NP-complete}$ problem. The state-of-the-art algorithms for $\MAXSAT$ are much slower than that of $\textsf{CNF-SAT}$, and the best known running time for $\KSAT$ has remained at $2^{n (1 - 1/O(k))}$ for 20 years. We show that (very) modest improvements on their current state-of-the-art algorithms would imply lower bounds for $\SYMAND$ circuits, and for $O(\log\log n)$-depth $\TC$ circuits. These results for $\MAXSAT$ and $\KSAT$ can be interpreted in two ways: either as a barrier for getting faster algorithms because proving circuit lower bounds is generally considered hard, or as a new approach for attacking those long-standing open questions in circuit complexity, providing extra motivations for studying these two problems.

\subsection{Our Results}

\subsection*{Consequence of Shaving Logs from $\FPtwo$ and Related Problems}

Our first result is that shaving logs from $\FPtwo$ or other related problems in computational geometry would resolve our open problem in circuit complexity.

\begin{theo}\label{theo:Hopcroft-THRTHR-NEXP}
	If any of the following problems has an $n^{2} \poly(d) / \log^{\omega(1)} n$ time deterministic algorithm for polylogarithmic $d$, then $\NEXP$ has no polynomial size $\THR \circ \THR$ circuits:
	
	\begin{enumerate}
		\item $\Hopcroft_{n,d}$ (Hopcroft's Problem): Find an orthogonal pair among $n$ points in $\mathbb{Z}^d$.
		
		\item $\FPtwo_{n,d}$:  Find the $\ell_2$-furthest pair among $n$ points in $\mathbb{R}^d$.
		
		\item $\BCPtwo_{n,d}$: Given two set $A,B$ of $n$ points in $\mathbb{R}^{d}$, compute $\min_{(a,b) \in A \times B} \|a - b\|_2$.
		
		\item $\IntMaxIP_{n,d}$: Given two sets $A,B$ of $n$ vectors from $\mathbb{Z}^{d}$, compute $\max_{(a,b) \in A \times B} a \cdot b$.
		
		\item $\WMaxIP_{n,d}$: Given a weight vector $w \in \mathbb{Z}^{d}$ and two sets $A,B$ of $n$ vectors from $\{0,1\}^{d}$, compute $\max_{(a,b) \in A \times B} a \odot_{w} b$, where $a \odot_{w} b := \sum_{i=1}^{d} w_i \cdot a_i \cdot b_i $.
	\end{enumerate}
\end{theo}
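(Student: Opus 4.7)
The plan is to invoke Williams' algorithmic framework: to conclude $\NEXP \not\subseteq \THR\circ\THR$, it suffices to design a deterministic circuit-analysis algorithm (e.g., for Circuit-SAT or Circuit-Acceptance) for poly-size $\THR\circ\THR$ circuits on $N$ variables that runs in time $2^N / N^{\omega(1)}$. The first structure lemma stated in the abstract lets me replace any such $\THR \circ \THR$ instance with an $\OR$ of $\poly(N)$ many poly-size $\THR \circ \MAJ$ instances on the same inputs, so the problem reduces to shaving logs from $\THR \circ \MAJ$ SAT.

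To solve $\THR \circ \MAJ$ SAT on $N$ inputs, I would use the standard split-and-list reduction: partition inputs as $(x, y) \in \{0,1\}^{N/2} \times \{0,1\}^{N/2}$, and for each half enumerate $n = 2^{N/2}$ vectors in $\mathbb{R}^d$ with $d = \polylog(n)$. Each bottom $\MAJ$ gate $g_i(x,y) = \mathrm{sign}(\alpha_i(x) + \beta_i(y) - \theta_i)$ is separable, and by expanding each possible integer value of $\alpha_i(x) \in [-N, N]$ into a sparse one-hot block paired with a matching threshold indicator for $\beta_i(y)$, the evaluation of $g_i$ becomes an inner-product term between $u(x)$ and $v(y)$; placing the top $\THR$ weights $c_i$ into a weight vector $w$ and summing over all $i$ casts circuit acceptance as the decision version of $\WMaxIP_{n,d}$. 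An $n^{2} \poly(d)/\log^{\omega(1)} n$ algorithm for $\WMaxIP$ then yields a $2^{N}/N^{\omega(1)}$ algorithm for $\THR\circ\MAJ$ SAT, as required.

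The remaining four problems are handled by performing the same split-and-list construction but tailoring the encoding to the target problem, which is possible because these problems are all equivalent up to $\polylog$-dimension overhead: $\IntMaxIP$ is $\WMaxIP$ with $w = \mathbf{1}$ after folding $w$ into the vectors via $\{0,1\}$-expansion of weights; $\BCPtwo$ and $\FPtwo$ are equivalent to $\MinIP$ and $\MaxIP$ via the identity $\|a-b\|_2^2 = \|a\|_2^2 + \|b\|_2^2 - 2\langle a,b\rangle$ together with a norm-equalizing lift that appends a coordinate making all vectors have equal norm; and $\Hopcroft_{n,d}$ captures the feasibility version of the inner-product threshold predicate after a threshold-to-orthogonality transformation (append a ``slack'' coordinate and sum over the $\poly(d)$ possible threshold values).

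The main obstacle I foresee is the bookkeeping rather than any single reduction: the top $\THR$ gate can carry exponentially large integer weights, and a naive encoding blows up either the dimension or the bit-complexity and destroys the $\log^{\omega(1)}$ savings needed to invoke Williams. The $\WMaxIP$ formulation is tailored precisely to absorb the top-gate weights into its weight vector $w$ without enlarging $d$, and propagating this through the cross-reductions to $\IntMaxIP$, $\BCPtwo$, $\FPtwo$, and $\Hopcroft$ requires a careful $\{0,1\}$-expansion that preserves $d = \polylog(n)$. A secondary concern is determinism, since Williams' connection in its cleanest form requires a deterministic circuit-analysis algorithm; this is fine here because the first structure lemma is a purely structural rewriting and the assumed $\WMaxIP$ algorithm is deterministic by hypothesis.
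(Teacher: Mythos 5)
Your overall route is the same as the paper's: apply Williams' connection, use Structure Lemma I to replace the $\THRTHR$ circuit by a polynomial $\OR$ of $\THR\circ\MAJ$ circuits, and then solve $\THR\circ\MAJ$ SAT by split-and-list, one-hot encoding each bottom gate's contribution from each half of the variables so that the top gate's weighted sum becomes a $\WMaxIP$ instance with the top weights absorbed into $w$. This is exactly Lemma~\ref{lm:THRTHR-to-WMaxIP}. The paper then dispatches $\FPtwo$, $\BCPtwo$, $\IntMaxIP$ by citing known reductions from Hopcroft's problem rather than re-deriving them, but your sketched norm-equalization and weight-expansion arguments are the same reductions in substance.

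There is one genuine step you skip. Williams' connection, in the form the paper uses (Theorem~\ref{theo:NEXP-lowb-poly}), requires a fast satisfiability algorithm not for a single $\THRTHR$ circuit but for a conjunction $g_1 \wedge g_2 \wedge g_3$ of three of them, and $\THRTHR$ is \emph{not} obviously closed under conjunction: naively merging three top threshold gates into one does not work because a sum of three threshold indicators being $3$ is not a linear-threshold condition on the original gates' linear forms. The paper handles this in Lemma~\ref{lm:lowb-THRTHR} by rewriting each $\THRTHR$ circuit as a $\DISJOR\circ\ETHR\circ\ETHR$ circuit (Proposition~\ref{prop:circuit-facts-contain}(3)), distributing the $\AND_3$ over the disjoint ORs, and using $\AND\circ\ETHR\subseteq\ETHR$ together with $\ETHR\circ\ETHR\subseteq\THRTHR$ to land back in polynomially many $\THRTHR$ instances. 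Your proposal asserts that a $2^N/N^{\omega(1)}$-time SAT algorithm for a single poly-size $\THRTHR$ circuit suffices, which is a sufficiency claim that itself needs this argument; without it the reduction chain to the lower bound does not close. The fix is short and standard once noticed, but it is a real missing step rather than bookkeeping.
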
 

The best known algorithms for $\Hopcroft$, $\FPtwo$, $\BCPtwo$ and $\IntMaxIP$ are of running time $n^{2 - 1/O(d)}$~\cite{matouvsek1992efficient,agarwal1991euclidean,yao1982constructing}, which means there is no improvement when $d = \Omega(\log n)$. But note that we do not require a truly-subquadratic time algorithm here: we only need to ``shave all the logs'' from the trivial $n^2 \poly(d)$ running time for polylogarithmic $d$, and we only need to do so for one of the above problems. We are optimistic that such algorithms exist, given the rich toolkit (which keeps growing) available for solving geometry problems.

We also remark here that all problems above except for the last one requires $n^{2 - o(1)}$ time when $d = 2^{O(\logstar n)}$ under $\SETH$~\cite{Che18}. But again that conditional lower bound says nothing about whether shaving logs are possible.

\subsection*{Consequence of Shaving Logs for Approximate \BCPtwo}

Our second result is that shaving logs on problems which are easier than those in the previous section would imply circuit lower bounds for $\SYMTHR$.

\begin{theo}\label{theo:Max-IP-NEXP}
	If any of following problems has an $n^{2} \poly(d) / \log^{\omega(1)} n$ time deterministic algorithm for polylogarithmic $d$, then $\NEXP$ has no polynomial size $\SYM \circ \THR$ circuits:
	
	\begin{enumerate}
		\item $\MaxIP_{n,d}$: Given two sets $A,B$ of $n$ vectors from $\{0,1\}^{d}$, compute $\max_{(a,b) \in A \times B} a \cdot b$.
		
		\item Compute a $(1 + 1/\log^{\omega(1)} n)$-approximation to $\BCPtwo_{n}$.
		
		\item Compute a $(1 + 1/\log^{\omega(1)} n)$-approximation to $\BCPone_{n}$.
	\end{enumerate}
\end{theo}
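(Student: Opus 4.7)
The plan is to apply the Williams algorithmic approach to circuit lower bounds: I will show that an $n^2/\log^{\omega(1)} n$-time algorithm for any of the three listed problems yields a $2^n/n^{\omega(1)}$-time satisfiability algorithm for poly-size $\SYM\circ\THR$ circuits on $n$ inputs, which by the Williams framework (and the Murray-Williams sharpening) implies $\NEXP\not\subseteq \SYM\circ\THR$. The task then reduces to building a $\SYM\circ\THR$-SAT algorithm that calls $\MaxIP$ (or approximate bichromatic closest pair) as a subroutine.

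The core reduction encodes the count statistic of a $\SYM\circ\THR$ circuit as a Boolean inner product. Given a $\SYM\circ\THR$ circuit $C$ with $s=n^{O(1)}$ bottom threshold gates, write $C(x)=f(c(x))$ where $c(x) = \sum_{i=1}^{s} \mathbf{1}[w_i \cdot x \ge t_i]$ and $f:\{0,\ldots,s\} \to \{0,1\}$ encodes the top symmetric gate. Split $x=(y,z)$ with $|y|=|z|=n/2$, and set $R_i(y) := t_i - w_i^y \cdot y$ and $s_i(z) := w_i^z \cdot z$, so that $c(y,z)=\sum_i \mathbf{1}[s_i(z) \ge R_i(y)]$. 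To express this count as a Boolean inner product of polylogarithmic dimension, I would first compress the dynamic range of each threshold: for each $i$ the $2^{n/2}$ values $\{R_i(y)\}_y$ and $\{s_i(z)\}_z$ can be replaced by their ranks in a common sorted list, and a Goldmann-Karchmer-Razborov-style simulation (combined with bucketing) shrinks the per-gate range to $\poly(s)$. After this, indexing $u_y, v_z \in \{0,1\}^D$ by pairs $(i,k)$ via $u_y[(i,k)] := \mathbf{1}[R_i(y) \le k]$ and $v_z[(i,k)] := \mathbf{1}[s_i(z)=k]$ yields $u_y \cdot v_z = c(y,z)$ with $D=\poly(s)=\polylog(2^{n/2})$.

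Once this encoding is in place, satisfiability of $C$ amounts to asking whether some pair $(y,z)$ achieves $c(y,z) \in T := f^{-1}(1)$. Since $\MaxIP$ returns only the maximum of $u_y \cdot v_z$, I would issue $O(|T|)=O(s)$ $\MaxIP$ calls on appropriately shifted encodings, realising "$c=t$" as the conjunction of "$c \ge t$" and "$s - c \ge s - t$" by pairing a query on the original encoding with one on a complementary encoding. Each call costs $2^n/n^{\omega(1)}$ on $2^{n/2}$ vectors in polylog dimension, and $\poly(n)$ calls still fit the target. For items (2) and (3), a $(1+1/\log^{\omega(1)} n)$-approximation of $\BCPtwo$ or $\BCPone$ on polylog-dimensional Boolean vectors recovers $\MaxIP$ exactly: a standard norm-balancing padding maps inner products bijectively into $\ell_2$ (or $\ell_1$) distances, and these take integer values of order $\polylog n$, so an approximation finer than $1+1/\polylog n$ suffices to disambiguate consecutive integers.

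The main obstacle, I expect, is the first step: converting an arbitrary-integer-weight $\THR$ gate into a form whose behavior on $\{0,1\}^{n/2}$ is captured in $\poly(s)$ Boolean coordinates \emph{without breaking the top $\SYM$ structure}. A per-gate substitution $\THR \subseteq \MAJ \circ \MAJ$ introduces an extra layer of depth, so the argument must either absorb the added $\MAJ$ into the top $\SYM$ (exploiting that $\MAJ$ is itself symmetric) or give a rank-compression scheme that preserves all $s$ thresholds simultaneously using only polylog many coordinates. The secondary difficulty is verifying that the shift-and-complement promotion of $\MaxIP$ to a "count $=t$" oracle does not blow the dimension beyond $\polylog n$, so that the $\poly(n)$ oracle calls stay within the shaving-logs regime.
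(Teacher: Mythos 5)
Your high-level route is the paper's route: apply Williams' connection, reduce $\SYM\circ\THR$-SAT to polylog-dimensional $\MaxIP$ via a split-input inner-product encoding of the count statistic, and reduce $\MaxIP$ to approximate $\BCPone$/$\BCPtwo$ by norm-balancing padding (that last part matches the paper's Lemma~\ref{lm:simple-lemma-1} essentially verbatim). However, two steps you leave open or get wrong are exactly where the content of the proof lies.

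First, the bottom-gate weight problem that you flag as "the main obstacle" is not something you resolve; the paper resolves it by invoking the known containment $\SYM\circ\THR\subseteq\SYM\circ\MAJ$ (Proposition~\ref{prop:circuit-facts-contain}~(4), due to Goldmann--H{\aa}stad--Razborov and Hansen--Podolskii), after which every bottom gate is a majority with range $O(n)$ and the one-hot encoding works immediately in dimension $O(sn)$. Your alternative (a), "absorb the added $\MAJ$ into the top $\SYM$ because $\MAJ$ is symmetric," is not a valid argument: $\SYM\circ\MAJ\circ\MAJ$ does not collapse to $\SYM\circ\MAJ$ just because the middle gates are symmetric. Your alternative (b), rank-compression plus bucketing, can be made to work (it is how the paper's Appendix~A handles $\THR$ bottom gates for Lemma~\ref{lm:THRTHR-to-WMaxIP}), but you omit the essential complication that pairs landing in the same bucket are not decided by bucket indices and must be brute-forced separately, with the bucket count $t$ tuned so that this extra work is $N^2/t\cdot\poly(s)$.

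Second, your mechanism for testing "$c(y,z)=t$" with a $\MaxIP$ oracle is broken. Issuing one query whose maximum certifies $\exists (y,z): c\ge t$ and a complementary query certifying $\exists (y,z): s-c\ge s-t$ and taking the conjunction does not certify that a \emph{single} pair satisfies both: the two maxima may be attained by different pairs. The paper instead uses a tensoring trick (Lemma~\ref{lm:micro-reduction-MIN} / Corollary~\ref{cor:micro-reduction-MAX}): realize $(u\cdot v)^2$ as $\widetilde u\cdot\widetilde v$ in dimension $D^2$ and combine with a complementation gadget so that a single inner product equals a known constant $M$ iff $c=t$ and is strictly smaller otherwise; then one $\MaxIP$ call per target value $t$ suffices, and the dimension stays $\poly(s,n)=\polylog(2^{n/2})$. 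The same gadget is what lets the paper handle the $\AND_3$ of circuits required by the Ben-Sasson--Viola/Williams connection (Theorem~\ref{theo:NEXP-lowb-poly}), a requirement your write-up does not address at all: you need a SAT algorithm for $\AND_3\circ\SYM\circ\THR$, not just $\SYM\circ\THR$, and the conjunction of three exact-count conditions is again encoded by concatenating three such gadgets so that the combined inner product attains its maximum iff all three are satisfied simultaneously.
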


The best known algorithms for $(1+\eps)$-approximation to $\BCPone$ or \BCPtwo\ runs in $n^{2 - \widetilde{\Omega}(\eps^{1/3})}$ time, while the best known algorithm for $\MaxIP_{n,d}$ runs in $n^{2 - \widetilde{\Omega}(1 / \sqrt{d/\log n})}$ time (for $d \gg \log n$). For those algorithms, there is no improvement when $\eps \ll 1/\log^3 n$ or $d \gg \log^3 n$.  Also, note that all these problems require $n^{2 - o(1)}$ time when $d = \omega(\log n)$ under $\SETH$~\cite{Rubinstein2017closest,Williams05}. But again it seems plausible that there are some clever ways to shave logs in higher dimensional cases. 

A more fine-grained statement can be made if we relax $\NEXP$ to $\ENP$:

\begin{theo}\label{theo:Max-IP-ENP}
	Suppose for a real $k > 2$, one of the following deterministic algorithms exists:
	
	\begin{enumerate}
		\item An $n^2 / \log^{\omega(1)} n$ time algorithm for $\MaxIP_{n, \log^k n}$.
				
		\item A $(1 + 1/\log^k n)$-approximation algorithm for $\BCPone_{n}$ in $n^2 / \log^{\omega(1)} n$ time.
		
		\item A $(1 + 1/\log^k n)$-approximation algorithm for $\BCPtwo_{n}$ in $n^2 / \log^{\omega(1)} n$ time.
	\end{enumerate}
	
	Then $\ENP$ has no $n^{(k-2)/2 - \eps_1}$-size $\SYM \circ \SYM$ circuits for any $\eps_1 > 0$. 
\end{theo}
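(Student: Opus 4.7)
The plan is to apply the strategy of Theorem~\ref{theo:Max-IP-NEXP} in a quantitatively fine-grained way: reduce $\SYM \circ \SYM$ satisfiability to the hypothesized geometric problem, use the fast algorithm to get a $2^n/n^{\omega(1)}$-time $\SYM \circ \SYM$-SAT algorithm, and then invoke Williams' $\ENP$ connection~\cite{Wil13,Wil14ACC}.

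I start with a $\SYM \circ \SYM$ circuit $C$ of size $s$ on $n$ inputs and split the inputs as $x = (a,b)$ with $|a| = |b| = n/2$. Each bottom $\SYM$ gate depends only on the Hamming weights of its input bits in each half; writing the $i$-th bottom gate as $h_i(|a^{(i)}|, |b^{(i)}|)$ and the top $\SYM$ gate as $F$, we have $C(a,b) = F\!\left(\sum_i h_i(|a^{(i)}|, |b^{(i)}|)\right)$. Encoding the left and right Hamming-weight indicators for each bottom gate yields Boolean vectors $\alpha(a),\beta(b)$ of dimension $O(s n^2)$ (since $|a^{(i)}|,|b^{(i)}| \in [0,n/2]$) such that $T(a,b) := \sum_i h_i = \langle \alpha(a), \beta(b)\rangle$. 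The top symmetric gate $F$ is then handled by iterating over the $O(s)$ integer values in its support and, for each target $\tau$, reducing ``$\exists(a,b): T(a,b) = \tau$'' to a constant number of $\MaxIP$ (or approximate $\BCP$) queries on slightly padded vectors. This last step is the delicate one, and after writing out the auxiliary coordinates needed to encode exact equality of an integer inner product via $\MaxIP$, the effective dimension inflates to $d = O(s^2 n^2)$; the overall geometric instance has $N = 2^{n/2}$ vectors in this dimension.

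For the assumed algorithm to apply, we need $d \leq \log^k N = (n/2)^k$, i.e.\ $s^2 n^2 \leq n^k$, which gives $s \leq n^{(k-2)/2}$. For items 2 and 3 (approximate $\BCPone$ and $\BCPtwo$), I additionally use that $T(a,b)$ takes integer values in $[0, O(s)]$, so after normalizing the vectors to equal $\ell_1$- or $\ell_2$-norms by padding, the relative gap between consecutive candidate distances is at least $\Omega(1/s) \gg 1/\log^k N$, hence a $(1 + 1/\log^k N)$-approximate algorithm pins down $T(a,b)$ exactly. Choosing $s = n^{(k-2)/2 - \eps_1}$ leaves enough slack to absorb all $\poly(s,n)$ overheads (iteration over $\tau$, vector construction, norm normalization) into the $n^{\omega(1)}$ savings, yielding a deterministic $\SYM \circ \SYM$-SAT algorithm for size-$s$ circuits in time $N^2/\log^{\omega(1)} N = 2^n/n^{\omega(1)}$.

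Finally, a direct application of Williams' $\ENP$ lower-bound framework~\cite{Wil13,Wil14ACC} converts this SAT algorithm into the conclusion that $\ENP$ has no $\SYM \circ \SYM$ circuits of size $n^{(k-2)/2 - \eps_1}$. I expect the main obstacle to be the SAT-to-$\MaxIP$ reduction for a general top $\SYM$ gate $F$: expanding $F$ into $O(s)$ equality queries and reducing each to (approximate) $\MaxIP$ without blowing up the dimension beyond $O(s^2 n^2)$ takes some care, and it is precisely this $s^2 n^2$ dimension budget that pins the resulting exponent at $(k-2)/2$ rather than the more optimistic $k/2$ one would expect if the top gate were a single threshold. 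All other steps are routine instantiations of the machinery used in the proof of Theorem~\ref{theo:Max-IP-NEXP}.
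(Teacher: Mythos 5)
Your proposal matches the paper's proof in all essentials: the paper likewise invokes Theorem~\ref{theo:ENP-lowb}, reduces $\AND_3 \circ \SYM \circ \SYM$ satisfiability to $s^3$ instances of $\MaxIP_{2^{n/2},\,O(s^2 n^2)}$ via exactly the Hamming-weight indicator encoding plus the equality-to-maximum squaring trick you describe (Lemma~\ref{lm:SYMSYM-to-Max-IP}), derives the exponent $(k-2)/2$ from the same constraint $s^2 n^2 \le \log^k(2^{n/2})$, and handles the $\BCPone$/$\BCPtwo$ cases via the reduction of Lemma~\ref{lm:simple-lemma-1}. One small quantitative correction: after the squaring step the relevant inner products range over $[0, O(s^2n^2)]$, so the approximation gap you actually need is $\Omega(1/(s^2n^2))$ rather than $\Omega(1/s)$ --- but this is still much larger than $1/\log^k N$ for your choice of $s$, so nothing breaks.
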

\begin{rem} Note that 
	$\SYM\circ\SYM$ and $\SYMTHR$ are equivalent up to a polynomial size blow-up~\cite{HansenP10,GoldmannHR92}. See also Proposition~\ref{prop:circuit-facts-contain} (4).
\end{rem}

\subsection*{Consequence of Shaving Logs for Modest Dimension Boolean $\MaxIP$}

Our third result is that shaving logs from moderate dimension $\MaxIP$ would imply super-polynomial lower bound for $\THRTHR$.

\begin{theo}\label{theo:MaxIP-THRTHR-NEXP}
	If any of the following deterministic algorithms exists, then $\NEXP$ has no polynomial-size $\THRTHR$ circuits:
	
	\begin{enumerate}
		\item An algorithm solving $\MaxIP_{n,n^\eps}$ in $n^{2} / \log^{\omega(1)}(n)$ time, for a constant $\eps > 0$.
		
		\item An algorithm solving $\MaxIP_{n,\log^k(n)}$ in $n^{2-\eps}$ time for a constant $\eps > 0$ and any integer $k$.
	\end{enumerate}
\end{theo}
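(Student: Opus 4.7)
The plan is to invoke Williams' algorithmic framework: to establish that $\NEXP$ has no polynomial-size $\THRTHR$ circuits, it suffices to give a nondeterministic algorithm deciding satisfiability of an $N$-input $\poly(N)$-size $\THRTHR$ circuit in time $2^N/N^{\omega(1)}$. Starting from such a circuit $C$, apply the second structure lemma from the abstract to rewrite $C$ as an $\OR$ of polynomially many subexponential-size $\MAJ\circ\MAJ$ circuits (for hypothesis (1)) or as an $\OR$ of subexponentially many polynomial-size $\MAJ\circ\MAJ$ circuits (for hypothesis (2)); since this structure lemma is randomized, derandomize it nondeterministically by guessing the decomposition. It then suffices to solve SAT on each $\MAJ\circ\MAJ$ branch within the appropriate per-branch budget.

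For each $\MAJ\circ\MAJ$ branch $C'$ of size $s$, reduce SAT to $\MaxIP$ by a split-and-list construction. Partition the inputs as $(x_1, x_2) \in \{0,1\}^{N/2} \times \{0,1\}^{N/2}$; for each of the $s$ bottom $\MAJ$ gates $g_i$, write $g_i(x_1, x_2) = [a_i(x_1) + b_i(x_2) \geq t_i]$, where $a_i, b_i$ are the partial sums on the two halves and take integer values in a range of size $O(N)$ (since $\MAJ$ has $\pm 1$ weights). Form Boolean vectors $u^{x_1}, w^{x_2}$ indexed by pairs $(i, r)$ with $i \in \{1,\dots,s\}$ and $r$ ranging over the possible values of $a_i$, by setting $u^{x_1}_{i,r} = [a_i(x_1) = r]$ and $w^{x_2}_{i,r} = [b_i(x_2) \geq t_i - r]$. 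A short calculation gives $\langle u^{x_1}, w^{x_2} \rangle = \sum_i g_i(x_1, x_2)$, so $C'$ is satisfiable iff the maximum inner product over $(x_1, x_2)$ meets the top $\MAJ$'s threshold. The resulting $\MaxIP$ instance has $n = 2^{N/2}$ vectors of dimension $d = O(sN)$.

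For hypothesis (1), $s$ is subexponential so $d = 2^{o(N)} \leq n^\eps = 2^{\eps N/2}$ for large $N$; each branch takes $n^2/\log^{\omega(1)} n = 2^N/N^{\omega(1)}$ time, and summing over the polynomially many branches preserves this bound. For hypothesis (2), $s = \poly(N)$ gives $d = \poly(N) \leq \log^k n$ for some constant $k$, so each branch takes $n^{2-\eps} = 2^{(1-\eps/2)N}$ time; multiplying by the $2^{N^{1-\delta}}$ branches yields $2^{(1-\eps/2+o(1))N}$, still comfortably within Williams' budget. Plugging either bound into Williams' connection then yields the claimed $\NEXP$ lower bound.

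The main obstacle is the nondeterministic derandomization of the randomized $\MAJ\circ\MAJ$ structure lemma: we must guess a decomposition and verify it computes the same function as $C$ on all $2^N$ inputs without exceeding the nondeterministic time bound, which likely requires exploiting the structural form of the randomized construction rather than brute-force comparison. A secondary technical point is confirming that the split-and-list gadget gives $d = O(sN)$ with Boolean entries; this relies crucially on the bottom gates being $\MAJ$ (unit weights), so the partial sums take only $O(N)$ distinct values, and is why the argument must route through $\MAJ\circ\MAJ$ rather than through the weight-unbounded $\THR\circ\MAJ$ form.
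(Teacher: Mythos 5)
Your architecture is the same as the paper's: Williams' connection, Structure Lemma II (in its two parameter regimes, matching your split between hypotheses (1) and (2)), nondeterministic guessing of the random primes, and a split-and-list reduction from $\MAJ\circ\MAJ$ satisfiability to $\MaxIP$ (this is the paper's Lemma~\ref{lm:SYMSYM-to-Max-IP}). But the step you flag as "the main obstacle" is a genuine gap, and it is exactly the step the paper spends most of its proof on; without it the argument does not close.

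The resolution is as follows. First, by Remark~\ref{rem:coNTIME-works-prelim} (see Remark~\ref{rem:co-NTIME-work-too}) the framework accepts a \emph{co-nondeterministic} algorithm, i.e.\ a nondeterministic procedure certifying \emph{un}satisfiability; this is the direction where guessing a witness does not help, so verification of the guessed decomposition cannot be skipped. Second, the verification does not require comparing $C$ and $C'$ on all $2^N$ inputs. The key observation (Remark~\ref{rem:one-sided-error}) is that the prime-guessing reduction in Lemma~\ref{lm:weight-reduction-top-THR} has only \emph{one-sided} error: for any choice of primes, $C(x)=1$ implies $C'(x)=1$. One therefore runs the reduction on both $C$ and $D=\neg C$, obtaining $C'$ and $D'$, and checks that $C'\wedge D'$ is unsatisfiable. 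If some prime was bad, there is an $x$ with $C(x)=0$ but $C'(x)=1$; then $D(x)=1$ forces $D'(x)=1$, so $C'\wedge D'$ is satisfiable and the bad guess is detected. Crucially, the unsatisfiability check for $C'\wedge D'$ is itself an instance of the same problem you are already solving — an $\AND_2$ of $\DISJOR\circ\MAJ\circ\MAJ$ circuits, which after swapping the $\AND_2$ and $\DISJOR$ reduces to the same $\MaxIP$ instances — so it fits in the same time budget. The algorithm rejects if $C'\wedge D'$ is satisfiable, and otherwise accepts iff $C'$ is unsatisfiable. Two smaller points: the framework actually requires handling an $\AND_3$ of $\THRTHR$ circuits (handled by Lemma~\ref{lm:lowb-THRTHR} via the $\ETHR$ decomposition and the $\DISJOR$/$\AND$ swap); and in hypothesis (2) the top fan-in is $2^{O(\eps_1 N)}$ for a small \emph{constant} $\eps_1$ (not $2^{N^{1-\delta}}$ — taking $\eps_1$ subconstant would blow up the sub-circuit size $s^{O(1/\eps_1)}$ past polynomial and destroy the polylogarithmic dimension bound), though your final time bound is still attainable by choosing $\eps_1$ small relative to $\eps$.
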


Note that for small enough $\eps > 0$, $\MaxIP_{n,n^\eps}$ can be solved in $n^2 \polylog(n)$ time by applying the fast rectangle matrix multiplication algorithm~\cite{coppersmith1982rapid} to calculate the pair-wise inner products. Therefore, we only need to \emph{shave logs} on this naive algorithm.

\subsection*{Two Structure Lemmas for $\THRTHR$ circuits}

The major technical ingredients of our results are two structure lemmas for $\THRTHR$, of interest in its own right.

Informally, the first lemma says every $\THRTHR$ is equivalent to a polynomial OR of Threshold-of-Majority circuits and the second lemma says that every $\THRTHR$ circuit is equivalent to a ``subexponential OR'' of Majority-of-Majority circuits. For the program of proving $\THRTHR$ lower bounds, this is significant, as exponential-size Majority-of-Majority and Threshold-of-Majority lower bounds are well-known~\cite{HajnalMPST93,ForsterKLMSS01}.

In the following, $\DISJOR$ refers to a ``disjoint'' $\OR$ gate: an $\OR$ gate with the promise that at most one of its inputs is ever true, and $\GAPOR$ refers to a ``gapped'' $\OR$ gate: an $\OR$ gate with the promise that either all inputs are false or at least half of the inputs are true. (See Section~\ref{sec:circuits} for formal definitions.)

\begin{lemma}[Structure Lemma I for $\THRTHR$ circuit]\label{lm:structure-THRTHR-I}
	Let $n$ be number of inputs and $s = s(n) \ge n$ be a size parameter.
	Every $s$-size $\THRTHR$ circuit $C$ is equivalent to a $\GAPOR \circ \THR \circ \MAJ$ circuit
	such that:
	
	\begin{itemize}
		\item The top $\GAPOR$ gate has $\poly(s)$ fan-in.
		\item Each sub $\THR \circ \MAJ$ circuit has size $\poly(s)$.
	\end{itemize}
	
	Moreover, the reduction can be computed in \emph{deterministic} $\poly(s)$ time.
\end{lemma}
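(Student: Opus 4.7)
The plan is to realize $C(x) = \operatorname{sgn}(\sum_{i=1}^s w_i g_i(x) - t)$ (with $g_i(x) = \operatorname{sgn}(a_i \cdot x - b_i)$, integer weights of magnitude at most $W \leq 2^{\poly(s)}$) as a $\GAPOR$ of $\poly(s)$ many $\THR \circ \MAJ$ circuits through a Goldmann--H\aa stad--Razborov (GHR) style random sampling of the bottom gates, followed by a derandomization that exploits the one-sided $\GAPOR$ promise. For every bottom gate $g_i$, I would draw a random multiset $J_i$ of $K = \poly(s)$ literals from the weight distribution $\pi_i(j) \propto |a_{ij}|$ on $[n]$, sign-adjust, and pad with constants so as to align the implicit threshold of a plain $\MAJ$ with $b_i / \|a_i\|_1$. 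This yields a random $\MAJ$ gate $M_i^{(J_i)}$ concentrating around $g_i$; assembling them into $D_J(x) := \operatorname{sgn}(\sum_i w_i M_i^{(J_i)}(x) - t)$ gives a random $\THR \circ \MAJ$ circuit of size $\poly(s)$ that matches $C(x)$ with high probability over $J$ on each fixed $x$.

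The central step is converting this two-sided sampling error into the one-sided structure required by the $\GAPOR$ promise. For each seed $J$ and each dyadic shift $\Delta \in \{1, 2, 4, \ldots, W\}$ I form the shifted circuit $D_{J,\Delta}(x) := \operatorname{sgn}(\sum_i w_i M_i^{(J_i)}(x) - t - \Delta)$. A deterministic certificate on the worst-case sampling error of $J$ (computable from the sampled literals in $\poly(s)$ time via a uniform-convergence bound over the VC-dimension-$\poly(s)$ class of size-$\poly(s)$ $\THR \circ \MAJ$ circuits) lets me retain only those $(J, \Delta)$ for which $\Delta$ strictly dominates this error, forcing $D_{J,\Delta} \leq C$ pointwise. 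For every $x$ with $C(x) = 1$, the integer top-margin $Y(x) := \sum_i w_i g_i(x) - t \geq 1$ falls into some dyadic band $[2^k, 2^{k+1})$, and a Chernoff estimate shows that the block of retained $D_{J, 2^k}$'s accepts $x$ on at least half of its seeds. Taking the $\GAPOR$ of all admissible $(J, \Delta)$ pairs over a polynomial pool reproduces $C$ exactly, and a standard method-of-conditional-expectations walk through the sampling space produces such a pool in deterministic $\poly(s)$ time.

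The main obstacle is the three-way balance between the sampling size $K$, the shift $\Delta$, and the pool size: $K$ must be large enough for the sampling error to concentrate tightly (so many seeds are admissible), $\Delta$ small enough to preserve acceptance on unit-margin inputs, and the pool polynomial. The worst offenders are inputs $x$ at which some bottom gate $g_i$ sits exactly on its threshold, because $\MAJ$-sampling there flips $50/50$; I would absorb these via the refined observation that the \emph{weighted} contribution of borderline bottom gates is controlled by a Littlewood--Offord-style anti-concentration bound on the top linear form $\sum_i w_i g_i(x)$, allowing the admissibility certificate to safely ignore borderline $g_i$'s without inflating $\Delta$ beyond $O(1)$ on the smallest dyadic band.
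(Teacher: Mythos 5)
Your sampling step is where the argument breaks. For a bottom gate $g_i(x) = [a_i \cdot x \ge b_i]$ with weights as large as $2^{\poly(s)}$, the probability that a size-$K$ importance-sampled $\MAJ$ gate agrees with $g_i$ on a \emph{fixed} input $x$ is governed by the relative margin $|a_i\cdot x - b_i|/\|a_i\|_1$, which can be exponentially small (e.g.\ margin $1$ against $\|a_i\|_1 = 2^{\poly(s)}$) on inputs that are not at all ``borderline'' in the Boolean sense. Chernoff then gives error $\exp(-K\cdot 2^{-\poly(s)})$, i.e.\ essentially $1/2$, for any polynomial $K$. This is not a corner case you can absorb: the set of such gates on a given $x$ can carry arbitrarily large weighted contribution to the top linear form, and the top weights $w_i$ are adversarial, so no Littlewood--Offord anti-concentration statement about $\sum_i w_i g_i(x)$ controls it. Consequently the claim that $D_J(x)$ matches $C(x)$ w.h.p.\ on each fixed $x$ is false, the dyadic shifts cannot be chosen to dominate a sampling error that is unbounded, and the subsequent VC/conditional-expectations derandomization (which is itself only sketched, and would have to be uniform over all $2^n$ inputs) has nothing to derandomize. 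Note also that $\THR\circ\MAJ$ is exponentially weaker than $\THRTHR$ (Chattopadhyay--Mande), so any correct proof must place real work on the top $\GAPOR$; a per-gate approximation that were accurate on every input would contradict that separation.

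The paper obtains one-sidedness by a completely different mechanism that never approximates a near-threshold linear form. It first normalizes the top $\THR$ gate to have all weights $\le 0$, then rewrites each bottom $\THR$ gate as a $\DISJOR$ of $\ETHR$ gates (Proposition~\ref{prop:circuit-facts-contain}~(3)), absorbing the $\DISJOR$ into the top gate. For an \emph{exact} threshold gate, reducing the linear form modulo a random prime $p$ has exactly one-sided error: $L(x)=T$ implies $L(x)\equiv T \pmod p$ always, while $L(x)\ne T$ survives mod $p$ only with probability $O(1/\poly(s))$ over $p\le \poly(s)$. Since every top weight is nonpositive, spurious mod-$p$ acceptances can only decrease the top sum, so $C(x)=0$ forces $C^p(x)=0$ for \emph{every} prime, and $C(x)=1$ gives $C^p(x)=1$ for most primes; the $\GAPOR$ then ranges deterministically over all primes in the range, and Lemma~\ref{lm:mod-p-ETHR} turns each $C^p$ into a $\THR\circ\MAJ$ circuit with small positive weights. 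If you want to salvage your write-up, you should replace the sampling/shift machinery with this $\ETHR$-plus-CRT fingerprinting idea (or some other genuinely one-sided primitive); as written, the proposal does not prove the lemma.
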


\begin{lemma}[Structure Lemma II for $\THRTHR$ circuit]\label{lm:structure-THRTHR-II}
	Let $n$ be number of inputs and $s = s(n)$ be a size parameter. Let $\eps \in \left(\frac{\log s}{n},1 \right)$. For $s = 2^{o(n)}$, every $s$-size $\THRTHR$ circuit $C$ is equivalent to a $\DISJOR \circ \MAJ \circ \MAJ$ circuit  such that:
	
	\begin{itemize}
		\item The top $\DISJOR$ gate has $2^{ O(\eps n)} \cdot \poly(s)$ fan-in.
		\item Each sub $\MAJ \circ \MAJ$ circuit has size $s^{O(1/\eps)} \cdot \poly(s)$.
	\end{itemize}
	
	Moreover, the reduction can be computed in \emph{randomized} $2^{O(\eps n)} \cdot s^{O(1/\eps)} \cdot \poly(s)$ time. 
\end{lemma}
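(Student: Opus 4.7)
My plan is to start from Lemma~\ref{lm:structure-THRTHR-I}, which expresses $C$ as a $\GAPOR$ of $\poly(s)$-many $\poly(s)$-size $\THR \circ \MAJ$ subcircuits, and then randomly simulate each such $\THR \circ \MAJ$ by a $\DISJOR \circ \MAJ \circ \MAJ$ of the prescribed shape. Starting from Lemma~\ref{lm:structure-THRTHR-I} is convenient because its bottom gates are already $\MAJ$ (rather than $\THR$), so the sampling step below will produce a depth-two majority circuit at the bottom for free. The outer $\GAPOR$ will be folded into the top $\DISJOR$ by attaching a branch-identifier tag to each $\MAJ \circ \MAJ$ leaf, contributing only a $\poly(s)$ factor in the top fan-in.

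The key technical step is to write a $\THR \circ \MAJ$ gate $f(x) = \mathrm{sign}\bigl(\sum_{i=1}^m w_i y_i(x) - \theta\bigr)$, with $y_i$ polynomial-size $\MAJ$s and $W = \sum_i |w_i| \le 2^{\poly(s)}$, as a disjoint OR of $\MAJ \circ \MAJ$s. Via the standard multiset view, $f$ equals the $\MAJ$ over a multiset where $y_i$ appears $|w_i|$ times with polarity $\mathrm{sign}(w_i)$. I subsample $T = s^{O(1/\epsilon)}$ multiset elements independently to form a $\MAJ \circ \MAJ$ estimator $\widehat{f}_r$ of size $T \cdot \poly(s) = s^{O(1/\epsilon)}$, indexed by a random seed $r$. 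By a Chernoff bound, for any fixed $x$ whose weighted sum is at least $\epsilon W$ from $\theta$, $\widehat{f}_r(x) = f(x)$ except with probability $2^{-\Omega(\epsilon n)}$. Enumerating over $2^{O(\epsilon n)}$ random seeds together with $2^{O(\epsilon n)}$ shifted thresholds at an $\epsilon W / 2^{O(\epsilon n)}$ grid (to handle inputs near $\theta$) derandomizes the construction into $2^{O(\epsilon n)}$ candidate $\MAJ \circ \MAJ$ estimators. To enforce disjointness of the top OR, I attach each candidate to a bucket predicate $B_{r,\theta'}(x)$ that assigns each $x$ to a unique $(r,\theta')$, ideally by hashing the rounded value of $\widehat{f}_{r,\theta'}$ itself into buckets so that $B_{r,\theta'}$ reuses the same $\MAJ \circ \MAJ$ gate and adds no depth; correctness on the surviving bucket then follows from the Chernoff guarantee plus a union bound over the $2^{O(\epsilon n)}$ cells.

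The main obstacle I foresee is exactly this bucketing step: asserting $\sum_i w_i y_i(x) \in [\theta', \theta'+1)$ is naively a conjunction of two $\THR$ comparisons, which after a Goldmann-H{\aa}stad-Razborov $\THR$-to-$\MAJ \circ \MAJ$ conversion would push the bottom to depth three and blow past the $s^{O(1/\epsilon)}$ size budget. The fix is to make the bucketing implicit in $\widehat{f}_{r,\theta'}$ itself, so that the random seed and the grid index together single out a region of input space without introducing any new threshold comparison on $x$; this is precisely where the randomness of the reduction becomes essential, since a random seed and shift jointly resolve correctness inside the bucket with probability $\ge 1 - 2^{-\Omega(\epsilon n)}$, which is affordable after the union bound. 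If this technical step goes through, the full randomized construction runs in $2^{O(\epsilon n)} \cdot s^{O(1/\epsilon)} \cdot \poly(s)$ time and yields the claimed $\DISJOR \circ \MAJ \circ \MAJ$ with top fan-in $2^{O(\epsilon n)} \cdot \poly(s)$ and subcircuit size $s^{O(1/\epsilon)} \cdot \poly(s)$.
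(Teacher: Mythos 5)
There are two genuine gaps here, and the second one is fatal to the whole sampling strategy. First, the $\GAPOR$ from Lemma~\ref{lm:structure-THRTHR-I} cannot be ``folded into the top $\DISJOR$ by attaching a branch-identifier tag'': a $\GAPOR$ promises that on an accepting input \emph{at least half} of its children fire, so after your conversion many $\MAJ\circ\MAJ$ leaves would simultaneously evaluate to $1$, which violates disjointness no matter how the leaves are labelled. Making the OR disjoint would require each branch to also verify that all earlier branches reject, which is not a $\THR\circ\MAJ$ computation. Second, and more fundamentally, the Chernoff-sampling estimator cannot \emph{exactly} simulate a top $\THR$ gate with arbitrary weights. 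The weights can be as large as $2^{O(m\log m)}$ for fan-in $m=\poly(s)$, so there are inputs whose weighted sum differs from $\theta$ by only $1$, i.e.\ relative margin $1/W$. A sample of $T=s^{O(1/\eps)}$ multiset elements resolves the sum only to absolute accuracy about $W/\sqrt{T}$, and your grid of shifted thresholds has spacing $\eps W/2^{O(\eps n)}$, which is still astronomically coarser than the integer granularity of the sum. Hence for every choice of seed and shift there remain inputs that are misclassified; no union bound repairs this, because the failure is not probabilistic but a margin barrier. The ``implicit bucketing'' you propose to rescue disjointness inherits the same problem: deciding which bucket $x$ falls into is itself a threshold comparison on the huge weighted sum.

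The paper takes a different, exact route that sidesteps approximation entirely. It first rewrites $C$ as $\THR\circ\ETHR$ and applies a weight reduction to the \emph{top} gate only: the top gate's weights are reduced modulo a random prime $m \le \poly(d)\cdot 2^{n}$, which with high probability preserves the equality test on all $2^n$ inputs (this is the only randomized step, and it has one-sided error, which is what later enables the nondeterministic derandomization in Theorem~\ref{theo:MaxIP-THRTHR-NEXP}). Once the top $\ETHR$ gate has weights below $2^{2n}$, it is decomposed \emph{exactly}: the binary representations of the weights are split into $D = \lceil \eps n/\log d\rceil$ blocks of $B \le (2/\eps)\log d$ bits each, all $d^{D-1} = 2^{O(\eps n)}$ carry sequences of the base-$2^{B}$ addition are enumerated, and for each carry sequence the conjunction of the $D$ block equalities is verified by a sum-of-squares $\MAJ\circ\AND_2$ gadget of size $d^{O(1/\eps)}$. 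Disjointness is automatic because every input has exactly one valid carry sequence. If you want to pursue your own route, you would need an exact (not approximate) weight-reduction mechanism in place of sampling; as written, the argument does not establish the lemma.
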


We discuss some immediate applications of the structure lemmas.

\paragraph*{Equivalence of Non-trivial $\SAT$ Algorithms.} It is well-known that $\THRTHR$ circuits can be simulated with depth-$3$ polynomial $\MAJ\circ\MAJ\circ\MAJ$ circuits~\cite{GoldmannHR92}; however, replacing the output $\MAJ$ gate with an extremely simple $\GAPOR$ or $\DISJOR$ gate has extra benefits. For example, any faster SAT algorithm for $\THR \circ \MAJ$ or $\MAJ \circ \MAJ$ circuits can be used to obtain a SAT algorithm for $\THRTHR$ easily! Formally, the following two corollaries follow from Lemma~\ref{lm:structure-THRTHR-I} and Lemma~\ref{lm:structure-THRTHR-II} directly. 

\begin{cor}\label{cor:equivalent-THRTHR-THRMAJ}
	The following are equivalent:
	\begin{itemize}
		\item The satisfiability of $\THRTHR$ circuits of size $n^k$ can be solved in $2^{n} / n^k$ time for any $k$.
		\item The satisfiability of $\THR\circ\MAJ$ circuits of size $n^k$ can be solved in $2^{n} / n^k$ time for any $k$.
	\end{itemize}
\end{cor}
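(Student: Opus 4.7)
The forward implication is essentially definitional: every $\MAJ$ gate is a special case of a $\THR$ gate, so a $\THR\circ\MAJ$ circuit of size $n^k$ is already a $\THRTHR$ circuit of size $n^k$, and a SAT algorithm running in $2^n/n^k$ time for the latter class trivially solves the former. So the plan focuses on the reverse direction.

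Suppose for every $k'$ we have a deterministic algorithm $A_{k'}$ solving $\THR\circ\MAJ$ SAT on circuits of size $n^{k'}$ in time $2^n/n^{k'}$. Given a $\THRTHR$ circuit $C$ of size $n^k$ on $n$ inputs, first I would apply Lemma~\ref{lm:structure-THRTHR-I} to produce, in deterministic $\poly(n^k)$ time, an equivalent $\GAPOR\circ\THR\circ\MAJ$ circuit $C'$. Let $c$ be the constant such that the $\GAPOR$ has fan-in at most $n^{ck}$ and each sub-$\THR\circ\MAJ$ subcircuit has size at most $n^{ck}$. Since $\GAPOR$ is a promise-$\OR$ on identical inputs, $C$ is satisfiable if and only if at least one of the $n^{ck}$ sub-$\THR\circ\MAJ$ circuits is satisfiable. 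So the task reduces to running a $\THR\circ\MAJ$ SAT check on each subcircuit.

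Now set $k'=(c+1)k$ and invoke the algorithm $A_{k'}$ on each subcircuit (each has size at most $n^{ck}\le n^{k'}$ on $n$ variables, so $A_{k'}$ applies). The cost per subcircuit is $2^n/n^{(c+1)k}$, and with $n^{ck}$ subcircuits the total satisfiability check costs
\[
n^{ck}\cdot\frac{2^n}{n^{(c+1)k}}=\frac{2^n}{n^{k}},
\]
dominating the $\poly(n^k)$ reduction time. Since $k$ was arbitrary, this yields a $\THRTHR$ SAT algorithm meeting the desired bound $2^n/n^k$ for every $k$.

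There is no real obstacle beyond keeping the constants straight between ``size $n^k$'' and the polynomial blow-up $n^{ck}$ promised by Lemma~\ref{lm:structure-THRTHR-I}; the argument is a routine application of the structure lemma, and the only conceptual point to highlight is that the top gate produced by Lemma~\ref{lm:structure-THRTHR-I} is an $\OR$-type gate, so satisfiability of the whole circuit reduces to satisfiability of at least one subcircuit (this is exactly why the structure lemma is more useful than the classical depth-$3$ $\MAJ\circ\MAJ\circ\MAJ$ simulation, where a top $\MAJ$ would not let us decompose SAT by individual subcircuits).
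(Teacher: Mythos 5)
Your proposal is correct and matches the paper's proof: both directions are handled the same way, with the reverse direction applying Lemma~\ref{lm:structure-THRTHR-I} to get a $\GAPOR\circ\THR\circ\MAJ$ circuit of size $n^{ck}$ and then running the assumed $\THR\circ\MAJ$ SAT algorithm on the subcircuits. Your write-up is just more explicit about the parameter bookkeeping and about why an $\OR$-type top gate lets SAT decompose over the subcircuits.
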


\begin{cor}\label{cor:equivalent-THRTHR-MAJMAJ}
	The following are equivalent:
	
	\begin{itemize}
		\item There is a $2^{(1-\Omega(1)) \cdot n}$ time algorithm for the satisfiability of polynomial size $\THRTHR$ circuits.
		\item There is a $2^{(1-\Omega(1)) \cdot n}$ time algorithm for the satisfiability of polynomial size $\MAJ\circ\MAJ$ circuits.
	\end{itemize}
\end{cor}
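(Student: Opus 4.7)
The forward direction is immediate: every $\MAJ\circ\MAJ$ circuit is in particular a $\THRTHR$ circuit (a $\MAJ$ gate is a special $\THR$ gate), so any $2^{(1-\delta)n}$-time satisfiability algorithm for polynomial-size $\THRTHR$ circuits already decides SAT for polynomial-size $\MAJ\circ\MAJ$ circuits in the same running time. The substance of the corollary is the reverse direction, which will follow from Lemma~\ref{lm:structure-THRTHR-II} by choosing the parameter $\eps$ small enough to absorb the blow-up introduced by the reduction into the ``$\Omega(1)$'' savings of the assumed $\MAJ\circ\MAJ$-SAT algorithm.

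Concretely, suppose we have a $2^{(1-\delta) n}$-time algorithm $\mathcal{A}$ for the satisfiability of polynomial-size $\MAJ\circ\MAJ$ circuits, for some constant $\delta > 0$. Given a $\THRTHR$ instance $C$ of size $s = n^{k}$ for an arbitrary constant $k$, I would pick a constant $\eps = \eps(\delta) > 0$ small enough (to be tuned below) and invoke Lemma~\ref{lm:structure-THRTHR-II} with this $\eps$. The lemma produces, in time $2^{O(\eps n)} \cdot s^{O(1/\eps)} \cdot \poly(s)$, a $\DISJOR$ of $N := 2^{O(\eps n)} \cdot \poly(s)$ many $\MAJ\circ\MAJ$ subcircuits, each of size $M := s^{O(1/\eps)} \cdot \poly(s)$. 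Since $\eps$ and $k$ are constants, $M = n^{O_{k,\eps}(1)}$ is still polynomial in $n$, so $\mathcal{A}$ applies to each subcircuit in time $2^{(1-\delta) n} \cdot \poly(n)$. The $\THRTHR$ instance is satisfiable iff at least one of the $\MAJ\circ\MAJ$ subcircuits is satisfiable (the top gate is an $\OR$), so we simply loop over all subcircuits and call $\mathcal{A}$ on each.

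The total running time is then bounded by the reduction plus the loop, namely
\[
2^{O(\eps n)} \cdot s^{O(1/\eps)} \cdot \poly(s) \;+\; N \cdot 2^{(1-\delta) n} \cdot \poly(n) \;\le\; 2^{O(\eps) n + (1-\delta) n} \cdot \poly(n).
\]
Choosing $\eps$ so that the implicit constant in $O(\eps)$ is at most $\delta/2$ yields $2^{(1-\delta/2) n} \cdot \poly(n) \le 2^{(1-\Omega(1)) n}$, as desired. The correctness of the reduction is exact (the lemma produces an equivalent circuit), so the resulting SAT algorithm is correct with the same error guarantee as the randomized reduction; since SAT is a decision problem, this randomized algorithm can be taken as the ``$2^{(1-\Omega(1)) n}$-time'' algorithm in the corollary statement.

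The main thing to watch is the interplay of the three exponential factors: the fan-in $2^{O(\eps n)}$ of the top $\DISJOR$, the subcircuit size blow-up $s^{O(1/\eps)}$, and the running time $2^{(1-\delta)n}$ of $\mathcal{A}$. Because $s$ is polynomial, the $s^{O(1/\eps)}$ factor remains polynomial in $n$ for any fixed $\eps > 0$, so only the $2^{O(\eps n)}$ fan-in competes with the $2^{-\delta n}$ savings. This is why choosing $\eps$ as a sufficiently small constant (depending only on $\delta$) suffices; if one instead tried to leverage the lemma with sub-constant $\eps$, the $s^{O(1/\eps)}$ subcircuit blow-up would render $\mathcal{A}$ inapplicable. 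No other step should present difficulty; in particular, whether both directions are read as deterministic or both as randomized, the argument is symmetric modulo the randomness of the reduction in Lemma~\ref{lm:structure-THRTHR-II}.
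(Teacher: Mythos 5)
Your proposal is correct and follows essentially the same route as the paper: the paper proves the more general Corollary~\ref{cor:LT-HLT-equiv} for $\LT_d$ versus $\HLT_d$ by setting $\eps = \eps_1/2c$ (where $c$ is the hidden constant in the top fan-in of Lemma~\ref{lm:structure-THRTHR-II}) and looping the assumed algorithm over the polynomially-sized sub-circuits, then observes that Corollary~\ref{cor:equivalent-THRTHR-MAJMAJ} is the $d=2$ special case. Your parameter bookkeeping — noting that only the $2^{O(\eps n)}$ top fan-in competes with the $2^{-\delta n}$ savings while $s^{O(1/\eps)}$ stays polynomial — is exactly the intended argument.
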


We remark that the first corollary preserves any non-trivial speed up ($2^n / n^{\omega(1)}$ time algorithms), while the second is coarser, which is due to the sub-exponential blowup (when $\eps$ is an arbitrarily small constant) in Lemma~\ref{lm:structure-THRTHR-II}. 

\paragraph*{Generalization to Threshold Circuits of Constant Depth.} Also, Lemma~\ref{lm:structure-THRTHR-II} easily generalizes to threshold circuits of any constant depth. In the following $\LT_d$ denotes threshold circuits of depth-$d$, while $\HLT_d$ denotes depth-$d$ majority circuits (see Section~\ref{sec:circuits} for formal definitions).

\begin{cor}\label{cor:structure-LT}
	Let $n$ be number of inputs and $s = s(n)$ be a size parameter. Let $\eps \in \left(\frac{\log s}{n},1 \right)$ and $d$ be a constant. For $s = 2^{o(n)}$, every $s$-size $\LT_d$ circuit is equivalent to a $\DISJOR \circ \HLT_d$ circuit such that:
	\begin{itemize}
		\item The top $\DISJOR$ gate has $2^{ O(\eps \cdot n) }$ fan-in.
		\item Each sub $\HLT_d$ circuit has size $O\left( s^{O(1/\eps)} \right)$.
	\end{itemize}
\end{cor}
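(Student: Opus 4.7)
The plan is to extend the randomized transformation behind Lemma~\ref{lm:structure-THRTHR-II} from depth-$2$ to arbitrary constant depth $d$. Invoking Lemma~\ref{lm:structure-THRTHR-II} as a black box would only convert the top two $\THR$ layers, which is insufficient; I instead want to argue that a single shared random seed of length $O(\eps n)$ suffices to simultaneously convert \emph{every} $\THR$ gate in the circuit into an equivalent $\MAJ$ gate of fan-in $s^{O(1/\eps)}$, regardless of the layer at which the gate sits.

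My first step is therefore to open the proof of Lemma~\ref{lm:structure-THRTHR-II} and extract a per-gate conversion statement of the above form. Once in place, the rest is bookkeeping: for each of the $2^{O(\eps n)}$ seeds, the $\LT_d$ circuit collapses simultaneously into an $\HLT_d$ circuit whose total size is at most (number of gates) $\cdot\ s^{O(1/\eps)} = s^{O(1/\eps)}$, absorbing the $\le s$ factor into the exponent using $\eps \le 1$. Taking a $\DISJOR$ over the $2^{O(\eps n)}$ seeds, with each branch guarded by a ``this is the lexicographically smallest successful seed'' predicate to enforce disjointness, yields the target $\DISJOR \circ \HLT_d$ structure with the claimed parameters.

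The main obstacle is the first step: it is not \emph{a priori} obvious that the reduction of Lemma~\ref{lm:structure-THRTHR-II} is ``local'' at each $\THR$ gate, since the published statement only concerns a depth-$2$ pair. If the underlying proof exploits the specific interaction between the top and bottom $\THR$ layers, I would fall back on an induction on $d$: write the $\LT_d$ as a top $\THR$ over $\LT_{d-1}$ sub-circuits, apply the inductive hypothesis to each sub-circuit, and collapse the resulting inner $\DISJOR$s into the top $\THR$ via the arithmetic identity $\psi_i = \sum_j \phi_{i,j}$ (valid because the inner $\OR$ is disjoint) followed by the absorption $\MAJ \circ \DISJOR \equiv \MAJ$. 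The delicate part of this fall-back is parameter control: the inner $\DISJOR$ inflates the top $\THR$'s fan-in to $s \cdot 2^{O(\eps n)}$, so the tradeoff parameter used at the top-level invocation must be chosen so that $d = O(1)$ applications telescope into the stated $2^{O(\eps n)}$ outer fan-in and $s^{O(1/\eps)}$ sub-circuit size rather than compounding multiplicatively.
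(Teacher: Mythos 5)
There is a genuine gap, and it starts with your first step: the per-gate conversion you want to extract from Lemma~\ref{lm:structure-THRTHR-II} is false. A single $\THR$ gate with large weights is \emph{not} equivalent to a single $\MAJ$ gate of fan-in $s^{O(1/\eps)}$ (if it were, $\THR \subseteq \MAJ$ up to subexponential blowup, which is wrong); the $\DISJOR$ of fan-in $2^{O(\eps n)}$ in Lemma~\ref{lm:structure-THRTHR-II} does not come from enumerating random seeds but from enumerating carry sequences in Lemma~\ref{lm:decompose-top-ETHR} (plus the $\DISJOR\circ\ETHR$ decompositions), and it is an unavoidable part of the output of the per-gate transformation. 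Applying that transformation to every gate would therefore plant a $2^{O(\eps n)}$-fan-in $\DISJOR$ at every internal node, and these cannot be shared or pulled to the top for free. Relatedly, your ``lexicographically smallest successful seed'' guard is not meaningful here: whether a seed is successful is independent of the input $x$, so the guard is a constant per branch, and in any case the corollary only asserts \emph{existence} of an equivalent circuit, so randomness in the construction is irrelevant to the statement. Your fall-back induction correctly identifies the real obstruction but does not overcome it: after collapsing the inner $\DISJOR$s into the top $\THR$ via the summation trick, each level maps a sub-circuit size $S$ to roughly $S^{O(1/\eps)}$, so $d$ levels give $s^{(1/\eps)^{\Theta(d)}}$, not $s^{O(1/\eps)}$, and no choice of per-level parameters obviously telescopes to both the stated outer fan-in $2^{O(\eps n)}$ and the stated sub-circuit size.

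The paper's proof sidesteps all of this with one observation you are missing: only the \emph{top} $\THR$ gate ever needs the expensive carry-enumeration treatment. It applies Lemma~\ref{lm:structure-THRTHR-II} once, to the top two layers only, producing $\DISJOR \circ \MAJ \circ \MAJ \circ \LT_{d-2}$; every remaining $\THR$ layer sits below a $\MAJ$ gate, and $\MAJ \circ \THR \subseteq \MAJ \circ \MAJ$ (Proposition~\ref{prop:circuit-facts-contain}~(5)) with only a \emph{polynomial}, deterministic blowup. Applying that containment layer by layer, $d = O(1)$ times, converts the rest of the circuit to $\HLT_d$ while keeping the total size at $\poly\bigl(s^{O(1/\eps)}\bigr) = s^{O(1/\eps)}$ and leaving the single top $\DISJOR$ untouched.
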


\paragraph*{Structure Lemma for Polynomial Threshold Functions.} Our ideas can also be used to derive a structure lemma for polynomial threshold functions of degree $k$, i.e., $\THR \circ \AND_{k}$ circuits:

\begin{cor}\label{cor:structure-THR-AND}
	Let $n$ be number of inputs and $s = s(n)$ be a size parameter. Let $\eps \in \left(\frac{\log s}{n},1 \right)$ and $k$ be a constant. Assuming $s = 2^{o(n)}$, an $s$-size $\THR \circ \AND_k$ circuit is equivalent to a $\DISJOR \circ \MAJ \circ \AND_{2k}$ circuit such that:
	
	\begin{itemize}
		\item The top $\DISJOR$ gate has $2^{ O(\eps \cdot n) }$ fan-in.
		\item Each sub $\MAJ \circ \AND_{2k}$ circuit has size $O\left( s^{O(1/\eps)} \right)$.
	\end{itemize}
	
	The above still holds if we replaced both $\AND_{k}$ and $\AND_{2k}$ by unbounded fan-in $\AND$ gates.
\end{cor}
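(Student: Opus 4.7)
The plan is to adapt the proof of Lemma~\ref{lm:structure-THRTHR-II} to the case where the bottom gates are $\AND_k$ rather than $\MAJ$. A $\THR \circ \AND_k$ circuit can be written as $C(x) = \mathrm{sgn}\bigl(\sum_i w_i A_i(x) - t\bigr)$ where each $A_i$ is a conjunction of at most $k$ literals. The two ingredients of the proof of Lemma~\ref{lm:structure-THRTHR-II}, namely (i) a partitioning step that splits the computation into $2^{O(\epsilon n)}$ disjoint cases, and (ii) a simulation step that replaces the top $\THR$ gate by a $\MAJ$ gate acting on products of the bottom gates, should both transfer essentially unchanged.

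For step (i), I would reuse the same partition/enumeration device as in Lemma~\ref{lm:structure-THRTHR-II}: one fixes an appropriate projection of size $O(\epsilon n)$ onto the input coordinates, enumerates its $2^{O(\epsilon n)}$ assignments, and produces one sub-circuit per assignment. Since distinct assignments correspond to disjoint slices of $\{0,1\}^n$, the sub-circuits combine through a $\DISJOR$ of fan-in $2^{O(\epsilon n)}$, exactly as in Lemma~\ref{lm:structure-THRTHR-II}.

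For step (ii), the top-level $\THR$-to-$\MAJ$ simulation used in Lemma~\ref{lm:structure-THRTHR-II} replaces the top threshold by a $\MAJ$ over \emph{products of pairs} of bottom gates. When the bottom gates are $\MAJ$, the product of two $\MAJ$'s is itself computed by a $\MAJ$, yielding the $\MAJ \circ \MAJ$ structure in the conclusion. When the bottom gates are instead $\AND_k$, the product of two $\AND_k$ terms is literally an $\AND$ of at most $2k$ literals, which is what produces the $\MAJ \circ \AND_{2k}$ bottom layer claimed here. The size bound $s^{O(1/\epsilon)}$ on each sub-circuit depends only on the combinatorics of the top-level simulation (essentially, on how many pair-products are used to approximate the top $\THR$), and so transfers verbatim. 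The unbounded-fan-in variant is immediate from the same argument, since it only uses that a product of two $\AND$'s is an $\AND$, and never uses the literal-count bound on the bottom gates.

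The one point that requires genuine verification is that the $\THR$-to-$\MAJ$ simulation in Lemma~\ref{lm:structure-THRTHR-II} really does use only products of \emph{pairs} of bottom gates, so that the bottom width doubles exactly to $2k$. If higher-order products of $c$ bottom gates were required, the conclusion would weaken to $\MAJ \circ \AND_{ck}$. Identifying where in the proof of Lemma~\ref{lm:structure-THRTHR-II} the pairwise structure is used, and checking that nothing forces higher-order products once the bottom MAJ's are replaced by AND's, is where most of the care of the argument should go; all remaining parameters follow mechanically.
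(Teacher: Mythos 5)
Your step (ii) is essentially right and matches the paper: the $\MAJ\circ\AND_2$ structure in Lemma~\ref{lm:decompose-top-ETHR} comes from expanding a sum of \emph{squares} $\sum_j \bigl(\sum_i w_{i,j} G_i(x) - T^c_j\bigr)^2$, so only pairwise products of bottom gates ever appear, and merging $\AND_2 \circ \AND_k$ into $\AND_{2k}$ gives exactly the claimed bottom layer (the paper's one-line proof is precisely ``apply Lemma~\ref{lm:weight-reduction-top-THR} and Lemma~\ref{lm:decompose-top-ETHR} and merge $\AND_2\circ\AND_k$ into $\AND_{2k}$''). The point you flagged as needing verification does check out in the affirmative.

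However, your step (i) misdescribes the source of the $\DISJOR$ and, as stated, would not work. The $2^{O(\eps n)}$ disjoint cases do \emph{not} come from restricting $O(\eps n)$ input coordinates and enumerating their assignments: restricting inputs leaves each sub-circuit as a $\THR\circ\AND_k$ circuit whose top gate still has exponentially large weights, so nothing has been gained toward simulating it by a small $\MAJ$. The actual mechanism is arithmetic, not combinatorial on the inputs. First the top $\THR$ is converted to a $\DISJOR\circ\ETHR$ (Proposition~\ref{prop:circuit-facts-contain}~(3)) and the weights are reduced below $2^{2n}$ by working modulo a random prime (Lemma~\ref{lm:weight-reduction-top-THR}) --- a randomized step your proposal omits entirely, but one that is necessary since the top weights of an $s$-size threshold gate can be as large as $2^{\poly(s)}$. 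Then each weight is split into $D = \lceil \eps n/\log s\rceil$ blocks of $O(\eps^{-1}\log s)$ bits, and the $\DISJOR$ enumerates all $s^{D-1} = 2^{O(\eps n)}$ \emph{carry sequences} of the base-$2^B$ addition; disjointness holds because each evaluation realizes exactly one carry sequence. Only after the carries are fixed does each block reduce to an equation over numbers of size $s^{O(1/\eps)}$, which is what makes the $\MAJ$ fan-in $s^{O(1/\eps)}$ rather than exponential. Without this weight-handling machinery your argument has no way to pass from an arbitrary-weight $\THR$ to a polynomial-in-$s^{1/\eps}$-size $\MAJ$, so the gap is genuine.
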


That is, every polynomial threshold function of degree $k$ with arbitrary weights can be simulated by a \emph{subexponential-size} disjoint OR of polynomial threshold functions of degree $2k$ with small weights.

The following corollary follows from that the SAT problem for $\THR \circ \AND_{k}$ circuits is equivalent to the \emph{weighted} $\MAXkSAT[k]$ problem (given a $\textsf{CNF}$ formula $\varphi$ with weights on each clause, find an assignment satisfying clauses of maximum total weight), and that SAT for $\MAJ \circ \AND_{2k}$ is equivalent to the (unweighted) $\MAXkSAT[2k]$ problem. 

\begin{cor}\label{cor:weight-unweight-MAXSAT-equiv}
	For any integer $k$, if there is a $2^{(1-\Omega(1)) n}$ time algorithm for polynomial size unweighted $\MAXkSAT[2k]$, then so does polynomial size weighted $\MAXkSAT$.\footnote{We assume the weights are at most $2^{\poly(n)}$ for making the input polynomial size.}
\end{cor}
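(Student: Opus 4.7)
The plan is to chain together the two stated equivalences with Corollary~\ref{cor:structure-THR-AND}. First I would make the dictionary between max-$k$-SAT instances and depth-two circuits precise. Given a weighted $k$-CNF with clauses $C_1,\ldots,C_m$, weights $w_1,\ldots,w_m$ and target $W$, I rewrite each clause as $C_i(x) = 1 - \AND_i^{*}(x)$ where $\AND_i^{*}$ is the $\AND$ of the negated literals of $C_i$. Then the decision ``$\sum_i w_i C_i(x) \ge W$'' becomes the query $\sum_i w_i \AND_i^{*}(x) \le (\sum_i w_i) - W$, which is exactly the satisfiability of a $\THR \circ \AND_k$ circuit; conversely every $\THR \circ \AND_k$-SAT instance is of this form after splitting weights by sign. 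The same bookkeeping, with the top $\THR$ replaced by $\MAJ$ (all weights equal to $1$), gives the equivalence between $\MAJ \circ \AND_{2k}$-SAT and unweighted $\MAXkSAT[2k]$.

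Next I would invoke Corollary~\ref{cor:structure-THR-AND} with $\epsilon$ a small constant to be chosen later: the polynomial-size $\THR \circ \AND_k$ circuit $C$ obtained from the weighted instance is equivalent to a $\DISJOR$ of $N := 2^{O(\epsilon n)}$ sub-circuits $D_1,\ldots,D_N$, each a $\MAJ \circ \AND_{2k}$ circuit of size $s^{O(1/\epsilon)} = \poly(n)$. Since the top gate is a (disjoint) $\OR$, we have $C \in \SAT$ iff some $D_j \in \SAT$. By the dictionary above, each $D_j \in \SAT$ query is an instance of unweighted $\MAXkSAT[2k]$ on $n$ variables and polynomially many clauses, which by hypothesis is solved in $2^{(1-\delta)n}$ time for some fixed $\delta > 0$. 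Thus weighted $\MAXkSAT$ reduces to $N$ such queries, costing
\[
2^{O(\epsilon n)} \cdot 2^{(1-\delta)n} = 2^{(1 - \delta + O(\epsilon))\,n}.
\]
Choosing $\epsilon$ small enough that $O(\epsilon) < \delta/2$ gives the desired $2^{(1-\Omega(1))n}$ bound.

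The only delicate step is that Corollary~\ref{cor:structure-THR-AND} (inherited from Lemma~\ref{lm:structure-THRTHR-II}) produces the $\DISJOR \circ \MAJ \circ \AND_{2k}$ decomposition only via a randomized reduction running in time $2^{O(\epsilon n)} \cdot \poly(n)$. I would point out that this cost fits inside the $N \cdot 2^{(1-\delta)n}$ budget above, so the resulting $\MAXkSAT$ algorithm is randomized with one-sided error; this matches the usual (randomized) convention for this line of work. The main obstacle, should a fully deterministic conclusion be required, is producing the list $D_1,\ldots,D_N$ deterministically; however, a derandomization that suffices here can be obtained in the same manner as used elsewhere in the paper (brute-force enumeration of the random seeds is affordable because the seed length is $O(\epsilon n)$, which is absorbed into the $2^{O(\epsilon n)}$ factor).
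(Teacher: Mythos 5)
Your overall route---translate the weighted instance into $\THR\circ\AND_k$ satisfiability, apply Corollary~\ref{cor:structure-THR-AND} with a small constant $\eps$, and answer each of the $2^{O(\eps n)}$ resulting $\MAJ\circ\AND_{2k}$ satisfiability queries with the assumed unweighted $\MAXkSAT[2k]$ algorithm---is the paper's route, and your cost accounting is correct. But your ``dictionary'' breaks at the final translation. Writing a clause as $C_i(x)=1-\AND_i^{*}(x)$ is fine going \emph{into} the $\THR\circ\AND_k$ circuit, because the top $\THR$ gate absorbs the resulting negative coefficients. Coming back \emph{out}, the top gate of each sub-circuit $D_j$ is a $\MAJ$ gate whose weights are all $+1$: applying De Morgan to its bottom $\AND_{2k}$ gates turns the query $\sum_j A_j(x)\ge T$ into $\sum_j O_j(x)\le m-T$, a \emph{minimization} query over a $2k$-CNF, which an unweighted $\MAXkSAT[2k]$ algorithm does not answer. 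The paper sidesteps this with Lemma~\ref{lm:AND-to-OR}, which expresses an $\AND$ of $k$ literals as a sum of $k$ $\OR$s with \emph{positive unit} coefficients minus a constant; substituting that into the $\MAJ$ gate keeps the inequality pointing the right way and produces a genuine unweighted $\MAXkSAT[2k]$ instance (with only a factor-$2k$ blowup in the number of clauses). This is the one missing idea; with it your argument goes through.

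A secondary point: your parenthetical derandomization does not work as stated. The randomness in Lemma~\ref{lm:weight-reduction-top-THR} is a random prime of magnitude $\poly(d)\cdot 2^{n}$ chosen for each $\ETHR$ gate, so the ``seed'' is $\Theta(n)$ bits per gate and $\poly(s)\cdot n$ bits in total, not $O(\eps n)$ bits; moreover, even given a candidate seed there is no obvious efficient test that it is good (the paper has to resort to nondeterministic guessing plus an unsatisfiability check of $C'\wedge D'$ for exactly this purpose in the proof of Theorem~\ref{theo:MaxIP-THRTHR-NEXP}). Fortunately the corollary does not assert a deterministic algorithm, so the intended conclusion is a randomized (one-sided-error, by Remark~\ref{rem:one-sided-error}) algorithm and you can simply drop the derandomization remark.
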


\newcommand{\IP}{\textsf{IP}}
\newcommand{\RP}{\textsf{RP}}
\newcommand{\UP}{\textsf{UP}}

\paragraph*{An Application in Communication Complexity.} Finally, Structure Lemma I also has an application in communication complexity.

The connection between threshold circuits and communication lower bounds dates back to \cite{nisan1993communication}, which showed $\MAJ \circ \THR$ circuits admit efficient $\PP^\cc$ protocols, therefore the $\Omega(n)$ $\PPcc$ lower bound for IP2 (Inner Product)~\cite{CG85PPLowb} implies that IP2 requires $2^{\Omega(n)}$-size $\MAJ \circ \THR$ circuits. Later, \cite{ForsterKLMSS01} showed that $\THR \circ \MAJ$ circuits have efficient $\UPPcc$ protocols, hence the $2^{\Omega(n)}$ $\THR\circ\MAJ$ circuit size lower bound for IP2 can be derived from the corresponding $\UPP^\cc$ lower bound for it~\cite{Forster02Signrank}. 

Naturally, one may seek similar connections for $\THR \circ \THR$ circuits. In a recent work,~\cite{AW17Rigidity} showed that $\THR \circ \THR$ admits efficient $\BP \cdot \UPP^{\cc}$ protocols. However, it seems quite hard to prove $\BP \cdot \UPPcc$ lower bounds, as it contains $\AMcc$, which itself has been notoriously hard to prove any non-trivial lower bound on~\cite{GPW18CCLandscape}.

In this work, building on our Structure Lemma I for $\THRTHR$ circuits. We show that $\THRTHR$ circuits admit efficient $\RP \cdot \UPPcc$ protocols, which could be potentially easier to prove a lower bound on.

\begin{theo}\label{theo:new-protocols-THRTHR}
	For a function $F : \{0,1\}^{n} \times \{0,1\}^n \to \{0,1\}$, suppose it admits a $\THRTHR$ circuit of size $s$, then it also admits a $\RP \cdot \UPP^\cc$ protocol of cost $O(\log s)$.
\end{theo}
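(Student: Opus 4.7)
The plan is to combine Structure Lemma~\ref{lm:structure-THRTHR-I} with the classical embedding of $\THR\circ\MAJ$ circuits into $\UPPcc$, and use the top $\GAPOR$ to supply the outer $\RP$ layer. The overall cost will be dominated by a single sub-protocol invocation, which is $O(\log s)$.

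First, I would apply Lemma~\ref{lm:structure-THRTHR-I} to the given size-$s$ $\THRTHR$ circuit computing $F$ to obtain an equivalent $\GAPOR\circ\THR\circ\MAJ$ circuit whose top $\GAPOR$ has fan-in $N=\poly(s)$ and whose sub-$\THR\circ\MAJ$ sub-circuits $C_1,\dots,C_N$ each have size $\poly(s)$. Second, I would invoke the Forster et al.~\cite{ForsterKLMSS01} result (the sign-rank upper bound for $\THR\circ\MAJ$ of size $m$) to conclude that each $C_i$ admits a $\UPPcc$ protocol $P_i$ of cost $O(\log s)$.

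The $\RP\cdot\UPPcc$ protocol is then direct: Alice and Bob use $\lceil\log_2 N\rceil=O(\log s)$ bits of public randomness to sample $i\in[N]$ uniformly, and then run $P_i$ on $(x,y)$ using private randomness. Correctness follows from the $\GAPOR$ promise applied in the two cases. If $F(x,y)=0$, then $C_i(x,y)=0$ for every $i$, so $P_i$ rejects (as a $\UPPcc$ protocol, i.e., its acceptance probability over private coins is at most $1/2$) for every choice of the public coin, which satisfies the one-sided $\RP$ requirement. If $F(x,y)=1$, then at least $N/2$ of the $C_i(x,y)$ equal $1$, so with probability at least $1/2$ over the public coin $i$, the chosen $P_i$ is an accepting $\UPPcc$ protocol on $(x,y)$.

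I do not anticipate any serious obstacle here; the main thing to verify is that the definition of $\RP\cdot\UPPcc$ being used (public-coin outer $\RP$ with one-sided error, composed with an inner private-coin $\UPPcc$ sub-protocol) lines up correctly with the gap provided by $\GAPOR$ and the standard $\UPPcc$ acceptance convention. Summing the $O(\log N)$ public bits and the $O(\log s)$ internal communication of $P_i$, the total cost is $O(\log s)$, as claimed.
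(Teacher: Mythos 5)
Your proposal is correct and matches the paper's proof essentially verbatim: apply Structure Lemma~\ref{lm:structure-THRTHR-I}, use public coins to pick a random $\THR\circ\MAJ$ sub-circuit, and run its $O(\log s)$-cost $\UPPcc$ protocol (Lemma~\ref{lm:THR-MAJ-UPP}), with the $\GAPOR$ promise supplying the one-sided $\RP$ guarantee. Your explicit case analysis of the acceptance conditions is a welcome addition the paper leaves implicit.
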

\begin{rem}
	Therefore, a $\log^{\omega(1)} n$ lower bound on $\RP \cdot \UPPcc$ complexity for a function $F$, implies that $F$ requires $n^{\omega(1)}$-size $\THRTHR$ circuits.
\end{rem}

See Appendix~\ref{app:cc} for details and a formal definition of $\RP \cdot \UPPcc$ protocols.

\subsection*{$\MAXSAT$ and $\SYMAND$ Circuit Lower Bounds}

Being the canonical $\NP$-hard optimization problem, a huge amount of research effort has been devoted to finding faster-than-$2^n$ algorithms for $\MAXSAT$~\cite{BansalR99,Binkele-RaibleF10,BliznetsG12,ChenK04,DantsinW06,GaspersS12,GaspersS17,GolovnevK14,GrammHNR03,GrammN00,Hirsch00,Hirsch03,KneisMRR05,KojevnikovK06,Kulikov05,KulikovK07,MahajanR99,NiedermeierR00,SakaiST15,ScottS03,Williams05}. 

In turns of getting non-trivial speed up, in~\cite{SakaiSTT16}, a $2^{n - n^{1/O(t)}}$ time algorithm for $\MAXSAT$ of $m = n^t$ clauses is proposed, which doesn't give any improvement when $m = n^{\Omega(\log n)}$. 

This state of affairs is certainly unsatisfactory, as $\KSAT$ and $\textsf{CNF-SAT}$ are known to have much better algorithms: $\KSAT$ is solvable in $2^{(1 - 1/O(k))n}$ time~\cite{PaturiPSZ05}, while $\textsf{CNF-SAT}$ admits a $2^{(1 - \Omega(1/\log(m/n))n}$ time algorithm~\cite{CalabroIP06,DantsinH09}, which gives a non-trivial speedup even for sub-exponential $m$. 

Our next result give some evidence why progress on $\MAXSAT$ has been limited. We show that a very modest improvement over the best known $\MAXSAT$ algorithm would imply super-quasi-polynomial circuit lower bounds for $\SYMAND$ circuits.

\begin{theo}\label{theo:MAX-SAT-NEXP}
	If there is an algorithm for $\MAXSAT$ solving an instance with $2^{\log^k n}$ clauses in $2^{n} / 2^{\log^k n}$ time for every integer $k$. Then $\NEXP$ has no quasi-polynomial size $\SYMAND$ circuits.
\end{theo}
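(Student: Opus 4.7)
The plan is to follow Williams' algorithmic approach: derive from the assumed faster \MAXSAT\ algorithm a non-trivial ($2^n / n^{\omega(1)}$) satisfiability algorithm for quasi-polynomial size $\SYMAND$ circuits, and then apply the Murray--Williams lower-bound connection (at the quasi-polynomial scale) to conclude that $\NEXP$ has no quasi-polynomial size $\SYMAND$ circuits.

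First, I would reduce satisfiability of quasi-polynomial size $\SYMAND$ circuits to \MAXSAT. Given a $\SYMAND$ circuit $C$ of size $s = 2^{\log^k n}$ on $n$ inputs, write $C(x) = f\bigl(\sum_{i=1}^{s} g_i(x)\bigr)$ where each $g_i$ is an \AND\ of literals and $f : \{0,1,\ldots,s\} \to \{0,1\}$ is the symmetric top gate with accepting set $S := f^{-1}(1)$. Dualize each \AND-gate into the clause $D_i := \neg g_i$ (an \OR\ of literals, hence a valid \MAXSAT\ clause); then $\sum_i g_i(x) = s - \sum_i D_i(x)$, so the satisfiability of $C$ is equivalent to: does there exist $x \in \{0,1\}^n$ and $t \in S$ with $\sum_i D_i(x) = s - t$? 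Note the \MAXSAT\ instance has $s = 2^{\log^k n}$ clauses on $n$ variables.

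Second, I would convert each ``exactly $v$'' query on the count $\sum_i D_i(x)$ into a small number of \MAXSAT\ calls via standard gadgets that write an equality as a conjunction of two thresholds (or, if we interpret \MAXSAT\ as weighted, a single weighted call with weights $\poly(s)$ that encode the target $v$ and the accepting set $S$). Either way, deciding satisfiability of $C$ costs at most $\poly(s)$ invocations of \MAXSAT\ on instances with $\poly(s) = 2^{O(\log^k n)}$ clauses. Applying the assumed algorithm to each, the total running time is $\poly(s) \cdot 2^n / 2^{\log^k n} = 2^n / 2^{\log^k n - O(\log \log s)}$, which for every integer $k$ is $2^n / n^{\omega(1)}$.

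Third, I would invoke the Murray--Williams connection: a satisfiability algorithm running in time $2^n / n^{\omega(1)}$ for quasi-polynomial size circuits of a class $\mathcal{C}$ implies $\NEXP \not\subseteq $ quasi-polynomial $\mathcal{C}$. Instantiating $\mathcal{C} = \SYMAND$ yields the theorem. The main obstacle is the gadgeteering in step one: reducing ``exactly $v$ of the clauses are satisfied'' to \MAXSAT\ queries in a deterministic way that does not blow up the clause count beyond $2^{O(\log^k n)}$ nor demand stronger primitives (such as exact counting) than what is hypothesized. As long as either \MAXSAT\ is taken to be the weighted version (standard in the fine-grained literature) or we are content to make $O(s)$ unweighted calls, this step goes through and feeds cleanly into Williams' framework.
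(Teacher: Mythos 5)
Your overall strategy is the same as the paper's (reduce $\SYMAND$-SAT to $\MAXSAT$ calls, then invoke the algorithms-to-lower-bounds connection at the quasi-polynomial scale), but the step you yourself flag as ``the main obstacle'' is genuinely where the proof lives, and neither of the two gadgets you float for it works as stated. A ``conjunction of two thresholds'' cannot be answered by a single $\MAXSAT$ query: knowing $\max_x \sum_i D_i(x)$ tells you whether some $x$ achieves $\ge v$, but not whether some $x$ achieves exactly $v$, and you cannot intersect an upper and a lower threshold with one max oracle. Likewise, a single weighted call cannot encode an arbitrary accepting set $S$ of a symmetric gate, since $S$ need not be an interval. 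The correct gadget is the one the paper uses: $\SYM_s \subseteq \DISJOR_s \circ \EMAJ$ (Proposition~\ref{prop:circuit-facts-contain}~(1)), then $\EMAJ \subseteq \MAJ \circ \AND_2$ (item (10), i.e.\ the squaring trick $(\sum_i g_i(x) - t)^2 \le 0$), applied \emph{before} dualizing, so that the $\AND_2$ merges with the bottom $\AND$-of-literals gates; only then does Lemma~\ref{lm:AND-to-OR} convert the resulting $\AND$ gates into sums of $\OR$ clauses, yielding an $\OR \circ \MAJ \circ \OR$ circuit whose $\MAJ\circ\OR$ branches are exactly (unweighted, threshold-version) $\MAXSAT$ instances. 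If you dualize first, as you propose, the cross terms $D_iD_j$ in the expansion of $(\sum_i D_i - v)^2$ are conjunctions of clauses, not clauses, and you still need the same AND-to-OR conversion to repair them. You also never address that the connection requires solving SAT for an $\AND$ of three circuits from the class; for $\SYMAND$ this is benign via $\AND_3 \circ \SYM \subseteq \SYM$ (Proposition~\ref{prop:circuit-facts-contain}~(9)), but it should be said.

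The running-time accounting is also wrong in two ways. First, $\poly(s)$ invocations multiply the time by $s^{O(1)} = 2^{\Theta(\log^k n)}$, not by $2^{O(\log\log s)}$, so $\poly(s)\cdot 2^n/2^{\log^k n}$ could exceed $2^n$; the fix (which the paper uses) is to apply the hypothesis at a larger exponent $k'$, since the resulting instances have at most $2^{\log^{k'} n}$ clauses, giving total time $2^{O(\log^k n)} \cdot 2^{n+O(\log n)}/2^{\log^{k'} n} \le 2^{n - \log^{k'-1} n}$. Second, the connection you cite is not the right one: a $2^n/n^{\omega(1)}$-time SAT algorithm for quasi-polynomial-size circuits is not known to yield quasi-polynomial-size lower bounds. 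Theorem~\ref{theo:NEXP-lowb-quasi-poly} requires time $O(2^{n - \log^k n})$ for size $2^{\log^k n}$ — savings comparable to the circuit size — which is exactly what the corrected accounting delivers. Both gaps are repairable, but as written the proposal does not close them.
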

\begin{cor}
	An algorithm for $\MAXSAT$ with $m$ clauses in 
	\[
	2^{n \cdot \left(1 - 1/2^{ (\log m)^{o(1)} } \right)} \text{ time}
	\] 
	implies that $\NEXP$ has no quasi-polynomial size $\SYMAND$ circuits.
\end{cor}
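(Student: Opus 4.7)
The plan is to derive the corollary directly from Theorem~\ref{theo:MAX-SAT-NEXP} by a straightforward substitution argument: I will check that a single algorithm of the hypothesized running time in the corollary already fulfills, for every integer $k$, the per-$k$ requirement imposed by Theorem~\ref{theo:MAX-SAT-NEXP}.

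Let $\eta(m)$ denote the underlying $o(1)$ function in the corollary, so that the hypothesized running time can be written as
\[
2^{n \cdot (1 - 1/2^{(\log m)^{\eta(m)}})} \;=\; 2^{n} \cdot 2^{-n / 2^{(\log m)^{\eta(m)}}},
\]
with $\eta(m) \to 0$ as $m \to \infty$. To apply Theorem~\ref{theo:MAX-SAT-NEXP}, fix an arbitrary integer $k \ge 1$ and specialize to instances with $m = 2^{\log^k n}$ clauses; here $\log m = \log^k n$ and hence $(\log m)^{\eta(m)} = \log^{k \cdot \eta(m)}(n)$. Since $m \to \infty$ as $n \to \infty$, for all sufficiently large $n$ we have $k \cdot \eta(m) \le 1/2$, giving $(\log m)^{\eta(m)} \le \log^{1/2}(n)$.

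Consequently $2^{(\log m)^{\eta(m)}} \le 2^{\sqrt{\log n}} \ll \log^{k+1} n$ for large $n$, so
\[
\frac{n}{2^{(\log m)^{\eta(m)}}} \;\ge\; \frac{n}{\log^{k+1} n} \;\ge\; \log^{k} n,
\]
and the running time of the algorithm on inputs with $m = 2^{\log^k n}$ clauses is at most $2^{n} / 2^{\log^k n}$. As this holds for every integer $k$, the hypothesis of Theorem~\ref{theo:MAX-SAT-NEXP} is satisfied, and the conclusion that $\NEXP$ has no quasi-polynomial size $\SYMAND$ circuits follows.

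There is no real obstacle here beyond bookkeeping: the only subtle point is to verify that the qualification ``for every integer $k$'' in Theorem~\ref{theo:MAX-SAT-NEXP} is met by a \emph{single} algorithm whose speedup factor $2^{-n/2^{(\log m)^{o(1)}}}$ decays slower than any $2^{-\log^k n}$ once $m$ is polylogarithmic of growing degree in $n$; the calculation above makes this precise.
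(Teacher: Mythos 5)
Your overall approach is exactly the right one, and it is the same as the paper's (the corollary is stated there without proof, as an immediate instantiation of Theorem~\ref{theo:MAX-SAT-NEXP} at $m = 2^{\log^k n}$ for each fixed $k$). However, one intermediate inequality in your chain is false as written: you claim $2^{\sqrt{\log n}} \ll \log^{k+1} n$ for large $n$, but in fact the reverse holds --- $2^{\sqrt{\log n}} = 2^{\sqrt{\log n}}$ dominates $\log^{k+1} n = 2^{(k+1)\log\log n}$ because $\sqrt{\log n} \gg (k+1)\log\log n$. So the step $\frac{n}{2^{(\log m)^{\eta(m)}}} \ge \frac{n}{\log^{k+1} n}$ is not justified by what precedes it.

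The conclusion you need still holds, just via a different comparison: from $(\log m)^{\eta(m)} \le \sqrt{\log n}$ you get $2^{(\log m)^{\eta(m)}} \le 2^{\sqrt{\log n}} = n^{1/\sqrt{\log n}} = n^{o(1)}$, hence
\[
\frac{n}{2^{(\log m)^{\eta(m)}}} \;\ge\; n^{1 - o(1)} \;\ge\; \log^k n
\]
for all sufficiently large $n$ (equivalently, $2^{\sqrt{\log n}} \cdot \log^k n = 2^{\sqrt{\log n} + k\log\log n} \le 2^{2\sqrt{\log n}} \ll n$). With that one line repaired, the substitution argument is complete and correctly verifies the hypothesis of Theorem~\ref{theo:MAX-SAT-NEXP} for every integer $k$ from the single assumed algorithm.
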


Moreover, if the running time for $\MAXSAT$ can be improved to the same as the best-known algorithms for $\textsf{CNT-SAT}$, then we would have a much stronger circuit lower bound.

\begin{theo}\label{theo:MAX-SAT-ENP}
	If there is a $2^{n - \Omega(n/\log m)}$ time algorithm for $\MAXSAT$ with $m$ clauses, then $\ENP$ has no $2^{o(\sqrt{n})}$-size $\SYMAND$ circuit.
\end{theo}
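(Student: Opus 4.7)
The plan is to mirror the proof of Theorem~\ref{theo:MAX-SAT-NEXP} but with sharper parameter settings that yield an $\ENP$ lower bound at size $s(n) = 2^{o(\sqrt{n})}$. The pipeline is: reduce $\SYMAND$-SAT to $\MAXSAT$; apply the hypothesized algorithm; invoke the Williams-style algorithmic method in its $\ENP$ form.

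\medskip

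\noindent\textbf{Step 1 (Reduce $\SYMAND$-SAT to $\MAXSAT$).} Let $C$ be a $\SYMAND$ circuit of size $s = 2^{o(\sqrt{n})}$ on $n$ variables, computing $f\!\left(\sum_{i=1}^{s} t_i(x)\right)$ for some symmetric $f$ and conjunctions $t_i$. I would reduce $\SYMAND$-SAT on $C$ to $\poly(s)$ $\MAXSAT$ instances, each with $\poly(s)$ clauses on $\Theta(n)$ variables. The accept set of $f$ can be written as a Boolean combination of $O(s)$ threshold predicates, so it suffices to answer, for each threshold $v$, whether some $x$ achieves $\sum_i t_i(x) \ge v$; equivalently, whether the CNF $\bigwedge_i \neg t_i$ admits an assignment satisfying at most $s-v$ clauses. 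This ``min-SAT''-style query can be answered by $\MAXSAT$ through standard manipulations---e.g., applying threshold-structure machinery in the spirit of Corollary~\ref{cor:weight-unweight-MAXSAT-equiv} to flip between conjunctions and disjunctions---while keeping the variable count at $\Theta(n)$ and the clause count at $\poly(s)$. Preserving the variable count is what enables the $\MAXSAT$ savings factor to translate into genuine savings over $2^n$.

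\medskip

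\noindent\textbf{Step 2 (Apply the algorithm).} Plugging $m = \poly(s) = 2^{o(\sqrt{n})}$ into the hypothesized $2^{n - \Omega(n/\log m)}$-time $\MAXSAT$ algorithm, each instance runs in time
\[
2^{n - \Omega(n/o(\sqrt{n}))} \;=\; 2^{n - \omega(\sqrt{n})}.
\]
Running all $\poly(s) = 2^{o(\sqrt{n})}$ instances gives a total $\SYMAND$-SAT time of $2^{n - \omega(\sqrt{n})}$.

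\medskip

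\noindent\textbf{Step 3 (Williams connection for $\ENP$).} Invoke the $\ENP$ form of Williams' algorithmic method (as used in~\cite{Wil14ACC,Tamaki16,AlmanCW16}): a deterministic $\mathcal{C}$-SAT algorithm running in time $2^n / s(n)^{\omega(1)}$ on circuits of size $s(n)$ implies $\ENP \not\subseteq \mathcal{C}[s(n)]$. For $\mathcal{C} = \SYMAND$ and $s(n) = 2^{o(\sqrt{n})}$, our savings $2^{\omega(\sqrt{n})}$ dominates $s(n)^{\omega(1)} = 2^{o(\sqrt{n})}$, so the hypothesis is satisfied and we conclude $\ENP \not\subseteq \SYMAND[2^{o(\sqrt{n})}]$, as claimed.

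\medskip

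\noindent The main obstacle is Step~1: ensuring the $\SYMAND$-SAT to $\MAXSAT$ reduction does not inflate the variable count beyond $O(n)$. A naive encoding that introduces one auxiliary variable per AND gate would push the count to $n + s$, and since $s$ can be as large as $2^{\sqrt{n}/\log\log n}$, this would completely wash out the savings. Handling this correctly requires that each counting query against $\SYMAND$ be realized as a $\MAXSAT$ instance whose variable set is precisely the original $x$, likely through the structural equivalences between $\SYMAND$, $\MAJAND$, and unweighted/weighted $\MAXkSAT$ that underlie Corollary~\ref{cor:weight-unweight-MAXSAT-equiv}.
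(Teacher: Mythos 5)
Your pipeline (reduce $\SYMAND$\textsf{-SAT} to few $\MAXSAT$ instances on the original variables, apply the hypothesized algorithm, invoke Theorem~\ref{theo:ENP-lowb}) is the same as the paper's, and your Steps 2 and 3 are fine (the required savings is only $2^n/n^{\omega(1)}$, which $2^{n-\omega(\sqrt n)}$ comfortably supplies; the $\AND_3$ from Theorem~\ref{theo:ENP-lowb} is absorbed via $\AND_3\circ\SYM\subseteq\SYM$, Proposition~\ref{prop:circuit-facts-contain}~(9)). The gap is in Step 1, and it is not the variable-count issue you flag but the decomposition of the symmetric gate itself. Writing the accept set of $f$ as a Boolean combination of threshold predicates and then answering, \emph{for each threshold $v$ separately}, whether some $x$ achieves $\sum_i t_i(x)\ge v$ is logically unsound: the existential quantifier over $x$ does not commute with the Boolean combination, so the witnesses for different thresholds need not coincide. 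All those queries jointly reveal is $\max_x \sum_i t_i(x)$, and knowing the maximum achievable count does not determine whether the set of achievable counts intersects an arbitrary accept set $S\subseteq\{0,\dots,s\}$ (e.g.\ $S=\{3\}$ with achievable counts $\{0,5\}$). Threshold/$\MAXSAT$ queries only suffice when $S$ is upward closed.

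What you need instead is an \emph{exact}-count decomposition followed by a trick that turns exactness back into a single maximization. The paper does this in Lemma~\ref{lm:reduction}: first $\SYM_s\subseteq\DISJOR_s\circ\EMAJ_s$ (Proposition~\ref{prop:circuit-facts-contain}~(1)), reducing to $s$ queries of the form ``is there an $x$ with $\sum_i t_i(x)=T$''; then $\EMAJ\subseteq\MAJ\circ\AND_2$ (item~(10), i.e.\ checking $(\sum_i t_i(x)-T)^2\le 0$), which converts each exact query into a single threshold-of-conjunctions on the original inputs with $\poly(s)$ terms; merging $\AND_2\circ\AND$ into $\AND$ and applying Lemma~\ref{lm:AND-to-OR} then yields a $\MAJ\circ\OR$ circuit, i.e.\ a genuine $\MAXSAT$ instance with $m=\poly(s)$ clauses on the same $n+O(\log n)$ variables. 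Your instinct about preserving the variable count is correct and is indeed satisfied by this route, but without the $\DISJOR\circ\EMAJ$ plus squaring step your reduction does not compute the symmetric gate.
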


\paragraph*{$\SYMAND$ Circuits Lower Bounds.}
The best non-trivial lower bound for $\SYMAND$ is an $n^{\Omega(\log n)}$ size lower bound for a function in $\ACC^0$~\cite{RazborovW93}. Historically, quasi-polynomial size $\SYMAND$ circuits are studied mainly because it contains the circuit class $\ACC^0$ by depth-reduction~\cite{Yao90,BeigelT94,AllenderG94}, and it is connected to Number-On-Forehead communication protocols~\cite{HastadG91}. So it was considered as a viable approach to prove $\ACC^0$ circuit lower bounds. However, even Williams' breakthrough work on $\ACC^0$ makes crucial use of the depth-reduction result $\ACC^0 \subseteq \SYMAND$, it is not clear how to prove super quasi-polynomial lower bound on $\SYMAND$ itself via his approach, which is unsatisfying.

This question is also interesting as exponential lower bounds for $\MAJ \circ \AND$ are trivial to obtain (in fact, even exponential lower bound for $\THR \circ \MAJ$ are known~\cite{ForsterKLMSS01}). It would be good to understand why switching the top gate to a $\SYM$ gate makes the problem much harder.

\subsection*{$\KSAT$ and the $\log \log n$-Depth Barrier for $\TC$ Circuits}

Finally, we show a modestly improved algorithm for $\KSAT$ (recall the state-of-the-art running time is $2^{n \cdot (1 - 1/O(k))}$) would imply lower bounds for $O(\log\log n)$-depth $\TC$ circuits. This is based on the reduction from $\textsf{TC-SAT}$ to $\KSAT$ in a recent work by Abboud et al.~\cite{abboud2017more}.

\begin{theo}\label{lm:KSAT-to-TC}
	A $2^{n \cdot (1 - k^{1 / \omega(\log \log k)} )}$ time algorithm for $\KSAT$ implies that for any constant $c > 1$, $\ENP$ has no $cn$-wire depth-$(c\log\log n)$ $\TC$ circuit.	
\end{theo}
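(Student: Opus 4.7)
The plan is to combine two ingredients: (i) Williams' algorithms-to-lower-bounds framework in its ``$\ENP$ vs.\ linear-wire'' form, as used in~\cite{Tamaki16,AlmanCW16}; and (ii) the reduction of~\cite{abboud2017more} from low-depth $\TC$ circuit satisfiability to $\KSAT$. Via (i), it suffices to exhibit a deterministic satisfiability algorithm for $cn$-wire depth-$(c\log\log n)$ $\TC$ circuits on $n$ variables running in time $2^n / n^{\omega(1)}$; any such algorithm yields the claimed $\ENP$ lower bound.

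To produce such an algorithm, I would feed the given $\TC$ circuit into the~\cite{abboud2017more} reduction, which converts a depth-$d$ $\TC$ circuit with $W$ wires on $n$ inputs into an equivalent $\KSAT$ instance on $n' = n + o(n)$ variables whose clause width $k$ depends only mildly on $d$ and on $W/n$. For $d = c\log\log n$ and $W = cn$, this yields $k$ bounded by roughly $(\log n)^{O(c)}$, so that $\log k = \Theta(\log\log n)$ and $\log\log k = \Theta(\log\log\log n)$. Applying the hypothesized $\KSAT$ algorithm then gives overall running time $2^{n'\left(1 - 1/k^{1/\omega(\log\log k)}\right)}$.

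Straightforward parameter-chasing finishes the deduction. Writing $k^{1/\omega(\log\log k)} = 2^{\log k / \omega(\log\log k)}$ and using $\log k = \Theta(\log\log n)$ against an $\omega(\log\log\log n)$ denominator, we get $k^{1/\omega(\log\log k)} = (\log n)^{o(1)}$, so the running time is at most $2^{n - \Omega(n/(\log n)^{o(1)})}$, comfortably beating the $2^n / n^{\omega(1)}$ threshold Williams' framework demands. Since $n' = n + o(n)$, the savings survives the translation back to the original $\TC$-SAT problem, and the universal quantification over $c$ is absorbed by running the argument separately for each constant.

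The main obstacle is verifying the precise quantitative form of the~\cite{abboud2017more} reduction: in particular, that the variable blowup is $o(n)$ and that the clause width $k$ really is $\polylog(n)$ under our depth bound $d = c\log\log n$ with linear wire count. A secondary subtlety is that if the~\cite{abboud2017more} reduction is randomized, a nondeterministic derandomization step (standard in this framework and already used to handle the randomized Structure Lemma II above) would be needed in order to feed a \emph{deterministic} $\textsf{TC-SAT}$ algorithm into Williams' connection.
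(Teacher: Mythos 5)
Your overall strategy (Williams' $\ENP$ connection plus the reduction of Lemma 4.8 of~\cite{abboud2017more}) is exactly the paper's, but the ``straightforward parameter-chasing'' hides a genuine gap, and it is precisely the one you flag as ``the main obstacle'': the reduction does \emph{not} simultaneously give variable blowup $o(n)$ and clause width $k = \polylog(n)$. Lemma 4.8 produces a formula on $(1+\eps)n$ variables with $k \le (2000(c/\eps)\log(2c/\eps))^{d}+1$, so to get $k = (\log n)^{O(c)}$ at depth $d = c\log\log n$ you must take $\eps = \Omega(1)$ --- but then the $\KSAT$ algorithm runs on $(1+\eps)n$ variables, and since its savings $\delta = 1/k^{1/\omega(\log\log k)}$ tends to $0$ as $k$ grows, the total time $2^{(1+\eps)(1-\delta)n}$ exceeds $2^{n}$ and the argument collapses. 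Conversely, taking $\eps = o(1)$ to make the blowup negligible forces $k = (\Omega(1/\eps)\log(1/\eps))^{c\log\log n} = (\log n)^{\omega(1)}$, so your claims that $\log k = \Theta(\log\log n)$ and $\log\log k = \Theta(\log\log\log n)$ are both false in that regime.

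The paper resolves this tension by the specific sub-constant choice $\eps^{-1} = 2^{\log^{t} n}$ for a small constant $t$. Then $k$ is quasi-polynomial, $\log k = \Theta(\log^{t} n \cdot \log\log n)$, but crucially $\log\log k = \Theta(\log\log n)$ still holds, so the hypothesized algorithm guarantees savings $1/k^{1/\omega(\log\log k)} = (\eps/c)^{o(1)} = 2^{-o(\log^{t} n)}$, which strictly dominates the cost $\eps = 2^{-\log^{t} n}$ incurred by the variable blowup; the total running time is then $2^{(1+\eps)n(1-(\eps/c)^{o(1)})} = 2^{n}/n^{\omega(1)}$ as required. This balancing of $\eps$ against the guaranteed savings is the actual content of the proof and is absent from your write-up. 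Two smaller points: the reduction of~\cite{abboud2017more} is a deterministic many-one reduction, so no nondeterministic derandomization step is needed; and to invoke Theorem~\ref{theo:ENP-lowb} one should start from circuits with roughly $cn/3$ wires and depth $c\log\log n - 1$, so that the conjunction of three of them is again a linear-wire, depth-$(c\log\log n)$ $\TC$ circuit to which the reduction is applied.
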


\paragraph*{$\log\log n$-depth Barrier for $\TC$.}

In~\cite{ImpagliazzoPS97}, it is shown that parity requires $n^{1 + c^{-d}}$-wires for depth-$d$ $\TC$ circuits, which becomes linear when $d = \Omega(\log\log n)$. No non-trivial super-linear wires lower bounds are known when the depth is $\Omega(\log\log n)$. It is consistent with the current state of knowledge that $\ENP$ could be contained in linear-size $O(\log\log n)$-depth $\TC$ circuits.

\subsection{Related Works}

\paragraph*{Constant-Depth Threshold Circuit Lower Bounds.} For more history on previous works on lower bounds for constant-depth threshold circuits, see the corresponding sections in~\cite{Williams14THR,KaneW16}. We only discuss a few recent results here.

In 2014, Williams~\cite{Williams14THR} showed that $\NEXP$ is not contained in $\ACC_0 \circ \THR$, by devising a fast satisfiability algorithm for it. The lower bound was recently improved by Murray and Williams~\cite{MurrayW17} to that $\NQP$ is not contained in $\ACC_0 \circ \THR$. Tamaki~\cite{Tamaki16}, Alman, Chan and Williams~\cite{AlmanCW16} proved that $\ENP$ is not contained in $n^{2 - o(1)}$ size $\THRTHR$ circuits (the results in~\cite{AlmanCW16} is stronger, it in fact showed lower bound against $\ACC_0 \circ \THRTHR$ circuits, with at most $n^{2-\eps}$ bottom $\THR$ gates). Most recently, Williams~\cite{WilliamsLinearComb} showed that there are functions in $\NQP$ can not be represented by a linear combination of polynomially many $\ACC\circ\THR$ circuits.

Tell~\cite{Tell17} constructed a \emph{quantified derandomization} algorithm for $\TC$ circuits with depth $d$ and $n^{1 + \exp(-d)}$ wires, and showed that a modest improvement of his algorithm would imply standard derandomization of $\TC^0$, and consequently $\NEXP \not\subseteq \TC^0$.

Using random restriction, Kane and Williams~\cite{KaneW16} proved that any $\THRTHR$ circuits computing Andreev's function requires $\widetilde{\Omega}(n^{1.5})$ gates and $\widetilde{\Omega}(n^{2.5})$ wires. Chattopadhyay and Mande~\cite{ChattopadhyayM17a} showed an exponential size separation between $\THR\circ\MAJ$ and $\THR\circ\THR$, by constructing a function in $\THR\circ\THR$ with exponential \emph{sign-rank}.

\paragraph*{Shaving Logs Implies Circuit Lower Bound.} Abboud et al.~\cite{AHVW16} showed that shaving logs on some well-studied sequence alignment problems like Edit-Distance and Longest Common Subsequence would imply strong circuit lower bounds. In particular, they proved that an $n^2 / \log^{\omega(1)} n$ time algorithm for either of them would imply a $2^{n} / n^{\omega(1)}$ time algorithm for poly-size $\textsf{Formula-SAT}$, from which it follows $\NEXP$ is not contained in $\NC^1$. 
Their reduction is later tightened by Abboud and Bringmann~\cite{abboud2018tighter}, which showed that an $n^2 / \log^{7+\eps} n$ time algorithm for either of them is already enough to imply new algorithm for $\textsf{Formula-SAT}$. Chen et al.~\cite{Chen2018FineGrained} showed that shaving a $2^{(\log\log N)^3}$ factor from the naive algorithm for (constant factor) approximate $\textsf{Closest-LCS-Pair}$\footnote{Given two sets $A,B$ of strings, compute $\max_{(a,b) \in A \times B} \textsf{LCS}(a,b)$.} and many other problems also implies $\NEXP$ is not contained in $\NC^1$.

\section{Preliminaries}


We begin with some notations for operations on vectors. For two vectors $u, v \in \{0,1\}^*$, we use $u \circ v$ to denote their concatenation. For a vector $u$ and an integer $t$, we use $u^{\otimes t}$ to denote the vector obtained by repeating $u$ $t$ times.

\subsection{Circuits Classes}\label{sec:circuits}

Since we discuss many circuits class in this work, we begin with some notations for those classes.

\paragraph*{Notations for Circuit Classes.} Let $x \in \{0,1\}^{n}$ be a Boolean input. For $w \in \mathbb{R}^{n}$ and $t \in \mathbb{R}$, we define $\THR_{w,t}(x)$ (the threshold function) be the indicator function that whether $w \cdot x \ge t$. Similarly, we define $\ETHR_{w,t}(x)$ (the exact threshold function) be the indicator function that whether $w \cdot x = t$. The vector $w$ and the real $t$ are called the weights and the threshold of the given function $\THR_{w,t}$ ($\ETHR_{w,t}$). We say these weights and thresholds are \emph{realizations} of the Boolean functions they defined. Note that a function may have different realizations. One may assume without loss of generality that the weights and thresholds are integers of absolute value at most $2^{O(n \log n)}$~\cite{muroga1961theory,babai2010weights}. For a threshold or exact threshold function with weight $w$, we call the linear function $L := w \cdot x$ its associated linear function.


We use $\MAJ_n$ and $\EMAJ_n$ to denote the corresponding threshold (exact threshold) functions when all weights are $1$. Slightly abusing notations, we use $\THR,\ETHR,\MAJ,\EMAJ$ to also denote the corresponding classes of functions. We also consider $\AND_n$ and $\OR_n$, with their usual meanings. We use $\DISJOR_n$ to denote the disjoint OR function, that is, an $\OR$ function with the promise that at most one input bit could be true. 

We use $\GAPOR_n$ to denote the gap OR function, that is, an $\OR$ function with the promise that either all inputs are false or at least half of inputs are true. We also use $\SYM$ to denote the class of all symmetric functions. For a $\SYM$ function $C$, we have $C(x) := f\left(\sum\nolimits_{i=1}^{n} x_i\right)$, and we call $f$ as its associated function.

For a class of function like $\THR$, we use $\THR_k$ to denote its sub-class with at most $k$ inputs. For two classes of functions like $\THR$ and $\SYM$, we use $\THR \circ \SYM$ to denote the corresponding class of depth-$2$ circuits. Similar notations are used for more than $2$ classes.

We use $\LT_d$ to denote the depth-$d$ $\THR$ circuit class, that is, $\LT_d := \underbrace{\THR \circ \dotsc \circ \THR}_{\text{$d$ times}}$. Similarly, we use $\HLT_d$ to denote its unweighted version, that is, $\HLT_d := \underbrace{\MAJ \circ \dotsc \circ \MAJ}_{\text{$d$ times}}$.

When we refer to a circuit class without specifying its size, we always assume the size is polynomial. 

\paragraph*{Previous Known Containment Results.} We need the following standard circuit classes containment results for this paper.

\begin{prop}\label{prop:circuit-facts-contain}
	The following holds:
	
	\begin{enumerate}
		\item $\SYM_k \subseteq \DISJOR_k \circ \EMAJ$.
		\item $\THR \subseteq \MAJ \circ \MAJ$~\cite{GoldmannHR92,Hofmeister96}.
		\item $\THR \subseteq \DISJOR \circ \ETHR$~\cite{HansenP10}.		
		\item $\SYM \circ \THR$ and $ \SYM \circ \ETHR $ are contained in $\SYM \circ \MAJ$~\cite{GoldmannHR92,HansenP10}.
		\item $\MAJ \circ \THR$ and $ \MAJ \circ \ETHR $ are contained in $\MAJ \circ \MAJ$~\cite{GoldmannHR92,HansenP10}.
		\item $\SYM \circ \SYM \subseteq \SYM \circ \MAJ$.
		\item $\ETHR \circ \ETHR \subseteq \THR \circ \THR$~\cite{HansenP10}.
		\item $\AND \circ \ETHR \subseteq \ETHR$~\cite{HansenP10}.
		\item $\AND_k \circ \SYM \subseteq \SYM$~\cite{HansenP10} for a constant $k$.
		\item $\EMAJ \subseteq \MAJ \circ \AND_2$~\cite{HansenP10}.
	\end{enumerate}

	Moreover, all statements above have corresponding polynomial-time, deterministic constructions.
\end{prop}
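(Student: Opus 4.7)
The plan is to establish each of the ten facts by an explicit polynomial-time deterministic construction, grouping the items by technique. I would first dispatch the elementary cases (1), (6), (7). For (1), observe that a symmetric function on $k$ bits depends only on the Hamming weight $s=\sum_i x_i$; if $T\subseteq\{0,\dots,k\}$ is the accepting set, the function equals $\bigvee_{j\in T}[s=j]$, a disjoint OR of $\EMAJ$ tests since $s$ takes a unique value. For (6), write each bottom $\SYM$ gate as an integer combination $\sum_j c_j[s_i\geq j]$ of unweighted majority thresholds (after unweighting any weighted linear form by input duplication) and absorb the coefficients into a single top $\SYM$, yielding $\SYM\circ\MAJ$. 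For (7), use that an $\ETHR$ at the top is the conjunction of two $\THR$s and that each bottom $\ETHR$ is similarly two $\THR$s via sign-flipping; the outer ANDs then merge into an enlarged $\THR\circ\THR$ with at most doubled fan-in.

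\textbf{Weight-reduction items (2), (3), (4), (5).} These I would handle through the Goldmann--H{\aa}stad--Razborov program in the form refined by Hofmeister. Starting from Muroga/Babai normalization (integer weights of magnitude at most $2^{O(n\log n)}$), Hofmeister's construction simulates a single weighted $\THR$ by a polynomial-size $\MAJ\circ\MAJ$ obtained by duplicating inputs proportional to their weights with careful rounding, giving (2). Items (4) and (5) follow by plugging this simulation into the bottom layer of any $\SYM\circ\{\THR,\ETHR\}$ or $\MAJ\circ\{\THR,\ETHR\}$ circuit and then collapsing the two adjacent unweighted layers by merging counts. For (3), I would first cut the log-scale weights into polynomially many blocks so that the weighted sum takes only polynomially many distinct values, and then write the $\THR$ as a disjoint OR over those values, each captured by a single $\ETHR$.

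\textbf{Exact-threshold tricks (8), (9), (10), and main obstacle.} For (10), note $[\sum_i x_i=k]$ equals $[(\sum_i x_i-k)^2\leq 0]$; expanding $(\sum_i x_i)^2=\sum_i x_i+2\sum_{i<j}x_ix_j$ recasts this as a threshold on $\AND_2$ terms, which the standard input-duplication unweighting refines to an unweighted $\MAJ\circ\AND_2$. For (8), the conjunction $\bigwedge_\ell[\sum_i w_i^{(\ell)}x_i=t_\ell]$ becomes a single exact equality after scaling each residual $R_\ell:=\sum_i w_i^{(\ell)}x_i-t_\ell$ by $M^{\ell-1}$ for $M$ exceeding the bound on $|R_\ell|$: then $\sum_\ell M^{\ell-1}R_\ell=0$ forces all $R_\ell=0$, since the top nonzero residual dominates the lower ones by a geometric-series estimate. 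The same base-$M$ packing handles (9): $k$ linear forms feeding $k$ bottom $\SYM$ gates are packed into one linear form $S=\sum_\ell M^{\ell-1}S_\ell$, and a single top $\SYM$ extracts each digit and computes the AND (constant $k$ is essential so that the packed weights remain within the allowed magnitude). The only obstacle throughout is bookkeeping: verifying that weight bit-lengths stay polynomial after stacked reductions and that no step is secretly randomized. Each gadget is standard from the cited references, so the work is purely routine.
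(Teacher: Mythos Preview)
The paper does not prove this proposition (it is stated with citations to the literature), so there is no paper-side argument to match. Most of your sketches are fine, but item (7) has a real gap. Your plan writes the top $\ETHR$ as an $\AND_2$ of two $\THR$ gates and replaces each bottom $\ETHR$ gate $G_i$ by the linear expression $G_i^{+}+G_i^{-}-1$ with $G_i^{\pm}$ both $\THR$. This gets you only to $\ETHR\circ\THR$ (equivalently $\AND_2\circ\THR\circ\THR$), and the assertion that the outer $\AND_2$ ``merges into an enlarged $\THR\circ\THR$ with at most doubled fan-in'' is unjustified: an $\AND$ of two depth-two threshold circuits is not known to collapse to depth two with polynomial blow-up, and nothing special about this instance helps. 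The Hansen--Podolskii argument proceeds in the opposite order. One first rewrites the \emph{top} gate via $[\,L=T\,]\Leftrightarrow[\,-(L-T)^2\ge 0\,]$; expanding the square produces cross-terms $G_iG_j$, each an $\AND$ of two bottom $\ETHR$ gates and hence a single $\ETHR$ by item (8). This yields $\THR\circ\ETHR$, and only \emph{then} does one linearize the bottom $\ETHR$ gates as $G_i^{+}+G_i^{-}-1$ to land in $\THR\circ\THR$. The missing idea is the squaring trick combined with item (8); doing the bottom substitution first destroys the applicability of (8), since $\AND$ of two $\THR$ gates is not a single $\THR$.

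Two smaller points. Your sketch of (3) (``the weighted sum takes only polynomially many distinct values'') is false as stated---the sum can take up to $2^n$ values regardless of how the weights are blocked; the cited construction is more delicate than this. And for (4)--(5), simply ``collapsing two adjacent unweighted layers'' does not work as written ($\SYM\circ\MAJ$ does not collapse to $\SYM$); what one actually uses is the stronger \emph{gap} property of the Goldmann--H{\aa}stad--Razborov/Hofmeister simulation, namely that the sum of the simulating $\MAJ$ gates is well separated according to the value of the original $\THR$, which is what lets a top $\SYM$ or $\MAJ$ recover the exact count.
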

\begin{rem}
	We remark that for Item (4) and (5), only Item (5) is explicitly stated in~\cite{HansenP10}, but it is not hard to see that the technique works equally well with a top $\SYM$ gate.
\end{rem}

We also need the following folklore lemma, which helps us to transform between $\MAJ \circ \AND$ circuits and $\MAJ \circ \OR$ circuits.

\begin{lemma}\label{lm:AND-to-OR}
	Let $x = x_1,x_2,\dotsc,x_k$ be the inputs, there are $k$ $\OR$ functions $O_1,O_2,\dotsc,O_k$ on the inputs (or their negations) such that:
	\[
	\AND(x)  = \left(\sum_{i=1}^{k} O_i(x) \right)  - (k-1).
	\]
\end{lemma}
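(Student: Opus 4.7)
The plan is to exhibit the family $\{O_i\}$ explicitly. A natural choice, motivated by the observation that the identity forces exactly one $O_i$ to be false whenever $\AND(x) = 0$, is
\[
O_i(x) \;:=\; x_i \vee \neg x_1 \vee \neg x_2 \vee \cdots \vee \neg x_{i-1} \qquad (1 \le i \le k),
\]
with $O_1 = x_1$. Each $O_i$ is a disjunction of literals of the input, as required, with fan-in at most $k$. The key property is that $O_i(x) = 0$ iff $x_i = 0$ and $x_1 = \cdots = x_{i-1} = 1$, a condition that at most one index $i$ can meet for any fixed $x$.

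Given the construction, it remains to verify $\sum_{i=1}^{k} O_i(x) = (k-1) + \AND(x)$ pointwise. I would finish by a two-case split. If all $x_j = 1$, then every $O_i = 1$ and the sum equals $k = (k-1) + 1$. Otherwise, let $j^\star := \min\{j : x_j = 0\}$; then $O_{j^\star} = 0$ by construction. For $i < j^\star$ we have $x_i = 1$, so $O_i = 1$; for $i > j^\star$, the disjunct $\neg x_{j^\star}$ appears in $O_i$ and is true, so $O_i = 1$. Hence exactly one $O_i$ vanishes and the sum equals $k - 1 = (k-1) + 0$. Rearranging gives the claimed identity.

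I do not foresee any genuine obstacle, since once the construction is written down the proof reduces to the elementary observation that the predicate ``$x_i$ is the \emph{first} zero of $x$'' is satisfied by a unique index whenever $\AND(x) = 0$, and by none when $\AND(x) = 1$. The only creative step is the choice of the $O_i$, and the asymmetric ``prefix'' form arises naturally once one notes the need for exactly one false OR on the zero-inputs.
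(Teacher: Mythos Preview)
Your construction and argument are correct and essentially identical to the paper's: the paper defines $O_i(x) := \left(\bigvee_{j=1}^{i-1} \neg x_j\right) \vee x_i$ and verifies the identity by the same first-zero case analysis.
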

\begin{proof}
	We define
	\[
	O_i(x) := \left(\bigvee_{j=1}^{i-1} \neg x_j\right) \vee x_i.
	\]
	
	That is, $O_i(x) = 0$ if and only if the first $i-1$ bits are $1$, and the $i$-th bit is $0$. Now, note that if $\AND(x) = 1$, then all bits are $1$, which means all $O_i(x)$'s are $1$. When $\AND(x) = 0$, let $i$ be the index of the first $0$-bit, it is easy to see that $O_i(x) = 0$ and all other $O_j(x)$'s are $1$, and hence $\sum_{i=1}^{k} O_i(x) = k-1$.
\end{proof}


\subsection{Lower Bound From Non-trivial Satisfiability Algorithm}

Here we introduce the algorithm-to-lower bound tools established in a serious works of Williams~\cite{Wil13,Wil14ACC}, and simplified by Ben-Sasson and Viola~\cite{ben2014short}. 

Let $\ckt$ be a circuit class, we use $\ckt_{n}^{s}$ to denote the subset of $\ckt$ with $n$ inputs and size $\le s$. Slightly abusing notation, we also use $\ckt_{n}^s$ to denote the corresponding functions of the circuits in $\ckt_{n}^s$.

We say a circuit class $\ckt$ is efficiently close under projections, if given the description of a circuit $C$ from $\ckt_{n}^{s}$, for indices $i,j \le n$ and a bit $b$, the following functions
\[
\neg C,C(x_1,\dotsc,x_{i-1},x_j \oplus b,x_{i+1},\dotsc,x_n),C(x_1,\dotsc,x_{i-1},b,x_{i+1},\dotsc,x_{n})
\]
belong to $\ckt_{n}^s$, and their corresponding circuit descriptions can be constructed in $\poly(s)$ time.

The following is from~\cite{ben2014short}, we reformulate it a bit for our use here.

\begin{theo}[Theorem 1.5~\cite{ben2014short}]\label{theo:ENP-lowb} Let $s : \mathbb{N} \to \mathbb{N}$ be a growing parameter of $n$, $\ckt$ be efficiently closed under projections and $C_n = \ckt_{n}^{s(n)}$. If the satisfiability of functions $h = g_1 \wedge g_2 \wedge g_3$ where $g_i \in C_{n + O(\log n)}$ is in $\TIME(2^n/n^{\omega(1)})$,
	then there is a function $f$ in $\ENP$ such that $f_n \notin C_n$ for infinitely many $n$'s.
\end{theo}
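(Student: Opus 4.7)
The plan is to prove this Ben-Sasson-Viola theorem by the standard Williams-style ``algorithms to lower bounds'' contradiction. Assume both that the hypothesized SAT algorithm exists and (for contradiction) that every $f \in \ENP$ satisfies $f_n \in C_n = \ckt_n^{s(n)}$ for all but finitely many $n$. The goal is to derive a faster-than-allowed algorithm for a specific $\ENP$-hard language $L$, contradicting the $\TIME(\cdot)^{\NP}$ hierarchy theorem. By hierarchy I would fix $L \in \TIME(2^{cn})^{\NP} \setminus \TIME(2^{cn}/n^{\omega(1)})^{\NP}$ for some large constant $c$; under the failure-of-lower-bound hypothesis the truth table $L_n : \{0,1\}^n \to \{0,1\}$ has a $\ckt$-circuit $D_n$ of size $s(n)$, so the task reduces to producing and evaluating $D_n$ in the class $\TIME(2^{cn}/n^{\omega(1)})^{\NP}$.

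To decide $L$ on input $x$, the algorithm would nondeterministically guess $D_n$ together with a second $\ckt$-circuit $W_n$ that succinctly encodes the computation tableau of the fixed $L$-machine (including all $\NP$-oracle witness bits) for every length-$n$ input. Correct guesses let us evaluate $D_n(x)$ in $\poly(s(n))$ time. Verifying the guessed $(D_n,W_n)$ is the heart of the argument: I would reduce this verification to the unsatisfiability of a single circuit $h(y,z) = g_1(y,z) \wedge g_2(y,z) \wedge g_3(y,z)$, where $(y,z) \in \{0,1\}^{n+O(\log n)}$ ranges over the tableau cells. The three conjuncts separately encode the atomic local-consistency conditions: $g_1$ that each cell's value follows from its predecessors via the transition function with $W_n$'s witness bits as oracle answers; $g_2$ that the boundary/input row of the tableau is consistent with $y$; and $g_3$ that the tableau's final output matches $D_n(y)$. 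Because $\ckt$ is efficiently closed under projections (negations, bit-fixings, index permutations), each $g_i$ can be built from $D_n$ or $W_n$ within the $C_{n+O(\log n)}$ budget.

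Running the assumed $2^n/n^{\omega(1)}$-time SAT algorithm on $h$ certifies the guess in $2^n/n^{\omega(1)}$ time. Chaining this with $\NP$-oracle-driven guess-and-verify and the $\poly(n)$ evaluation step yields a $\TIME(2^{n}/n^{\omega(1)})^{\NP}$ algorithm for $L_n$, which after suitable padding contradicts the hierarchy-theorem choice of $L$. The main obstacle will be engineering the tableau so that its consistency decomposes into \emph{exactly three} $\ckt$-conjuncts with only an $O(\log n)$-bit overhead in the auxiliary coordinate $z$: standard local consistency involves neighborhoods of $4$--$6$ tableau cells and does not naturally factor into three projection-readable predicates, so the technical crux is a carefully chosen short-PCP-style encoding whose verifier's acceptance predicate fits this three-conjunct mold. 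This is exactly what Ben-Sasson-Viola achieve, bypassing the heavier PCP-of-Proximity machinery used in Williams' original argument.
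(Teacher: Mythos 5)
First, a point of reference: the paper does not prove this statement at all --- it is imported (with slight reformulation) from Ben-Sasson and Viola \cite{ben2014short}, so there is no in-paper proof to compare against. Judged on its own, your sketch has the right architecture: guess a small $\ckt$-circuit encoding a witness, reduce verification of the guess to the (un)satisfiability of an $\AND$ of three circuits from $C_{n+O(\log n)}$, run the assumed SAT algorithm, and contradict a hierarchy theorem. You also correctly isolate the crux, namely that the verifier's test must decompose into exactly three conjuncts, each a projection composed with the guessed circuit. But two steps need repair.

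The hierarchy-theorem endpoint is misidentified. A machine in $\TIME(2^{cn})^{\NP}$ cannot ``nondeterministically guess $D_n$ together with $W_n$'' and remain in that class, so your simulation does not land in $\TIME(2^{cn}/n^{\omega(1)})^{\NP}$ and does not contradict the deterministic-with-oracle hierarchy. The standard argument instead contradicts the \emph{nondeterministic} time hierarchy: one takes an arbitrary $L \in \textsf{NTIME}(2^n)$ and exhibits a nondeterministic $2^{n}/n^{\omega(1)}$-time algorithm for it. The hypothesis ``$\ENP$ has $\ckt$-circuits'' enters exactly once, to guarantee that the guessed object exists: the lexicographically-first PCP proof for $x$ is computable bit-by-bit in $\ENP$, hence has a $\ckt$-circuit $W$ of size $s$ that the simulator guesses; there is no need (and no room) to also guess a circuit $D_n$ computing $L$ itself, since $L$ is being simulated, not evaluated. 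Second, your proposed split into ``local consistency / boundary / output'' conjuncts does not fit the required form: each $g_i$ must be a single queried proof bit, i.e., $W$ composed with a projection of the randomness $z \in \{0,1\}^{n+O(\log n)}$ (up to negation), and $g_1 \wedge g_2 \wedge g_3$ is the verifier's per-random-string rejection predicate, so that ``$h$ unsatisfiable'' certifies acceptance on all $z$. Constructing a PCP for $\textsf{NTIME}(2^n)$ with three projectable queries and only $O(\log n)$ randomness overhead is precisely the technical content of \cite{ben2014short}; you acknowledge this, but without it the argument is a framework plus a pointer to the missing engine, so the proposal as written has a genuine gap at its core.
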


We also need the following two similar connections with circuit lower bound against $\NEXP$.

\begin{theo}[\cite{ben2014short,Wil13}]\label{theo:NEXP-lowb-poly}
	Let $\ckt$ be efficiently closed under projections. If there is an algorithm solving the satisfiability of functions $h = g_1 \wedge g_2 \wedge g_3$ where $g_i \in C_{n + O(\log n)}^{n^k}$ in $O(2^{n}/n^k)$ time for all $k$, then $\NEXP$ does not have polynomial size $\ckt$ circuits.
\end{theo}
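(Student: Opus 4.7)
The plan is to follow the standard Williams-style algorithmic-to-lower-bound framework, in the form due to Ben-Sasson and Viola, adapted to the $\NEXP$ setting. I would assume for contradiction that $\NEXP$ has polynomial-size $\ckt$ circuits and derive a violation of the nondeterministic time hierarchy.

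First I would invoke the Easy Witness Lemma: under the hypothesis $\NEXP \subseteq \ckt\text{-poly}$, every language $L \in \NTIME[2^n]$ admits witnesses of length $2^{O(n)}$ that are themselves computable by polynomial-size $\ckt$ circuits. This is the classical Impagliazzo--Kabanets--Wigderson easy-witness machinery, specialized to the circuit class $\ckt$ whose upper bound is being hypothesized. Second, I would instantiate the reduction with an $\NEXP$-complete problem, namely Succinct-$3\SAT$: an instance is a polynomial-size circuit $D$ on $n$ input bits that encodes a $3$CNF formula $\varphi$ on $2^n$ variables, and one asks whether $\varphi$ is satisfiable. By the easy-witness step, a satisfying assignment (if any) can be certified by nondeterministically guessing a poly-size $\ckt$ circuit $W$ that purports to compute it; correctness amounts to the statement that for every $n$-bit clause-index $j$, the clause $D(j)$ is satisfied by $W$.

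Third, and this is the heart of the reduction, I would encode the universal verification as the unsatisfiability of a circuit of the prescribed form. The negated statement ``there exists a clause $j$ unsatisfied by $W$'' becomes a Boolean function on $n + O(\log n)$ inputs (the extra bits letting us enumerate over the three literals of a clause or over auxiliary routing data) of the shape $h = g_1 \wedge g_2 \wedge g_3$, where each $g_i$ expresses ``the $i$-th literal of clause $j$ evaluates to false under $W$.'' The crucial point is that each $g_i$ involves only a single evaluation of $W$ on an input extracted from $D(j)$, with the rest of the dependence on $j$ absorbed by the projection/substitution closure of $\ckt$. Since $W \in \ckt_{n+O(\log n)}^{n^{k_0}}$ for some fixed $k_0$ and $\ckt$ is efficiently closed under projections, each $g_i$ lies in $\ckt_{n + O(\log n)}^{n^{k}}$ for some $k$ depending on $k_0$ and the size of $D$.

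Finally, applying the hypothesized SAT algorithm to $h$ with this value of $k$ (and in fact for every $k$, since $k_0$ is unknown a priori and we need the bound for all large $k$) gives a deterministic $O(2^n / n^{k})$-time procedure to validate each guessed $W$. Combined with the nondeterministic guess of $W$, this puts Succinct-$3\SAT$ in $\NTIME[2^n / n^{\omega(1)}]$, contradicting the nondeterministic time hierarchy theorem and thereby refuting our initial assumption. The main obstacle is Step 3 — arranging that each $g_i$ really belongs to $\ckt$ of the stated size, rather than to some hybrid class containing both $\ckt$ and the arbitrary Boolean circuit $D$. This is handled by ensuring exactly one copy of $W$ appears inside each $g_i$ and by routing all of $D$'s influence through the $O(\log n)$ auxiliary inputs together with the projection-closure of $\ckt$, exactly as in Theorem~\ref{theo:ENP-lowb}, with the only difference being that here we derive an $\NEXP$ lower bound rather than an $\ENP$ one.
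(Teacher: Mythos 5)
You have the right architecture, and it is the same route as the cited proof of Ben-Sasson--Viola and Williams: assume $\NEXP \subseteq \ckt$-poly, get succinct witnesses, verify a guessed witness circuit by running the fast SAT algorithm on an $\AND_3$ of $\ckt$ circuits, and contradict the nondeterministic time hierarchy. But the step you yourself flag as "the heart of the reduction" contains a genuine gap rather than an argument. In Succinct-$3\SAT$ the clause-description circuit $D$ is an \emph{arbitrary} polynomial-size circuit, so the function $g_i(j) = \neg W\bigl(D_i(j)\bigr)$, where $D_i$ extracts the $i$-th variable index of clause $j$, is the witness circuit $W$ composed with a general circuit. Closure under projections only permits substituting each input of $W$ by a literal or a constant; it does not permit composing $W$ with $D$. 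Nor can $D$'s influence be "routed through the $O(\log n)$ auxiliary inputs," since $D(j)$ depends on all $n$ bits of $j$. Making the query locations \emph{projections} of the verifier's index is precisely the technical content of \cite{ben2014short} (PCPs with projection queries), and was previously handled by Williams via reductions to highly structured variants of Succinct-SAT whose description circuits are trivial. Without one of these ingredients, the circuit $h$ you build lies in $\AND_3 \circ \ckt \circ \textsf{P/poly}$, not $\AND_3 \circ \ckt$, and the hypothesized algorithm does not apply.

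A second, smaller gloss: the Easy Witness Lemma of Impagliazzo--Kabanets--Wigderson gives witnesses computable by polynomial-size \emph{general} circuits under $\NEXP \subseteq \textsf{P/poly}$; it does not directly give $\ckt$-witnesses. One needs the extra (standard) step that the lexicographically first succinct witness is computable in $\EXP \subseteq \NEXP \subseteq \ckt\text{-poly}$, so its bits are themselves computed by a polynomial-size $\ckt$ circuit, which is the $W$ you then guess. With these two repairs --- the projection-query PCP for the verification circuit and the $\EXP$-based upgrade of general witnesses to $\ckt$-witnesses --- your outline matches the proof in the cited works.
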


\begin{theo}[\cite{ben2014short,Wil13}]\label{theo:NEXP-lowb-quasi-poly}
Let $\ckt$ be efficiently closed under projections. If there is an algorithm solving the satisfiability of functions $h = g_1 \wedge g_2 \wedge g_3$ where $g_i \in C_{n + O(\log n)}^{s}$ where $s = 2^{\log^k n}$ in $O(2^{n - \log^k n})$ time for all $k$, then $\NEXP$ does not have quasi-polynomial size $\ckt$ circuits.
\end{theo}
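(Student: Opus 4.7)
The plan is to adapt the proof of Theorem~\ref{theo:NEXP-lowb-poly} (the polynomial-size version just above) to the quasi-polynomial regime by carefully tracking parameters. The overall strategy is the standard Williams--Ben-Sasson--Viola proof by contradiction against the nondeterministic time hierarchy: assume $\NEXP \subseteq \ckt[2^{\log^{k'} n}]$ for every constant $k'$, fix a language $L \in \NTIME[2^n] \setminus \NTIME[2^n / n^{\omega(1)}]$ (which exists by the NTIME hierarchy theorem), and use the hypothesized SAT algorithm together with a succinct witness representation to decide $L$ in nondeterministic time $2^n / n^{\omega(1)}$, a contradiction.

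The first substantive step is an \emph{easy-witness lemma} at the quasi-polynomial scale: under the assumption $\NEXP \subseteq$ quasi-polynomial $\ckt$, for every $x \in L$ of length $n$ there is an accepting computation tableau whose $2^{O(n)}$ bits are succinctly encoded by a single $\ckt$ circuit $W$ on $n + O(1)$ inputs of size $s = 2^{\log^{k} n}$, for some $k$ determined by the assumed upper bound $k'$ on circuit complexity. This is the quasi-polynomial analog of the Impagliazzo--Kabanets--Wigderson easy-witness lemma; the cleanest route is to invoke the Murray--Williams version, but it can also be derived directly along the lines of Williams' original $\NEXP$ vs.\ $\ACC^0$ argument with size parameters uniformly replaced by $n^{\polylog n}$ throughout.

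The second step follows the Ben-Sasson--Viola simplification: verifying that $W$ encodes an accepting computation of the nondeterministic machine for $L$ on input $x$ reduces to checking the unsatisfiability of a conjunction $h = g_1 \wedge g_2 \wedge g_3$ where each $g_i$ is a $\ckt$ circuit on $n + O(\log n)$ variables of size $O(s)$. Concretely, $g_1$ enforces initial-configuration consistency with $x$, $g_2$ enforces local transition consistency across the tableau via a constant number of evaluations of $W$ at indices obtained by projections of the input bits, and $g_3$ enforces that the final configuration is accepting. Here the assumption that $\ckt$ is efficiently closed under projections is exactly what guarantees that these verifier circuits actually sit in $C_{n + O(\log n)}^{O(s)}$. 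Applying the hypothesized SAT algorithm solves the (co-)satisfiability query in $O(2^{n - \log^k n})$ time; adding the $\poly(s) = 2^{O(\log^k n)}$ cost of guessing and writing down $W$ yields total nondeterministic time $2^n / n^{\omega(1)}$, which contradicts the hardness of $L$.

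The main obstacle is the easy-witness step at quasi-polynomial scale. In the polynomial case this is textbook, but to keep the witness circuit size at exactly $2^{\log^k n}$ (rather than, say, blowing up to $2^{n^{\epsilon}}$) the usual bootstrapping --- Karp--Lipton collapse of $\EXP^{\NP}$ followed by the Nisan--Wigderson pseudorandom generator construction, or the more recent circuit-obfuscation-style arguments --- must be carried out with every intermediate circuit size kept quasi-polynomial. Once this quantitative easy-witness lemma is in hand, the reduction to $h = g_1 \wedge g_2 \wedge g_3$ and the final application of the SAT hypothesis are routine and parallel Theorem~\ref{theo:NEXP-lowb-poly} line-for-line with $\poly(n)$ everywhere replaced by $2^{\log^k n}$.
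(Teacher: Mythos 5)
The paper does not actually prove this theorem; it states it as a reformulation of known results from \cite{ben2014short,Wil13}, and your sketch is precisely the standard argument from those references — the easy-witness lemma scaled to quasi-polynomial size, the Ben-Sasson--Viola PCP-with-projection-queries reduction to satisfiability of $g_1\wedge g_2\wedge g_3$ over $n+O(\log n)$ variables, and the contradiction with the nondeterministic time hierarchy — with the size and time parameters tracked correctly. The one step you gloss over is that the easy-witness lemma yields a \emph{general} witness circuit rather than one in $\ckt$; the standard fix is that under the assumption $\NEXP\subseteq\ckt[2^{\polylog n}]$ the Circuit-Evaluation function (being in $\PTIME\subseteq\NEXP$) has quasi-polynomial-size $\ckt$ circuits, into which one hardwires the description of the general witness circuit using exactly the closure-under-projections hypothesis.
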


\begin{rem}\label{rem:coNTIME-works-prelim}
	We remark that algorithms in both Theorem~\ref{theo:NEXP-lowb-poly} and Theorem~\ref{theo:NEXP-lowb-quasi-poly} can in fact be replaced by co-nondeterministic algorithms with the same running times.
\end{rem}




\section{Structure Lemmas for $\THRTHR$ Circuits}

In this section we present our structure lemmas for $\THRTHR$ circuits, and discuss some applications.

We first need a simple construction, which will be used in both proofs.

\begin{lemma}[Mod $p$ exact threshold gate]\label{lm:mod-p-ETHR}
	Let $G$ be a $\ETHR$ gate with $n$ inputs, $p$ be a prime and $G^{p}$ be the ``mod p'' version of $G$. That is, let $L$ and $T$ be the corresponding linear function and threshold of $G$, $G^{p}(x) := \left[ L(x) \equiv T \pmod{p} \right]$. 
	
	Then $G^{p}$ can be written as a $\DISJOR \circ \ETHR$ circuit such that
	\begin{itemize}
		\item The top $\DISJOR$ gate has $O(n)$ fan-in.
		\item All $\ETHR$ gates have positive weights and thresholds smaller than $O(np)$.
	\end{itemize}
\end{lemma}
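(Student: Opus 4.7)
The plan is to reduce all weights modulo $p$ and then express the resulting modular condition as a disjoint $\OR$ over the (polynomially many) admissible values of the reduced linear form.

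First I would let $L(x)=\sum_{i=1}^n w_i x_i$ and $T$ denote the linear function and threshold of $G$, and define $w_i' := w_i \bmod p \in \{0,1,\dots,p-1\}$ and $T' := T \bmod p \in \{0,1,\dots,p-1\}$. Setting $L'(x) := \sum_{i=1}^n w_i' x_i$, the congruence $L(x)\equiv T \pmod p$ holds iff $L'(x)\equiv T' \pmod p$, because reducing each coefficient and the threshold mod $p$ preserves the congruence class of the linear combination on any $0/1$ input.

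Next I would use that the reduced form $L'(x)$ takes only non-negative integer values bounded by $\sum_i w_i' \le n(p-1)$. Hence $L'(x)\equiv T' \pmod p$ holds iff $L'(x)=T'+kp$ for some integer $k$ in the range $0 \le k \le \lfloor (n(p-1)-T')/p \rfloor$, i.e.\ for one of $O(n)$ possible values of $k$. This gives the representation
\[
G^p(x) \;=\; \bigvee_{k=0}^{K} \bigl[\, L'(x) = T'+kp\,\bigr],
\qquad K=O(n),
\]
where each disjunct is a single $\ETHR$ gate whose weights are the non-negative integers $w_i' \in \{0,\dots,p-1\}$ and whose threshold is $T'+kp \le n(p-1) = O(np)$. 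Disjointness of the $\OR$ is immediate: $L'(x)$ has one value on input $x$, so at most one of the equalities $L'(x)=T'+kp$ can hold.

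To obtain strictly positive weights as required by the statement, I would simply drop from each $\ETHR$ gate any input variable whose reduced weight $w_i'$ equals $0$; such inputs contribute nothing to the value of $L'(x)$, so this does not change the gate's output. After this pruning, every remaining weight lies in $\{1,2,\dots,p-1\}$, and the threshold bound $O(np)$ is preserved. The only thing to be careful about is this positivity constraint and the disjointness of the top $\OR$; both are handled straightforwardly by the construction above, so I do not anticipate a real obstacle.
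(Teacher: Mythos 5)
Your proposal is correct and follows essentially the same route as the paper's proof: reduce each weight and the threshold modulo $p$, observe that the reduced linear form is bounded by $n(p-1)$, and enumerate the $O(n)$ possible multiples of $p$ as a disjoint $\OR$ of $\ETHR$ gates. Your extra remark about pruning zero-weight inputs to guarantee strictly positive weights is a fine (and slightly more careful) touch that the paper leaves implicit.
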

\begin{proof}
	Let $w_1,w_2,\dotsc,w_n$ and $T$ be the corresponding weights and threshold of $G$. We reduce each weight $w_i$ in $G$ to $w_i \bmod{p}$, and get another circuit with associate top linear function $L'(x)$. We set $t = T \bmod{p}$, then $L(x) \equiv T \pmod{p}$ is equivalent to $L'(x) = t + k \cdot p$ for some $k \in \{0,1,2, \dotsc, n \}$. Therefore, by enumerating $k$ from $0$ to $n$, we can construct the equivalent $\DISJOR \circ \ETHR$ circuit.
\end{proof}

\subsection{Proof for Structure Lemma I}

We begin with the proof for Structure Lemma I for $\THRTHR$ circuits (restated below).

\begin{reminder}{Lemma~\ref{lm:structure-THRTHR-I}}
	Let $n$ be number of inputs and $s = s(n) \ge n$ be a size parameter.
	Every $s$-size $\THRTHR$ circuit $C$ is equivalent to a $\GAPOR \circ \THR \circ \MAJ$ circuit
	such that:
	
	\begin{itemize}
		\item The top $\GAPOR$ gate has $\poly(s)$ fan-in.
		\item Each sub $\THR \circ \MAJ$ circuit has size $\poly(s)$.
	\end{itemize}
	
	Moreover, the reduction can be computed in \emph{deterministic} $\poly(s)$ time.
\end{reminder}
\begin{proof}
	Let $C'$ be the given $\THRTHR$ circuit. By negating some of its input gates ($\THR$ is closed under negation), we can assume all weights in the top $\THR$ gate of $C'$ are $\le 0$.	By Proposition~\ref{prop:circuit-facts-contain} (3), $C'$ can be transformed into an equivalent $\THR\circ\ETHR$ circuit $C$ of size $t = \poly(s)$.
	
	Let $G_1,G_2,\dotsc,G_{t}$, $w_1,w_2,\dotsc,w_{t}$ be the $\ETHR$ gates on the bottom layer and their corresponding weights in the top gate in $C$. By assumption, we also have all $w_i \le 0$. Let $T$ be the threshold of the top gate in $C$. For all input $x$, we have
	\[
	C(x) = \left[ \sum_{i=1}^{t} w_i \cdot G_i(x) \ge T \right].
	\]
	
	By construction, we can assume that weights in $G_i$ are bounded by $2^{n^{c}}$ for a large constant $c$. Fix an input $x$, let $p$ be a random prime from $2$ to $n^{2c} \cdot t^2 \cdot 10 = \poly(s)$, then with probability at least $1 - 1/10t$, we have $G_i^p(x) = G_i(x)$. Let $C^p$ be the circuit obtained by replacing all $G_i$'s in $C$ by corresponding $G_i^p$'s. 
	
	Then we have: (1) when $C(x) = 1$, by a union bound, with probability at least $1/10$, $C^p(x) = C(x) = 1$. (2) when $C(x) = 0$, note that for all prime $p$, we have $G_i^{p}(x) \ge G_i(x)$ for all $i$, therefore we must have $\sum_{i}^{s} w_i \cdot G_i^p(x) \le \sum_{i}^{s} w_i \cdot G_i(x) < T$ (all $w_i$'s are $\le 0$) and $C^p(x) = 0$. Hence, $C$ is equivalent to a $\GAPOR$ of all $C^p$'s, and by Lemma~\ref{lm:mod-p-ETHR}, each $C^p$ can be written as a $\poly(s)$ size $\THR \circ \MAJ$ circuit, which completes the proof.
\end{proof}

\subsection{Proof for Structure Lemma II}

We next prove Lemma~\ref{lm:structure-THRTHR-II}. The proof consists of two steps, which are specified by Lemma~\ref{lm:weight-reduction-top-THR} and Lemma~\ref{lm:decompose-top-ETHR}.

\begin{lemma}[Weight Reduction at the Top $\THR$ gate]\label{lm:weight-reduction-top-THR}
	Given a $\THR_d \circ \ckt$ circuit (a circuit with a top $\THR$ gate of fan-in $d$) of size $s$, it is equivalent to a $\DISJOR \circ \ETHR \circ \ckt$ circuit such that:
	
	\begin{itemize}
		\item The top $\DISJOR$ gate has $\poly(d)$ fan-in.
		\item Each $\ETHR$ gate has fan-in $d$, whose weights and threshold are positive and smaller than $\poly(d) \cdot 2^n$.
		\item The $\ckt$ part is unchanged.
	\end{itemize}

	The same statement also holds for a $\ETHR_d \circ \ckt$ circuit. Moreover, these reductions can be computed in \emph{randomized} $\poly(s)$ time.
\end{lemma}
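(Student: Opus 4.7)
My plan is to reduce the weights of the top gate via a random-prime argument, first handling the $\ETHR_d \circ \ckt$ case and then bootstrapping to the $\THR_d \circ \ckt$ case.

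For the $\ETHR_d$ case, I would first apply the Muroga--Han weight bound so that WLOG the top gate's integer weights have magnitude at most $2^{O(d \log d)}$. I would then pick a prime $p$ uniformly at random from $[2, N]$ with $N = \poly(d) \cdot 2^n$ for a sufficiently large polynomial. There are at most $2^n$ distinct real inputs $x$, and for each the integer $L(g(x)) - T$ has magnitude at most $d \cdot 2^{O(d \log d)}$, hence at most $O(d \log d)$ prime factors. A union bound shows that with high probability $p$ divides no nonzero such difference, in which case $[L(y) = T]$ and $[L(y) \equiv T \pmod{p}]$ coincide on every real input. I then apply Lemma~\ref{lm:mod-p-ETHR} to expand the modular condition into a $\DISJOR$ of $O(d)$ fan-in-$d$ $\ETHR$s with positive weights and threshold at most $O(dp) = \poly(d) \cdot 2^n$, leaving the $\ckt$ layer untouched.

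For the $\THR_d$ case, I would first invoke Proposition~\ref{prop:circuit-facts-contain}(3) to rewrite the top $\THR$ as a disjoint OR of $\poly(d)$ many $\ETHR_d$ gates over the same $d$ inputs, then apply the $\ETHR$ procedure above to each of them using a single random prime $p$ of size $\poly(d) \cdot 2^n$; the union bound now ranges over $\poly(d) \cdot 2^n$ bad events but is still comfortably dominated by $N/\log N$. Composing gives a $\DISJOR$ of $\poly(d) \cdot O(d) = \poly(d)$ reduced-weight $\ETHR$s. The resulting top gate remains a genuine $\DISJOR$ because composing two disjoint ORs stays disjoint: for any fixed input, at most one outer branch fires, and within that at most one inner branch does.

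The main technical obstacle I expect is the union bound itself: the reduction only needs to be faithful on the $2^n$ inputs actually arising from the $\ckt$ layer, not on all of $\{0,1\}^d$, and this is exactly what lets us use a prime $p$ at scale $\poly(d) \cdot 2^n$ (and consequently obtain $\poly(d) \cdot 2^n$ output weights). A second subtle point is extracting $\poly(d)$---rather than $2^d$---fan-in out of the initial $\THR \to \DISJOR \circ \ETHR$ step; this relies crucially on the Hansen--Podolskii construction behind Proposition~\ref{prop:circuit-facts-contain}(3), and is what makes the bootstrap give the claimed $\poly(d)$ top fan-in rather than something exponential in $d$.
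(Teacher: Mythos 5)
Your proposal is correct and follows essentially the same route as the paper: first reduce the top $\THR$ to a disjoint OR of $\ETHR$ gates via Proposition~\ref{prop:circuit-facts-contain}(3), then reduce weights modulo a random prime of size $\poly(d)\cdot 2^n$ using a union bound over the $2^n$ actual inputs (the key observation you correctly identify), and finally expand via Lemma~\ref{lm:mod-p-ETHR}. The only cosmetic difference is that you use a single shared prime with one global union bound, whereas the paper samples a prime per $\ETHR$ gate and union-bounds over gates; both work.
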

\begin{proof}
	
	We only consider the $\THR_d \circ \ckt$ case, the $\ETHR_d \circ \ckt$ case is only simpler.
	
	Let $C$ be the given circuit. First, by Proposition~\ref{prop:circuit-facts-contain} (5), $C$ can be transformed to an equivalent $\DISJOR \circ \ETHR \circ \ckt$ circuit $C'$. 
	
	\newcommand{\Mold}{M_{\textsf{old}}}
	\newcommand{\Mnew}{M_{\textsf{new}}}
	
	Now, we deal with each $\ETHR$ gate $G$ separately, note that $G$ also has fan-in $d$. Let $D$ be the sub-circuit with top gate $G$. From the construction, $G$ may have weight of absolute value at most $\Mold = 2^{\poly(d)}$. 
	
	We next define $L : \{0,1\}^{n} \to \mathbb{Z}$ such that $L(x)$ is the value of the linear function associated with the gate $G$ when the input is $x$. That is $D(x) = 1$ if and only if $L(x) = T$ for the threshold $T$ of $G$.
	
	Then we pick a random prime number $m$ from $0$ to $\Mnew = d^{c} \cdot 2^{n}$, where $c$ is a sufficiently large constant. For a fixed $x \in \{0,1\}^{n}$, if $L(x) \ne T$, the probability that $L(x) \equiv T \pmod{m}$ is smaller than 
	\[
	\frac{\log(\Mold)}{\Mnew / \ln(\Mnew)} = \frac{\poly(d)}{\Theta(2^{n} \cdot d^{c} / (n + c\log d))} \le d^{-c/2} /2^{n},
	\] 
	for a sufficiently large $c$. Therefore, by a simple union bound, with probability at least $1 - d^{-c/2}$, we have $L(x) \equiv T \pmod{m}$ if and only if $L(x) = T$ for all $x \in \{0,1\}^{n}$. We pick such a prime $m$ for gate $G$.
	
	Finally, by applying Lemma~\ref{lm:mod-p-ETHR} with prime $m$, we can replace $G$ with an equivalent $\DISJOR \circ \ETHR$ sub-circuit, whose $\ETHR$ gates have positive weights and thresholds smaller than $\poly(d) \cdot 2^{n}$.
	
	By a union bound over all $\ETHR$ gates, and choose $c$ to be a large enough constant, we complete our randomized reduction.
\end{proof}
\begin{rem}\label{rem:one-sided-error}
	One can observe that the above reduction indeed only introduces \emph{one-sided error}. That is, even it chooses some ``bad'' primes, the resulting circuit $D$ satisfies the property that $D(x) = 1$ whenever $C(x) = 1$.
\end{rem}

\begin{lemma}[Decomposition of the top $\ETHR$ gate]\label{lm:decompose-top-ETHR}
	Given an $\ETHR_d \circ \ckt$ circuit $C$ (a circuit with a top $\ETHR$ gate of fan-in $d$) of size $s$ and a real $\eps \in \left(\frac{\log d}{n},1 \right)$, suppose the top $\ETHR$ gate in $C$ has positive weights and threshold smaller than $2^{2n}$. $C$ is equivalent to a $\DISJOR \circ \MAJ \circ \AND_2 \circ \ckt$ circuit such that:
	
	\begin{itemize}
		\item The top $\DISJOR$ gate has $2^{O(\eps n)}$ fan-in.
		\item Each $\MAJ$ gate has fan-in $d^{O(1/\eps)}$.
		\item The $\ckt$ part is unchanged.
	\end{itemize}

	Moreover, the reduction can be computed in \emph{deterministic} 
	\[
	2^{O(\eps n)} \cdot d^{O(1/\eps)} + \poly(s)
	\] 
	time.
\end{lemma}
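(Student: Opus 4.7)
The plan is a two-stage deterministic reduction: first decompose the top $\ETHR$'s weights in a moderate base $B$, then collapse the resulting per-digit conjunction into a single $\EMAJ$ via a sum-of-squares identity, and finally apply Proposition~\ref{prop:circuit-facts-contain}(10). Pick $B := d^{c/\eps}$ for a suitable constant $c$, so that $k := \lceil 2n/\log B \rceil = O(\eps n / \log d)$ digits suffice to represent weights and threshold (both $\le 2^{2n}$). Writing $w_i = \sum_{j=0}^{k-1} w_{i,j} B^j$ with $w_{i,j} \in \{0, \dots, B-1\}$ and $V_j(x) := \sum_i w_{i,j} g_i(x) \in [0, dB]$, the top condition becomes $\sum_j B^j V_j(x) = T$. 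I would enumerate the set $\mathcal T$ of integer tuples $(t_0, \dots, t_{k-1}) \in [0, dB]^k$ with $\sum_j B^j t_j = T$; a routine carry analysis (fix $t_0 \equiv T \pmod B$ with $\le d+1$ choices, then recurse on $(T - t_0)/B$) gives $|\mathcal T| \le (d+1)^k = 2^{O(\eps n)}$, and since the tuple $(V_0(x), \dots, V_{k-1}(x))$ is a deterministic function of $x$, at most one tuple in $\mathcal T$ is realized at any input, so the OR over the clauses $\bigwedge_j (V_j(x) = t_j)$ indexed by $\mathcal T$ is in fact a $\DISJOR$.

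For each such clause I would use the identity $\bigwedge_j (V_j(x) = t_j) \iff Q(x) := \sum_j (V_j(x) - t_j)^2 = 0$, which holds because $Q$ is a sum of non-negative integers. Expanding $V_j^2 = \sum_{i,i'} w_{i,j} w_{i',j} g_i g_{i'}$, applying the Boolean identity $g_i^2 = g_i$, and replacing any $g_i$ carrying a negative coefficient by $1 - \bar g_i$, I would rewrite $Q(x) = 0$ as an equation $\sum_\ell c_\ell \chi_\ell(x) = K$ in which every $\chi_\ell$ is a $\ckt$-gate, a negated $\ckt$-gate, or an $\AND_2$ of two $\ckt$-gates, all $c_\ell \in \mathbb{Z}_{\ge 0}$, and $K \in \mathbb{Z}$ (clauses with $K < 0$ are unsatisfiable and simply discarded). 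Replicating each $\chi_\ell$ with multiplicity $c_\ell$ produces a single $\EMAJ$ of fan-in $\sum_\ell c_\ell$. Bookkeeping bounds $\sum_\ell c_\ell$ by $O(k(dB)^2)$ via $\sum_{i,i'}\sum_j w_{i,j}w_{i',j} = \sum_j (\sum_i w_{i,j})^2 \le k(dB)^2$ (and a parallel bound for the linear terms), which collapses to $d^{O(1/\eps)}$ once the factor $k \le n$ is absorbed into the exponent using the hypothesis $\eps > \log s/n$ and $d \le s$. Proposition~\ref{prop:circuit-facts-contain}(10) then converts this $\EMAJ$ into a $\MAJ \circ \AND_2$ with only a constant-factor blow-up, yielding the desired $\DISJOR \circ \MAJ \circ \AND_2 \circ \ckt$ representation in the claimed $2^{O(\eps n)} \cdot d^{O(1/\eps)} + \poly(s)$ time.

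The hard part will be the per-clause collapsing step: combining the $k$ exact-majority constraints into a \emph{single} $\EMAJ$ whose fan-in is only polynomial in $dB$. The naive ``weight-stacking'' reduction $\AND \circ \ETHR \subseteq \ETHR$ would fuse the digit-checks into a single $\ETHR$ with weights of order $(dB)^k = 2^{\Theta(n)}$, and then replicating into an $\EMAJ$ would give a fatal $2^{\Omega(n)}$ fan-in. The sum-of-squares identity circumvents this precisely because the quadratic cross-terms $g_i \wedge g_{i'}$ are themselves the $\AND_2$-of-$\ckt$ inputs admitted by the target structure, so the total multiplicity is governed by $\sum_j (\sum_i w_{i,j})^2 \le k(dB)^2$ instead of $(dB)^k$; all subsequent manipulations (sign-flipping linear terms via complements, constant-shifting, and the final $\EMAJ \subseteq \MAJ \circ \AND_2$ conversion) then cost only constant multiplicative blow-ups.
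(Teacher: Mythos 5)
Your proposal is correct and follows essentially the same route as the paper: split the weights and threshold into base-$d^{\Theta(1/\eps)}$ digits, enumerate the $2^{O(\eps n)}$ consistent per-digit targets (your digit-sum tuples are in bijection with the paper's carry sequences, and disjointness holds for the same reason — each input realizes exactly one branch), and collapse each conjunction of exact per-digit constraints into one $\MAJ\circ\AND_2$ via the sum-of-squares identity, whose total weight is governed by $\sum_j\bigl(\sum_i w_{i,j}\bigr)^2$ rather than the product of digit ranges. Your detour through $\EMAJ$ plus Proposition~\ref{prop:circuit-facts-contain}(10) and your explicit sign-handling are just a more careful rendering of the paper's terse ``check whether the sum is $\le 0$'' step, and your bookkeeping of the extra factor $k\le n$ matches how the bound is actually used downstream.
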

\begin{proof}
	\newcommand{\Gtop}{G_{\textsf{top}}}
	Let $\Gtop$ be the top $\ETHR$ in $C$ and $G_1,G_2,\dotsc,G_d$ be its input gates. Let $w_i$'s and $T$ be the weights and the threshold of $\Gtop$ and $L(x)$ be the associated linear function, we have
	\[
	L(x) = \sum_{i=1}^{d} w_i \cdot G_i(x)
	\]
	for all input $x \in \{0,1\}^n$.
	
	Now, note that the binary representations of $w_i$'s and $T$ are of length at most $\log(2^{2n}) \le 2n$, and we break them into $D = \left\lceil \frac{\eps \cdot n}{\log d} \right\rceil$ blocks, each with $ B  \le 2/\eps \cdot \log d$ bits. Let $w_{i,j} \in [2^{B} - 1]$ and $T_j$ be the value of $w_i$'s and $T$'s $j$-th block respectively (blocks are numbered from the least significant bit to the most significant bit).
	
	Consider adding up $w_i \cdot G_i(x)$'s in $2^{B}$ base and enumerate all $D - 1$ carries on each position except for the highest one. Let $c_1,c_2,\dotsc,c_{D-1} \in \{0,1,\dotsc,d-1\}^{D-1}$ be such a carry sequence.	We can see $\sum_{i=1}^{d} w_i \cdot G_i(x) = T$ with respect to a carry sequence $c$ is equivalent to that for all $j \in [D]$: 
	\[
	\sum_{i=1}^{d} w_{i,j} \cdot G_i(x) + c_{j-1} = T_{j} + 2^{B} \cdot c_{j},
	\]	
	where we set $C_D$ and $C_0$ to be $0$ for notational convenience.
	
	That is, after fixing $c_j$'s, for all $j$, $\sum_{i=1}^{d} w_{i,j} \cdot G_i(x)$ are also forced to be $T_{j}^{c} = T_{j} + 2^{B} \cdot c_{j} - c_{j-1}$.
	
	Therefore, consider the sum
	\[
	\sum_{j=1}^{\eps \cdot n} \left(\sum_{i=1}^{d} w_{i,j} \cdot G_i(x) - L_{j}^c\right)^2.
	\]
	
	Checking whether this sum $\le 0$ can be formulated as a $\poly(d) \cdot 2^{O(B)} = d^{O(1/\eps)}$ size $\MAJ \circ \AND_2$ sub-circuit, with input gates $G_1,G_2,\dotsc,G_d$.
	
	Moreover, since each addition process only corresponds to one carry sequence, by enumerate all possible carry sequence, we can see the above transform $\Gtop$ into a $\DISJOR \circ \MAJ \circ \AND_2$ sub-circuit with input gates $G_1,G_2,\dotsc,G_d$, with top fan-in:
	\[
	d^{D-1} = d^{O(\eps \cdot n /\log d)} = 2^{O(\eps \cdot n)},
	\] 
	which completes the proof.
\end{proof}

Finally, Structure Lemma II for $\THRTHR$ circuits follows directly from Lemma~\ref{lm:weight-reduction-top-THR} and Lemma~\ref{lm:decompose-top-ETHR}.

\begin{reminder}{Lemma~\ref{lm:structure-THRTHR-II}}
	Let $n$ be number of inputs and $s = s(n)$ be a size parameter. Let $\eps \in \left(\frac{\log s}{n},1\right)$, for $s = 2^{o(n)}$, every $s$-size $\THRTHR$ circuit $C$ is equivalent to a $\DISJOR \circ \MAJ \circ \MAJ$ circuit such that:
	
	\begin{itemize}
		\item The top $\DISJOR$ gate has $2^{ O(\eps n)}$ fan-in.
		\item Each sub $\MAJ \circ \MAJ$ circuit has size $s^{O(1/\eps)}$.
	\end{itemize}
	
	Moreover, the reduction can be computed in \emph{randomized} $2^{O(\eps n)} \cdot s^{O(1/\eps)}$ time. 
\end{reminder}
\begin{proof}
	By Proposition~\ref{prop:circuit-facts-contain} (3), $C$ is equivalent to a $\poly(s)$ size $\THR \circ \ETHR$ circuit $C_1$.
	
	Then we apply Lemma~\ref{lm:weight-reduction-top-THR} to reduce $C_1$ into a $\DISJOR_{\poly(s)} \circ \ETHR \circ \ETHR$ circuit $C_2$, whose second-layer $\ETHR$ gates have positive weights and thresholds smaller than $\poly(s) \cdot 2^{n} < 2^{2n}$.
	
	Next we apply Lemma~\ref{lm:decompose-top-ETHR} to change all second layer $\ETHR$ gates in $C_2$ into a $\DISJOR \circ \MAJ \circ \AND_2$ sub-circuits, with top gate fan-in $2^{O(\eps \cdot n)}$. Putting everything together, and note that $\AND_2 \circ \ETHR$ can still be represented by an $\ETHR$ gate, we obtain a $\DISJOR_{2^{O(\eps \cdot n)}} \circ \MAJ \circ \ETHR$ circuit, in which all $\MAJ \circ \ETHR$ sub-circuits have size at most $s^{O(1/\eps)}$.
	
	Applying Proposition~\ref{prop:circuit-facts-contain} (5) completes our proof. And the running time bound follows from the corresponding time bounds in Lemma~\ref{lm:weight-reduction-top-THR} and Lemma~\ref{lm:decompose-top-ETHR}.
\end{proof}

The following corollary follows directly by setting the parameter $\eps$ carefully in Lemma~\ref{lm:structure-THRTHR-II}.

\begin{cor}\label{cor:structure-THRTHR-2}
	Let $n$ be number of inputs and $s = s(n)$ be a size parameter. Let $\eps \in \left(\frac{\log s}{n},1\right)$, for $s = 2^{o(n)}$, an $s$-size $\THRTHR$ circuit $C$ is equivalent to a $\DISJOR \circ \MAJ \circ \MAJ$ circuit such that:
	
	\begin{itemize}
		\item The top $\DISJOR$ gate has $s^{O(1/\eps)}$ fan-in.
		\item Each sub $\MAJ \circ \MAJ$ circuit has size $2^{O(\eps \cdot n)}$.
	\end{itemize}
	
	Moreover, the reduction can be computed in \emph{randomized} 
	\[
	2^{O(\eps n)} \cdot s^{O(1/\eps)}
	\] 
	time.
\end{cor}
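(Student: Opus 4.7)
The plan is to derive Corollary~\ref{cor:structure-THRTHR-2} as an immediate reparameterization of Lemma~\ref{lm:structure-THRTHR-II}. Observe that in Lemma~\ref{lm:structure-THRTHR-II} the two size parameters of the resulting $\DISJOR \circ \MAJ \circ \MAJ$ circuit are $2^{O(\eps n)}$ (top fan-in) and $s^{O(1/\eps)}$ (sub-circuit size), while in the corollary they are swapped. Since the lemma is valid for every choice of $\eps$ in the range $\left(\tfrac{\log s}{n},1\right)$, the natural attempt is to pick a new parameter $\eps'$ that interchanges the two quantities.

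Concretely, I would set $\eps' := \frac{\log s}{\eps \cdot n}$ and apply Lemma~\ref{lm:structure-THRTHR-II} with parameter $\eps'$. A short calculation shows the mapping works on both sides simultaneously: the top fan-in becomes $2^{O(\eps' n)} = 2^{O(\log s / \eps)} = s^{O(1/\eps)}$, while each sub-circuit has size $s^{O(1/\eps')} = s^{O(\eps n / \log s)} = 2^{O(\eps n)}$. The randomized running time likewise transforms as $2^{O(\eps' n)} \cdot s^{O(1/\eps')} = s^{O(1/\eps)} \cdot 2^{O(\eps n)}$, matching the bound claimed in the corollary.

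The only thing to verify is that the new parameter $\eps'$ indeed lies in the admissible range $\left(\tfrac{\log s}{n},1\right)$ required by Lemma~\ref{lm:structure-THRTHR-II}. From the hypothesis $\eps \in \left(\tfrac{\log s}{n},1\right)$ one checks: $\eps' > \tfrac{\log s}{n}$ is equivalent to $\eps < 1$, and $\eps' < 1$ is equivalent to $\eps > \tfrac{\log s}{n}$, both of which hold. Since $s = 2^{o(n)}$ is also assumed throughout, the lemma applies directly and yields the stated structure.

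There is no substantive obstacle here: the content of the corollary is entirely captured by Lemma~\ref{lm:structure-THRTHR-II}, and the proof is a one-line substitution of parameters. The only minor care is in the arithmetic of the big-$O$ bounds to ensure the exchange is exact and that the admissibility window for $\eps'$ follows from the window for $\eps$, both of which I outlined above.
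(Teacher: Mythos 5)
Your proposal is correct and is exactly the intended argument: the paper gives no explicit proof, saying only that the corollary "follows directly by setting the parameter $\eps$ carefully in Lemma~\ref{lm:structure-THRTHR-II}," and your substitution $\eps' = \frac{\log s}{\eps n}$ (together with the check that $\eps'$ stays in the admissible window) is precisely that careful choice.
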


\subsection{Some Applications}

Finally, we prove these interesting implications of Lemma~\ref{lm:structure-THRTHR-I} and Lemma~\ref{lm:structure-THRTHR-II}. 

The following corollary follows from Lemma~\ref{lm:structure-THRTHR-I} directly.

\begin{reminder}{Corollary~\ref{cor:equivalent-THRTHR-THRMAJ}}
	The following are equivalent:
	\begin{itemize}
		\item The satisfiability of $\THRTHR$ circuits of size $n^k$ can be solved in $2^{n} / n^k$ time for any $k$.
		\item The satisfiability of $\THR\circ\MAJ$ circuits of size $n^k$ can be solved in $2^{n} / n^k$ time for any $k$.
	\end{itemize}
\end{reminder}
\begin{proof}
	We only need to prove the second item implies the first. Suppose the second item holds, given a $\THRTHR$ circuit of size $n^k$, by Lemma~\ref{lm:structure-THRTHR-I}, it can be reduced to an equivalent $\GAPOR \circ \THR \circ \MAJ$ circuit of size $n^{kc}$ for a constant $c$, whose satisfiability can be solved in $2^n / n^{kc}$ time by the first item.
\end{proof}

And the following two corollaries follow from Lemma~\ref{lm:structure-THRTHR-II} directly.

\begin{reminder}{Corollary~\ref{cor:structure-LT}}
	Let $n$ be number of inputs and $s = s(n)$ be a size parameter. Let $\eps \in \left(\frac{\log s}{n},1\right)$ and $d$ be a constant. For $s = 2^{o(n)}$, every $s$-size $\LT_d$ circuit is equivalent to a $\DISJOR \circ \HLT_d$ circuit such that:
	\begin{itemize}
		\item The top $\DISJOR$ gate has $2^{ O(\eps \cdot n) }$ fan-in.
		\item Each sub $\HLT_d$ circuit has size $O\left( s^{O(1/\eps)} \right)$.
	\end{itemize}
\end{reminder}
\begin{proof}
	We apply Lemma~\ref{lm:structure-THRTHR-II} to the top $2$ layers, and then apply Proposition~\ref{prop:circuit-facts-contain} (5) recursively to obtain an equivalent $\DISJOR \circ \HLT_d$ circuit.
\end{proof}

\begin{cor}\label{cor:LT-HLT-equiv}
	For all $d \ge 2$, the following are equivalent:	
	\begin{itemize}
		\item There is a $2^{(1-\Omega(1)) \cdot n}$ time algorithm for satisfiability of polynomial size $\LT_d$ circuits.
		\item There is a $2^{(1-\Omega(1)) \cdot n}$ time algorithm for satisfiability of polynomial size $\HLT_d$ circuits.
	\end{itemize}
\end{cor}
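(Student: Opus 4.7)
The easy direction is that $\HLT_d \subseteq \LT_d$ (every majority gate is a threshold gate with unit weights and appropriate integer threshold), so any $\LT_d$ satisfiability algorithm solves $\HLT_d$ satisfiability with the same running time. The interesting direction is to show that an $\HLT_d$ SAT algorithm running in time $2^{(1-\delta)n}$ for some constant $\delta > 0$ yields an $\LT_d$ SAT algorithm running in time $2^{(1-\Omega(1))n}$.

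My plan is to invoke Corollary~\ref{cor:structure-LT} with a carefully chosen small constant $\eps > 0$. Given a polynomial-size $\LT_d$ circuit $C$ on $n$ inputs (so $s = n^{O(1)}$), the corollary rewrites $C$ as an equivalent $\DISJOR$ of $N = 2^{O(\eps n)}$ sub-circuits $C_1, \ldots, C_N$, each an $\HLT_d$ circuit of size $s^{O(1/\eps)} = n^{O(1/\eps)}$, which remains polynomial in $n$ once $\eps$ is fixed. The key observation is that since the top gate is a $\DISJOR$, $C$ is satisfiable if and only if at least one of the $C_i$'s is satisfiable. Running the hypothesized $\HLT_d$ SAT algorithm on each $C_i$ gives total running time
\[
2^{O(\eps n)} \cdot 2^{(1-\delta) n} \;=\; 2^{(1 - \delta + O(\eps))\, n},
\]
and picking $\eps$ to be a sufficiently small constant compared to $\delta$ makes this $2^{(1-\delta/2) n} = 2^{(1-\Omega(1)) n}$. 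The construction time of the reduction itself, $2^{O(\eps n)} \cdot s^{O(1/\eps)}$, is absorbed into the same bound.

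The one subtlety I expect to require a remark is that Corollary~\ref{cor:structure-LT} is, via Lemma~\ref{lm:structure-THRTHR-II}, a \emph{randomized} reduction, so strictly speaking the above gives a randomized $\LT_d$ SAT algorithm from a (possibly deterministic) $\HLT_d$ SAT algorithm. This is fine if the statement is interpreted in the randomized setting; if deterministic equivalence is desired, one can derandomize the reduction nondeterministically using the same approach the paper employs to apply Williams' connection, observing that the choice of primes in Lemma~\ref{lm:weight-reduction-top-THR} can be certified by a short nondeterministic guess and that SAT for a disjunction of polynomial-size $\HLT_d$ circuits reduces to SAT for a single one. Since the proof boils down to ``a disjunction of $2^{O(\eps n)}$ polynomial-size $\HLT_d$'s'' followed by a union bound over SAT calls, I do not anticipate any other nontrivial obstacle.
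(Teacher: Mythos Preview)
Your proposal is correct and follows essentially the same approach as the paper: invoke the structure lemma to rewrite an $\LT_d$ circuit as a $\DISJOR$ of $2^{O(\eps n)}$ polynomial-size $\HLT_d$ circuits, then run the assumed $\HLT_d$ SAT algorithm on each sub-circuit and choose $\eps$ small enough relative to the savings $\delta$. The paper's proof is terser (it cites Lemma~\ref{lm:structure-THRTHR-II} directly rather than Corollary~\ref{cor:structure-LT}, and does not comment on the randomized nature of the reduction), so your extra remark about derandomization is, if anything, more careful than the original.
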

\begin{proof}
	Suppose we have a $2^{(1-\eps_1) n}$ time algorithm for satisfiability of polynomial size $\HLT_d$ circuits for a constant $\eps_1 > 0$. Let $c$ be the hidden constant in the big-$O$ notation of the fan-in of the top $\DISJOR$ gate in Lemma~\ref{lm:structure-THRTHR-II}. 
	
	We set $\eps = \eps_1 / 2c$ and apply Lemma~\ref{lm:structure-THRTHR-II} to the given $\LT_d$ circuit. We obtain an equivalent $\DISJOR \circ \HLT_d$ circuit with top fan-in $2^{c \eps n} = 2^{\eps_1/2 \cdot n} $ and polynomial size $\HLT_d$ sub-circuits. Then we can apply our algorithm for solving polynomial size $\HLT_d$ to solve the satisfiability of the given $\LT_d$ circuit in $2^{(1-\eps_1/2) \cdot n}$ time, which completes the proof.
\end{proof} 

Note that Corollary~\ref{cor:equivalent-THRTHR-MAJMAJ} is simply a special case of the above Corollary when $d = 2$.

Similarly, the same techniques can be used to derive a structure lemma for $\THR \circ \AND_{k}$ circuits as well.

\begin{reminder}{Corollary~\ref{cor:structure-THR-AND}.}
	Let $n$ be number of inputs and $s = s(n)$ be a size parameter. Let $\eps \in \left(\frac{\log s}{n},1\right)$ and $k$ be a constant. Assuming $s = 2^{o(n)}$, an $s$-size $\THR \circ \AND_k$ circuit is equivalent to a $\DISJOR \circ \MAJ \circ \AND_{2k}$ circuit such that:
	
	\begin{itemize}
		\item The top $\DISJOR$ gate has $2^{ O(\eps \cdot n) }$ fan-in.
		\item Each sub $\MAJ \circ \AND_{2k}$ circuit has size $O\left( s^{O(1/\eps)} \right)$.
	\end{itemize}
	
	The above still holds if we replaced both $\AND_{k}$ and $\AND_{2k}$ by unbounded fan-in $\AND$ gates.
\end{reminder}
\begin{proof}
	We simply apply Lemma~\ref{lm:weight-reduction-top-THR} and Lemma~\ref{lm:decompose-top-ETHR}, and merge each $\AND_2 \circ \AND_{k}$ sub-circuits into a single $\AND_{2k}$ gate.
\end{proof}

Together with Lemma~\ref{lm:AND-to-OR}, the following corollary is evident.

\begin{reminder}{Corollary~\ref{cor:weight-unweight-MAXSAT-equiv}.}
	For any integer $k$, if there is a $2^{(1-\Omega(1)) n}$ time algorithm for polynomial size unweighted $\MAXkSAT[2k]$, then so does polynomial size weighted $\MAXkSAT$.
\end{reminder}
\begin{proof}
	We can use Lemma~\ref{lm:AND-to-OR} to transform the bottom $\AND$ gates to $\OR$ gates for $\THR\circ\AND$ and $\MAJ\circ\AND$ circuits, and then the proof are exactly the same as in Corollary~\ref{cor:LT-HLT-equiv}.
\end{proof}


\section{Shaving Logs from $\ell_2$-Furthest Pair Implies $\THRTHR$ Lower Bound}

In this section we show shaving logs on $\ell_2$-Furthest Pair or other related problems would have exciting circuit lower bound consequence. 

We first show that slightly faster satisfiability algorithm for $\THRTHR$ implies circuit lower bound against $\THR \circ \THR$. Note that this is not obvious as $\THRTHR$ circuits are not trivially closed under intersection, while we have to solve satisfiability for an $\AND$ of $3$ $\THRTHR$ circuits faster.

\begin{lemma}\label{lm:lowb-THRTHR}
	If there is an algorithm solving the satisfiability of $\THRTHR$ circuits of size $n^k$ in $2^{n} / n^k$ time for any $k$, then $\NEXP$ has no polynomial size $\THRTHR$ circuits.
\end{lemma}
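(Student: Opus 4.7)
The plan is to invoke Theorem~\ref{theo:NEXP-lowb-poly} with the circuit class $\ckt = \THR\circ\THR$, which is trivially closed under projections. So it suffices to design, from the hypothesised $O(2^n/n^k)$-time SAT algorithm for $\THR\circ\THR$ circuits of size $n^k$ (for every $k$), an $O(2^n/n^k)$-time SAT algorithm for $h = g_1 \wedge g_2 \wedge g_3$, where each $g_i$ is a $\THR\circ\THR$ circuit of size $n^k$ on $n + O(\log n)$ inputs.

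First I would apply Structure Lemma~\ref{lm:structure-THRTHR-I} deterministically to each $g_i$, writing it as an equivalent $\GAPOR \circ \THR \circ \MAJ$ circuit with top fan-in $T = \poly(n^k)$ and $\THR\circ\MAJ$ sub-circuits $D_{i,j}$ of size $\poly(n^k)$. Because this yields a genuine equivalence (not merely a promise-consistent representation), $h$ is satisfiable if and only if there exist indices $(j_1,j_2,j_3) \in [T]^3$ and an input $x$ with $D_{1,j_1}(x) \wedge D_{2,j_2}(x) \wedge D_{3,j_3}(x) = 1$. Accordingly, I would enumerate over all $T^3 = n^{O(k)}$ triples and, for each one, decide SAT of the AND of three $\THR\circ\MAJ$ sub-circuits.

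The crux, which I expect to be the main obstacle, is to convert each such AND-of-three $\THR\circ\MAJ$ question into a single SAT query on a polynomial-size $\THR\circ\THR$ circuit, since general $\THR\circ\THR$ is \emph{not} known to be closed under pairwise (let alone three-wise) intersection. The extra leverage here is that after applying Lemma~\ref{lm:structure-THRTHR-I} the sub-circuits belong to the restricted class $\THR\circ\MAJ$: the bottom $\MAJ$ gates have unit weights, so every bottom linear form is polynomial-magnitude and integer-valued. This additional structure should let one stack all three sub-circuits' bottom $\MAJ$ gates into a common middle layer of $\THR$ gates of the new circuit, add a few auxiliary $\THR$ gates to encode the three individual top-threshold conditions, and then realise the three-way AND with a single carefully chosen top $\THR$ gate, producing a $\THR\circ\THR$ circuit of size $\poly(n^k)$ in polynomial time.

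Once this conversion is in hand, I invoke the assumed $\THR\circ\THR$ SAT algorithm on each of the $n^{O(k)}$ polynomial-size circuits over $n + O(\log n)$ inputs; each call runs in $O(2^{n+O(\log n)}/n^{k'})$ time for any $k'$ of our choosing, so the total running time is $O(n^{O(k)} \cdot 2^n / n^{k'})$, which is $O(2^n/n^k)$ once $k'$ is taken sufficiently large in terms of $k$. This verifies the hypothesis of Theorem~\ref{theo:NEXP-lowb-poly} and yields the claimed $\NEXP \not\subseteq \THR\circ\THR$ lower bound.
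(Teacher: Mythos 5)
Your outer structure is right (invoke Theorem~\ref{theo:NEXP-lowb-poly}, decompose each $g_i$ into an OR of simpler depth-two circuits, enumerate over the polynomially many triples of branches, and call the assumed SAT algorithm with a sufficiently large exponent $k'$), but the step you yourself flag as the crux is a genuine gap, and as described it is false. After stacking the bottom $\MAJ$ gates of $D_{1,j_1},D_{2,j_2},D_{3,j_3}$ into a common middle layer, the function you need the top gate to compute is $\bigwedge_{i=1}^{3}\left[\sum_j w_{i,j} M_{i,j}(x) \ge t_i\right]$, i.e.\ an $\AND$ of three threshold functions \emph{of the middle-layer outputs}. This is not in general a threshold function of those outputs: already $[\ell \ge t] \wedge [-\ell \ge -t] = [\ell = t]$ shows that an $\AND$ of two thresholds of the same linear forms computes an exact threshold, which is not a linear threshold function. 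The small bottom weights ($\MAJ$ versus $\THR$) give no leverage here, since the obstruction lives entirely at the top level. Nor can you push the "three individual top-threshold conditions" into auxiliary middle-layer gates: each such condition is itself a $\THR\circ\MAJ$ function of $x$, not a single $\THR$ gate on $x$, so it is not a legal middle-layer gate of a $\THR\circ\THR$ circuit.

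The paper routes around exactly this obstruction by using exact thresholds instead of your Structure Lemma I decomposition: by Proposition~\ref{prop:circuit-facts-contain}~(3), each top $\THR$ gate is rewritten as a $\DISJOR$ of $\ETHR$ gates, giving $\AND_3\circ\DISJOR\circ\ETHR\circ\ETHR$; the $\AND_3$ and the $\DISJOR$ are then commuted (enumerating triples of branches, as you do), and the key closure property $\AND\circ\ETHR\subseteq\ETHR$ (Proposition~\ref{prop:circuit-facts-contain}~(8)) collapses the three-way $\AND$ into a single top $\ETHR$ gate — this is precisely the closure that $\THR$ lacks. Finally $\ETHR\circ\ETHR\subseteq\THR\circ\THR$ (Proposition~\ref{prop:circuit-facts-contain}~(7)) puts each branch back into the class the hypothesised algorithm handles. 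If you replace your use of Lemma~\ref{lm:structure-THRTHR-I} and the single-top-gate claim with this $\ETHR$ detour, the rest of your argument (the $n^{O(k)}$-fold enumeration and the choice of $k'$) goes through unchanged.
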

\begin{proof}
	From Theorem~\ref{theo:NEXP-lowb-poly}, we have to devise a $2^{n} / \log^{\omega(1)} n$ time algorithm for solving $\AND_3 \circ \THR \circ \THR$ circuits of size $s = n^{k}$ with $n' = n + O(\log n)$ inputs.
	
	Given such a circuit $C$, we first apply Proposition~\ref{prop:circuit-facts-contain} (3) to transform it into a $\poly(s)$ size $\AND_3 \circ \DISJOR \circ \ETHR \circ \ETHR$ circuit $C'$.
	
	Note that we can switch the order of $\DISJOR$ and $\AND_3$, by treating the first as addition and the second as multiplication. Then $C'$ is equivalent to another $\DISJOR \circ \ETHR \circ \ETHR$ circuit $C''$ of $\poly(s)$ size.
	
	Finally, solving $C''$ can be completed by solving $\poly(s)$ $\ETHR \circ \ETHR$ sub-circuits, and note that $\ETHR \circ \ETHR \subseteq \THR \circ \THR$ (Proposition~\ref{prop:circuit-facts-contain} (7)), hence using the algorithm from the assumption completes the proof.
\end{proof}
\begin{rem}\label{rem:co-NTIME-work-too}
	By Remark~\ref{rem:coNTIME-works-prelim}, the consequence also holds if the algorithm in Lemma~\ref{lm:lowb-THRTHR} is co-nondeterministic.
\end{rem}

\newcommand{\LEQ}{\textsf{LEQ}}

\begin{lemma}\label{lm:THRTHR-to-WMaxIP}
	If there is an algorithm solving $\WMaxIP_{n,\log^k n}$ or $\Hopcroft_{n,\log^k n}$ in $n^2 / \log^k(n)$ time for any integer $k$, then $\NEXP$ has no polynomial size $\THRTHR$ circuits.
\end{lemma}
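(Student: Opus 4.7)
The plan is to invoke Lemma~\ref{lm:lowb-THRTHR} (which, by Remark~\ref{rem:co-NTIME-work-too}, accepts a co-nondeterministic algorithm): it suffices to design a SAT algorithm for $\THR \circ \THR$ circuits of size $n^k$ running in $2^n / n^k$ time for every $k$. The construction will be a standard \emph{split-and-list} reduction: write $x = (y,z)$ with $|y|=|z|=n/2$, enumerate the $N := 2^{n/2}$ partial assignments on each side to produce two sets of vectors, and encode satisfiability as a bichromatic geometric query on these vectors in $d = \log^{O(k)} N$ dimension.

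For the $\WMaxIP$ branch, I first apply Structure Lemma~\ref{lm:structure-THRTHR-I} to rewrite the $\THR \circ \THR$ circuit as a $\GAPOR$ over $\poly(n^k)$ many $\THR \circ \MAJ$ subcircuits, each of size $\poly(n^k)$; the original is satisfiable iff some subcircuit is. For a fixed subcircuit $C(x) = \bigl[\sum_{i=1}^m w_i \MAJ_i(x) \ge T\bigr]$, split each $\MAJ_i(x) = [L_i^y(y) + L_i^z(z) \ge t_i]$, where $L_i^y, L_i^z$ are $\{0,1\}$-weight partial sums taking values in $\{0,\dots,n\}$. The key identity
\[
[L_i^y(y) + L_i^z(z) \ge t_i] \;=\; \sum_{k=0}^{n} [L_i^y(y) = k]\cdot[L_i^z(z) \ge t_i - k]
\]
(where exactly one term contributes) presents the top linear form as a bilinear form $\sum_{(i,k)} w_i \cdot u(y)_{(i,k)} v(z)_{(i,k)}$ on $\{0,1\}$-vectors of dimension $m(n{+}1) = n^{O(k)} = \log^{O(k)} N$, so $C$ is SAT iff the corresponding $\WMaxIP$ value is $\ge T$. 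Invoking the assumed $N^2/\log^c N$ algorithm with $c$ taken larger than $k$ plus the subcircuit-count exponent yields the target $2^n/n^k$ bound.

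For the $\Hopcroft$ branch, I instead mimic the proof of Lemma~\ref{lm:lowb-THRTHR} to reduce $\THR \circ \THR$ SAT to SAT of $\poly(n^k)$ many $\ETHR \circ \ETHR$ subcircuits via Proposition~\ref{prop:circuit-facts-contain}(3) and layer swapping. For one such subcircuit $\bigl[\sum_i w_i [L_i(x) = t_i] = T\bigr]$, I apply Lemma~\ref{lm:mod-p-ETHR} twice --- once at the top to bring the $w_i$ and $T$ into a $\poly(n)$ range, and once to each bottom $\ETHR$ so that the values $L_i^y(y), L_i^z(z)$ may be replaced by residues modulo a $\poly(n)$-sized prime $p$; the random prime is eliminated by nondeterministic guessing, which is permitted by Remark~\ref{rem:co-NTIME-work-too}. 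The analogous identity
\[
[L_i^y(y) + L_i^z(z) \equiv t_i \pmod p] \;=\; \sum_{a=0}^{p-1} [L_i^y(y) \equiv a]\cdot[L_i^z(z) \equiv t_i - a \pmod p]
\]
again writes the top form as $u(y) \cdot v(z)$ over integer vectors of $\log^{O(k)} N$ dimension, and appending one extra coordinate pair $(1,-T)$ converts ``weighted sum $= T$'' into an orthogonality test, i.e., a $\Hopcroft$ instance on $N$ vectors of dimension $\log^{O(k)} N$.

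The main technical obstacle is the $\Hopcroft$ branch: the bottom $\ETHR$ gates can carry exponentially large weights, so a naive one-hot encoding would blow the dimension past polylogarithmic. The remedy is the prime-modulus trick of Lemma~\ref{lm:mod-p-ETHR}, whose randomness is absorbed into the nondeterminism licensed by Remark~\ref{rem:co-NTIME-work-too}. Translating ``weighted inner product equals $T$'' into a single orthogonality test via one extra coordinate is the last step that makes the restrictive $\Hopcroft$ formulation (which only detects zero inner products) sufficient to decide an exact equality-to-$T$ condition.
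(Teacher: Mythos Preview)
Your $\WMaxIP$ branch is correct and is essentially the paper's argument: both reduce (via Structure Lemma~\ref{lm:structure-THRTHR-I}, equivalently Corollary~\ref{cor:equivalent-THRTHR-THRMAJ}) to SAT of polynomially many $\THR\circ\MAJ$ subcircuits, then split-and-list with a one-hot encoding of the bounded partial sums of each $\MAJ$ gate.

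Your $\Hopcroft$ branch, however, has a genuine gap. You reduce to $\ETHR\circ\ETHR$ subcircuits and then apply the mod-$p$ trick to the \emph{bottom} $\ETHR$ gates with $\poly(n)$-sized primes in order to keep the one-hot dimension polylogarithmic. But replacing a bottom gate $G_i$ by $G_i^{p}$ can only flip outputs from $0$ to $1$, and since the \emph{top} gate is an $\ETHR$ (an equality test), the one-sided-error property $C(x)=1\Rightarrow C'(x)=1$ no longer holds: the altered input pattern to the top gate need not satisfy the top equality anymore. Consequently your guess-then-solve procedure is not a sound co-nondeterministic algorithm---when $C$ is satisfiable, a bad prime choice can render $C'$ unsatisfiable and cause a wrong accept. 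Taking primes of size $\approx 2^{n}$ (as in Lemma~\ref{lm:weight-reduction-top-THR}) would restore exact equivalence on all inputs, but then your one-hot encoding over residues has dimension $\Theta(p)=2^{\Theta(n)}$, far beyond polylogarithmic. Either way the argument breaks.

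The paper avoids this entirely by reusing the $\THR\circ\MAJ$ reduction for the $\Hopcroft$ branch as well: apply Proposition~\ref{prop:circuit-facts-contain}(3) only to the \emph{top} gate to obtain $\DISJOR\circ\ETHR\circ\MAJ$, and handle each $\ETHR\circ\MAJ$ subcircuit separately. Because the bottom gates are $\MAJ$, the partial sums $X_j(x),Y_j(y)$ already lie in $\{0,\dots,s\}$, so the one-hot dimension is $\poly(n)$ with no mod-$p$ trick and no nondeterminism; the (possibly large) top weights $W_j$ are simply absorbed into the integer coordinates of one side's vectors, and a single extra coordinate pair $(1,-T)$ turns the exact-equality test into an orthogonality test. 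The entire reduction is deterministic.
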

\begin{proof}
	We consider $\WMaxIP$ first. Suppose there is such an algorithm for $\WMaxIP$. By Corollary~\ref{cor:equivalent-THRTHR-THRMAJ} and Lemma~\ref{lm:lowb-THRTHR}, we only need to devise an algorithm for the satisfiability of $\THR\circ\MAJ$ circuits of size $n^k$ in $2^n / n^k$ time for all $k$. We do so by reducing the satisfiability problem for $\THR \circ \MAJ$ circuits to $\WMaxIP$ or $\Hopcroft$.
		
	For simplicity, we assume $n$ is even. Let $C$ be a $\THR\circ\MAJ$ circuit of size $s = n^k$ and $G$ be its top $\THR$ gate. Let $W_1,W_2,\dotsc,W_{s}$, $T$ and $L$ be the weights, threshold and associate linear function of $G$. Let $G_1,G_2,\dotsc,G_{s}$ be the corresponding $\MAJ$ gates on the bottom layers. We use $L_1,L_2,\dotsc,L_{s}$ and $T_{1},T_{2},\dotsc,T_{s}$ to denote their associated linear functions and thresholds.
	
	For each $x,y \in \{0,1\}^{n/2}$, we interpret $x$ and $y$ as an assignment to the first half and second half of the input to $C$ respectively.
	
	For each linear functions $L_{j}$, we use $X_j(x)$ and $Y_j(y)$ to denote the contribution from $x$ and $y$ respectively. We have
	\[
	G_{j}(x,y) := \left[ X_j(x) + Y_j(y) \ge T_j \right].
	\]
	
	Note that since each $G_j$ has at most $s$ wires, and therefore $0 \le X_j(x), Y_j(y) \le s$. So we now define $u_j(x),v_j(y) \in \{0,1\}^{s + 1}$, such that $(u_j(x))_i = 1$ iff $i = X_j(x)$ and $(v_j(y))_i = 1$ iff $i + Y_j(y) \ge T_j$. Then we have $G_j(x,y) = u_j(x) \cdot v_j(y) $. Now we set
	\[
	u(x) := \circ_{j=1}^{s} u_j(x) \quad v(y) := \circ_{j=1}^s v_j(x) \quad w := \circ_{j=1}^s W_j^{\otimes(s+1)}.
	\]
	
	It is easy to see that $u(x) \odot_{w} v(y) = L(x,y)$. Therefore, computing the maximum of $u(x) \odot_w v(y)$ for all $x,y \in \{0,1\}^{n/2}$ solves the problem, which can be reduced to a $\WMaxIP_{2^{n/2},\poly(n)}$ instance. The proof is completed by applying the algorithm for $\WMaxIP$ in the assumption.
	
	The reduction to $\Hopcroft$ works roughly the same, with the only modification that we transform the $\THR\circ\MAJ$ circuit into an equivalent $\DISJOR\circ\ETHR\circ\MAJ$ at the beginning (via Proposition~\ref{prop:circuit-facts-contain} (3)), and solve each $\ETHR \circ \MAJ$ sub-circuits separately via a similar reduction to $\Hopcroft$.
\end{proof}

Now we are ready to prove Theorem~\ref{theo:Hopcroft-THRTHR-NEXP}.

\begin{reminder}{Theorem~\ref{theo:Hopcroft-THRTHR-NEXP}}
	If any of the following problems has an $n^{2} \poly(d) / \log^{\omega(1)} n$ time deterministic algorithm for polylogarithmic $d$, then $\NEXP$ has no polynomial size $\THR \circ \THR$ circuits:
	
	\begin{enumerate}
		
		\item $\Hopcroft_{n,d}$ (Hopcroft's Problem): Find an orthogonal pair among $n$ points in $\mathbb{Z}^d$.
		
		\item $\FPtwo_{n,d}$:  Find the $\ell_2$-furthest pair among $n$ points in $\mathbb{R}^d$.
		
		\item $\BCPtwo_{n,d}$: Given two set $A,B$ of $n$ points in $\mathbb{R}^{d}$, compute $\min_{(a,b) \in A \times B} \|a - b\|_2$.
		
		\item $\IntMaxIP_{n,d}$: Given two sets $A,B$ of $n$ vectors from $\mathbb{Z}^{d}$, compute $\max_{(a,b) \in A \times B} a \cdot b$.
		
		\item $\WMaxIP_{n,d}$: Given a weight vector $w \in \mathbb{Z}^{d}$ and two sets $A,B$ of $n$ vectors from $\{0,1\}^{d}$, compute $\max_{(a,b) \in A \times B} \sum_{i=1}^{d} w_i \cdot a_i \cdot b_i $.
	\end{enumerate}
\end{reminder} 

\begin{proof}
	The cases of $\WMaxIP_{n,d}$ and $\Hopcroft_{n,d}$ follow directly from Lemma~\ref{lm:THRTHR-to-WMaxIP}. 
	
	And the cases for other problems follow from the fact there are efficient reductions from $\Hopcroft_{n,d}$ to all of them~\cite{Wil18} (see also Theorem~4.3, Lemma~4.5 and Lemma~4.6 of~\cite{Che18} for explicit reductions).
\end{proof} 
\section{Shaving Logs from Approximate $\BCPtwo$ Implies $\SYMTHR$ Lower Bound}

In this section we establish circuit lower bound consequences from shaving logs on Approximate \BCPtwo\ or other related problems.

We need the following Lemma first, whose proof is deferred to the end of this section.

\begin{lemma}\label{lm:SYMSYM-to-Max-IP}
	Given a size $s$ $\AND_3 \circ \SYM \circ \SYM$ circuit $C$, there is a $2^{n/2} \poly(s)$ time algorithm reducing it into $s^{3}$ $\MaxIP_{2^{n/2},O(s^2 n^2)}$ instances.
\end{lemma}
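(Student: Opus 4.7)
The plan is to apply the standard split-and-list paradigm: enumerate $x \in \{0,1\}^{n/2}$ and $y \in \{0,1\}^{n/2}$ as the two halves of the input and, for each choice of target output sums $(k_1^*, k_2^*, k_3^*)$ for the three top SYM gates, produce one MaxIP instance that detects whether some $(x,y)$ simultaneously realizes all three equalities. Write each of the three sub-circuits as $h_i(S_i(x,y))$, where $S_i(x,y) = \sum_j G_{i,j}(x,y)$ and each bottom SYM gate decomposes as $G_{i,j}(x,y) = f_{i,j}(X_{i,j}(x) + Y_{i,j}(y))$ into its $x$-side and $y$-side contributions. The $\AND_3 \circ \SYM \circ \SYM$ is satisfiable iff there exist $k_i^* \in \{0,\ldots,s\}$ with $h_i(k_i^*) = 1$ and some $(x,y)$ with $S_i(x,y) = k_i^*$ for all $i$, which gives at most $(s+1)^3 = O(s^3)$ triples to iterate over.

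For a fixed triple I would use the 1-hot encoding $U_i(x)_{j,k} = [X_{i,j}(x)=k]$ and $V_i(y)_{j,k} = f_{i,j}(k + Y_{i,j}(y))$ exactly as in the proof of Lemma~\ref{lm:THRTHR-to-WMaxIP} so that $\langle U_i(x), V_i(y)\rangle = S_i(x,y)$. The key observation that reduces the conjunction of three equalities between inner products to a single MaxIP query is that the quadratic potential
\[ \Phi(x,y) := \sum_{i=1}^3 \left( 2 k_i^* S_i(x,y) - S_i(x,y)^2 \right) \]
is pointwise at most $\sum_i (k_i^*)^2$, with equality iff $S_i(x,y) = k_i^*$ for every $i$ (since $2k^*S - S^2 = -(S-k^*)^2 + (k^*)^2$). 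Thus it suffices to build Boolean vectors $u(x), v(y) \in \{0,1\}^{O(s^2 n^2)}$ with $\langle u(x), v(y)\rangle = \Phi(x,y) + C$ for a universal constant $C$, and then check whether the MaxIP value equals $\sum_i (k_i^*)^2 + C$.

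Building such vectors requires simulating a signed quadratic as a Boolean inner product, which I expect to be the main technical step. Split $\Phi = \Phi^+ - \Phi^-$ with $\Phi^+ = 2\sum_i k_i^* S_i$ and $\Phi^- = \sum_i S_i^2 = \sum_{i,j,j'} G_{i,j}(x,y)\,G_{i,j'}(x,y)$. Handle $\Phi^+$ by concatenating $2k_i^*$ copies of each pair $(U_i, V_i)$, which contributes dimension $O(s^2 n)$. For $\Phi^-$ I would introduce tensor coordinates
\[ u^-_{i,j,j',k,k'}(x) = [X_{i,j}(x)=k \wedge X_{i,j'}(x)=k'], \quad v^-_{i,j,j',k,k'}(y) = f_{i,j}(k + Y_{i,j}(y))\,f_{i,j'}(k' + Y_{i,j'}(y)), \]
whose inner product equals $\Phi^-(x,y)$ and whose total dimension is $3 s^2 (n/2+1)^2 = O(s^2 n^2)$.

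The main obstacle is the minus sign: MaxIP demands a nonnegative Boolean inner product, so I cannot directly subtract $\Phi^-$. The trick I expect to use is the following Boolean-complement identity combined with a rigidity property of $u^-$: within each $(i,j,j')$-block of $u^-(x)$ exactly one coordinate is $1$ (the one indexed by the actual values of $X_{i,j}(x)$ and $X_{i,j'}(x)$), so $|u^-(x)| = 3 s^2$ is a \emph{universal} constant independent of $x$. Consequently, replacing $v^-$ by its bit complement $\overline{v^-}$ gives $\langle u^-(x), \overline{v^-}(y)\rangle = 3 s^2 - \Phi^-(x,y)$, a genuine Boolean inner product. Concatenating the $\Phi^+$ vectors with this complemented $\Phi^-$ piece produces final vectors $u(x), v(y) \in \{0,1\}^{O(s^2 n^2)}$ with $\langle u(x), v(y) \rangle = \Phi(x,y) + 3s^2$, whose maximum equals $\sum_i (k_i^*)^2 + 3s^2$ exactly when the triple is realizable. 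Enumerating $x, y$ and writing down all $O(s^3)$ instances takes $2^{n/2}\poly(s)$ time.
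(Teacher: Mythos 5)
Your proposal is correct and follows essentially the same route as the paper: the same enumeration of the $O(s^3)$ target values of the three top symmetric gates (the paper phrases this via $\SYM \subseteq \DISJOR \circ \EMAJ$), the same one-hot splitting of each bottom $\SYM$ gate into $x$- and $y$-side contributions, and the same quadratic penalty $(S_i-k_i^*)^2$ realized as a Boolean inner product by tensoring plus a complementation trick that exploits the fixed Hamming weight of the one-hot $x$-side vectors. The only cosmetic difference is bookkeeping: the paper complements the linear term of the expanded square (its Lemma~\ref{lm:encoding-trick}) and then flips the resulting ``$\ge M$'' condition into ``$\le M$'', whereas you complement the tensored quadratic term directly; both yield dimension $O(s^2n^2)$.
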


To prove Theorem~\ref{theo:Max-IP-NEXP}, we first show the following reductions from $\MaxIP$.

\begin{lemma}\label{lm:simple-lemma-1}
	Let $n,d$ be two integers and $\eps = 1/10d$, a $\MaxIP_{n,d}$ instances can be reduced to:
	\begin{itemize}
		\item $(1+\eps)$-approximation to $\BCPone_{n}$.
		\item $(1+\eps)$-approximation to $\BCPtwo_{n}$.
	\end{itemize}
\end{lemma}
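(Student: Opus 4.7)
The plan is to use a padding-plus-bucketing reduction that is standard in fine-grained complexity. First, I partition the $\MaxIP_{n,d}$ instance into $(d+1)^2$ subproblems, indexed by pairs $(i,j)$ with $0 \le i, j \le d$, where the subproblem $(i,j)$ contains only vectors $a \in A$ with $\|a\|_1 = i$ and $b \in B$ with $\|b\|_1 = j$. Computing $\max_{(a,b) \in A \times B} a \cdot b$ reduces to solving each such subproblem and taking the maximum.

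Within the $(i,j)$-subproblem, I define the augmented vectors $u(a) = a \circ \bar{a} \in \{0,1\}^{2d}$ and $v(b) = b \circ \bar{b} \in \{0,1\}^{2d}$, where $\bar{a}$ denotes the bitwise negation of $a$. Note $\|u(a)\|_1 = \|v(b)\|_1 = d$. A direct calculation then yields
\[
\|u(a)-v(b)\|_1 \;=\; 2(i+j) \;-\; 4\,(a\cdot b),
\]
so minimizing the $\ell_1$ distance in the bucket is equivalent to maximizing $a \cdot b$ in the bucket. Since $a \cdot b \in \mathbb{Z}$, the optimal distance $D^{*} = 2(i+j) - 4M^{*}$ is separated from any suboptimal distance in the bucket by at least $4$. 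Using $D^{*} \le 4d$ together with $\eps = 1/(10d)$, we have $\eps\,D^{*} \le 2/5 < 4$, so any value $D$ returned by a $(1+\eps)$-approximate $\BCPone$ oracle satisfies $D^{*} \le D \le (1+\eps)D^{*} < D^{*} + 4$, which uniquely determines $M^{*}$ by $M^{*} = \lceil (2(i+j) - D)/4 \rceil$. Solving the $(d+1)^2$ approximate $\BCPone_{n}$ instances and returning the largest recovered inner product completes the reduction.

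For $\BCPtwo$ I use the same padded vectors and exploit that on $\{0,1\}$ coordinates $\|u-v\|_2^2 = \|u-v\|_1$. A $(1+\eps)$-approximation to $\|u-v\|_2$ therefore yields a $(1+\eps)^2$-approximation to the quantity $2(i+j)-4(a\cdot b)$. Since $(1+\eps)^2 - 1 \le 3\eps = 3/(10d) < 1/d$ for $d \ge 1$, we still get $((1+\eps)^2 - 1)\,D^{*} < 4d/d = 4$, and the integer-separation argument again pins down $M^{*}$ exactly.

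I do not expect any real obstacle: both the one-vs-negation padding and the weight bucketing are completely standard. The only thing to track is the arithmetic showing that the constant $1/10$ in $\eps = 1/(10d)$ leaves enough slack to absorb both the worst-case value $D^{*} \le 4d$ and the $(1+\eps)^2$ loss incurred when moving from $\ell_2$ distance to squared $\ell_2$ distance.
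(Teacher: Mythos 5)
Your proof is correct, but it takes a genuinely different route from the paper's. The paper produces a \emph{single} approximate-$\BCPone$ (resp.\ $\BCPtwo$) instance by norm-equalizing padding: for $\ell_2$ it appends the real coordinates $\sqrt{d-\|x\|_2^2}$ and $\sqrt{d-\|y\|_2^2}$ in disjoint slots so that $\|p_x-q_y\|_2^2 = 2(d - x\cdot y)$ exactly, and for $\ell_1$ it appends disjoint blocks of $d-\|x\|_1$ (resp.\ $d-\|y\|_1$) ones so that $\|p_x-q_y\|_1 = 2(d - x\cdot y)$; the distance then depends only on $x\cdot y$, so no bucketing is needed and the integrality/separation argument is applied once. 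You instead equalize norms via the complement trick $a\circ\bar a$, which leaves a residual dependence on $\|a\|_1+\|b\|_1$ in the distance formula, and you remove it by partitioning into $(d+1)^2$ Hamming-weight buckets, giving a Turing reduction with $(d+1)^2$ oracle calls rather than a many-one reduction. Your arithmetic (separation $4$ between achievable distances, $\eps D^* \le 2/5$, and the $(1+\eps)^2$ loss for $\ell_2$) is right, and the extra $(d+1)^2 = \polylog(n)$ factor is harmless in both Theorem~\ref{theo:Max-IP-NEXP} and Theorem~\ref{theo:Max-IP-ENP}, where $d$ is polylogarithmic in the number of vectors; the paper's version is just tighter (one call, and it keeps the vectors Boolean of length $O(d)$ rather than $2d$ per bucket). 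The only bookkeeping you should make explicit is how to handle buckets that are empty on one side and the fact that the oracle is invoked on instances with at most $n$ points per side.
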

\begin{proof}
	Given a $\MaxIP_{n,d}$ instance with two sets $A, B \subseteq \{0,1\}^d$. We first consider Item (2). For each $x \in A$ and $y \in B$, we create two points $p_x$ and $q_y$ in $\mathbb{R}^{d+2}$, such that
	\[
	p_x = \left(x,\sqrt{d - \|x\|_2^2},0\right) \text{, } q_y = \left(y,0,\sqrt{d - \|y\|_2^2}\right).
	\]
	
	We have
	\[
	\|p_x - q_y\|_2^2 = \|p_x\|_2^2 + \|q_y\|_2^2 - 2 \cdot (p_x \cdot p_y) = 2 \cdot ( d - x \cdot y ).
	\]
	
	Note that a $(1+\eps)$-approximation to $\min_{(x,y) \in A \times B} \|p_x - q_y\|_2$ imply a $(1+\eps)^2$-approximation to $\min_{(x,y) \in A \times B} \|p_x - q_y\|_2^2 =  \min_{(x,y) \in A \times B} 2 \cdot (d - x \cdot y)$. Note that $\eps = 1/10d$, we can determine $\max_{(x,y) \in A \times B} x \cdot y$ from the above approximation immediately, which completes the reduction to Item (2).
	
	For Item (1), we begin by setting up some notations. For $t \in [d]$, we use $e^{[t]}$ to denote the Boolean vector with first $t$ coordinates being $1$, and the rest being $0$. Recall that for two vectors $a,b$, we use $a \circ b$ to denote their concatenation.
	
	For each $x \in A$ and $y \in B$, we create two points $p_x,q_y \in \{0,1\}^{3d}$, such that
	\[
	p_x = x \circ e^{[d - \|x\|_1]} \circ e^{[0]}, q_y = y \circ e^{[0]} \circ e^{[d - \|y\|_1]}.
	\]
	
	Note that for each $p_x$ and $q_y$, there are exactly $d$ coordinates with value $1$. Also, note that their inner product $p_x \cdot q_y$ corresponds to the number of coordinates on which they are both $1$. We have
	\[
	\|p_x - q_y\|_1 = 2 \cdot (d - p_x \cdot p_y).
	\]
	
	Therefore, a $(1+\eps)$-approximation to $\min_{(a,b) \in A \times B} \|p_x - q_y\|_1 = \min_{(a,b) \in A \times B} 2 \cdot (d - p_x \cdot p_y)$ would be enough to solve the given $\MaxIP$ instance, which complete the proof for Item (1).
\end{proof}

Now we are ready to prove Theorem~\ref{theo:Max-IP-NEXP} (restated below).

\begin{reminder}{Theorem~\ref{theo:Max-IP-NEXP}}
	If any of following problems has an $n^{2} \poly(d) / \log^{\omega(1)} n$ time deterministic algorithm for polylogarithmic $d$, then $\NEXP$ has no polynomial size $\SYM \circ \THR$ circuits:
	
	\begin{enumerate}
		\item $\MaxIP_{n,d}$: Given two sets $A,B$ of $n$ vectors from $\{0,1\}^{d}$, compute $\max_{(a,b) \in A \times B} a \cdot b$.
		
		\item Compute a $(1 + 1/\log^{\omega(1)} n)$-approximation to $\BCPtwo_{n}$.
		
		\item Compute a $(1 + 1/\log^{\omega(1)} n)$-approximation to $\BCPone_{n}$.
	\end{enumerate}
\end{reminder}
\begin{proof}
	By Lemma~\ref{lm:simple-lemma-1}, we only need to consider Item (1) here. Note that by Proposition~\ref{prop:circuit-facts-contain} (4), we just need to consider polynomial size $\SYM\circ\SYM$ circuit.
	
	By Theorem~\ref{theo:NEXP-lowb-poly}, we need to show the satisfiability problem for polynomial size $ \AND_3 \circ \SYM\circ\SYM$ circuits with $n + O(\log n)$ inputs can be solved in $2^n / n^{\omega(1)}$ time. With Lemma~\ref{lm:SYMSYM-to-Max-IP}, it can be reduced to polynomial many $\MAX_{2^{n/2 + O(\log n)},\poly(n)}$ instance, apply our algorithm from Item (1), the needed $2^n / n^{\omega(1)}$ time algorithm follows directly.
\end{proof}

Finally, we prove Theorem~\ref{theo:Max-IP-ENP}, which gives more refined circuit lower bounds consequences.

\begin{reminder}{Theorem~\ref{theo:Max-IP-ENP}}
	Suppose for some a real $k > 2$, one of the following algorithms exists:
	
	\begin{enumerate}
		\item An $n^2 / \log^{\omega(1)} n$ time algorithm for $\MaxIP_{n, \log^k n}$.
		
		
		\item A $(1 + 1/\log^k n)$-approximation algorithm for $\BCPone_{n}$ in $n^2 / \log^{\omega(1)} n$ time.
		
		\item A $(1 + 1/\log^k n)$-approximation algorithm for $\BCPtwo_{n}$ in $n^2 / \log^{\omega(1)} n$ time.
	\end{enumerate}
	
	Then $\ENP$ has no $n^{(k-2)/2 - \eps_1}$ size $\SYM \circ \SYM$ circuit for any $\eps_1 > 0$.
\end{reminder}
\begin{proof}
	Let $\eps_1 > 0$, by Theorem~\ref{theo:ENP-lowb}, it suffices to show that the satisfiability of $s = n^{(k-2)/2 - \eps_1}$ size $\AND_3 \circ \SYM \circ \SYM$ circuits with $n' = n + O(\log n)$ inputs can be solved in $2^{n} / n^{\omega(1)}$ time.
	
	We consider Item (1) first. By Lemma~\ref{lm:SYMSYM-to-Max-IP}, in $2^{n'/2} \poly(s) = 2^{n/2} \poly(s,n)$ time, the aforementioned problem can be reduced to $s^{3}$ instances of $\MAX_{2^{n'/2},O(s^2n'^2)}$. Note that $s^2 n'^2 \le n^{k - \eps_1}$.
	
	Therefore, applying the algorithm for $\MaxIP_{n,c \log n}$, these $s^3$ instances of $\MAX_{2^{n'/2},n^{k - \eps_1}}$ can be solved in
	\[
	s^3 \cdot \left( 2^{n/2} \cdot \poly(n) \right)^2 / n^{\omega(1)} = 2^{n} / n^{\omega(1)}
	\]
	time, which completes the proof for Item (1). Applying Lemma~\ref{lm:simple-lemma-1} and proceed similarly, the claim for the other two cases can also be established.
	
	
\end{proof}

\subsection{Proof of Lemma~\ref{lm:SYMSYM-to-Max-IP}}
To prove Lemma~\ref{lm:SYMSYM-to-Max-IP}, we introduce two simple lemmas first.

\begin{lemma}\label{lm:encoding-trick}
	There are two functions $\psirevx,\psirevy : \{0,1\}^* \to \{0,1\}^*$ such that for all integer $d$ and $x,y \in \{0,1\}^d$, we have $\psirevx(x),\psirevy(y) \in \{0,1\}^{2d}$ and $\psirevx(x) \cdot \psirevy(y) = d - x \cdot y$.
\end{lemma}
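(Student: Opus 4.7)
The plan is to define the two encoding functions coordinate-wise and verify the inner product identity by a direct computation. For each bit $x_i$, I will map it to a $2$-bit block depending only on $x_i$, and similarly for $y_i$; then the full encodings will just be concatenations of these $d$ blocks, automatically giving length $2d$, and the inner product will decompose as a sum of the $d$ per-coordinate contributions.

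Concretely, the goal is to find fixed bit-pairs $(\alpha_0(x_i),\alpha_1(x_i))$ and $(\beta_0(y_i),\beta_1(y_i))$ in $\{0,1\}^2$ such that
\[
\alpha_0(x_i)\,\beta_0(y_i) + \alpha_1(x_i)\,\beta_1(y_i) \;=\; 1 - x_i y_i
\]
holds for all four choices of $(x_i,y_i)\in\{0,1\}^2$. The identity $1 - x_i y_i = (1-x_i)\cdot 1 + x_i\cdot(1-y_i)$ suggests the asymmetric choice $\alpha(x_i)=(1-x_i,\,x_i)$ and $\beta(y_i)=(1,\,1-y_i)$, which I would verify by inspecting the four cases. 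I would then define $\psirevx(x)$ as the concatenation $\alpha(x_1)\circ\alpha(x_2)\circ\cdots\circ\alpha(x_d)$ and $\psirevy(y)$ as $\beta(y_1)\circ\beta(y_2)\circ\cdots\circ\beta(y_d)$, both lying in $\{0,1\}^{2d}$ as required.

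Finally, summing the per-coordinate identity yields
\[
\psirevx(x)\cdot\psirevy(y) \;=\; \sum_{i=1}^{d}\bigl(1 - x_i y_i\bigr) \;=\; d - x\cdot y,
\]
which is exactly the claimed identity. There is no real obstacle here; the only design choice is picking the asymmetric encoding above rather than a symmetric one, since a symmetric $2$-bit encoding cannot realize $1-x_i y_i$ as the inner product of two length-$2$ Boolean vectors. Once that asymmetry is in place, the construction and verification are both immediate.
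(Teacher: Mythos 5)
Your proposal is correct and is essentially identical to the paper's proof: your encodings $\alpha(x_i)=(1-x_i,\,x_i)$ and $\beta(y_i)=(1,\,1-y_i)$ are exactly the maps $\varphi_x,\varphi_y$ used there, and the verification by summing the per-coordinate identity $1-x_iy_i$ over concatenated blocks is the same. No issues.
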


\begin{proof}
	We define two functions $\varphi_x,\varphi_y : \{0,1\} \to \{0,1\}^2$ such that:
	\[
	\varphi_x(0) := (1,0),\quad \varphi_x(1) := (0,1),\quad \varphi_y(0) := (1,1),\quad \varphi_y(1) := (1,0).
	\]
	
	It is easy to check that for $a,b \in \{0,1\}$, $a \cdot b = 1 - \varphi_x(a) \cdot \varphi_y(b)$. Then, for $x,y \in \{0,1\}^{d}$, we define $\varphi_x(x) \in \{0,1\}^{2d}$ as the concatenation of $\varphi_x(x_i)$ for each $i \in [d]$, and similarly define $\varphi_y(y) \in \{0,1\}^{2d}$ as the concatenation of $\varphi_y(y_i)$ for each $i \in [d]$.
	
	Then we can see $\psirevx(x) \cdot \psirevy(y) = \sum_{i=1}^{d} \varphi_x(x_i) \cdot \varphi_y(y_i) =  d - x \cdot y$.
\end{proof}

\begin{lemma}\label{lm:micro-reduction-MIN}
	For all integers $d$ and $0 \le m \le d$, there are two mappings $\varphi^x_{d,m},\varphi^y_{d,m} : \{0,1\}^{d} \to \{0,1\}^{O(d^2)}$ and an integer $M_{d,m}$, such that for all $x,y \in \{0,1\}^d$:
	
	\begin{itemize}
		\item If $x \cdot y = m$, then $\varphi^x_{d,m}(x) \cdot \varphi^y_{d,m}(y) = M_{d,m}$.
		\item Otherwise, $\varphi^x_{d,m}(x) \cdot \varphi^y_{d,m}(y) > M_{d,m}$.
	\end{itemize}
	
\end{lemma}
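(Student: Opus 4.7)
The plan is to engineer the encodings so that
\[
\varphi^x_{d,m}(x) \cdot \varphi^y_{d,m}(y) \;=\; (x \cdot y - m)^2 + M_{d,m}
\]
for some integer $M_{d,m}$ depending only on $d$ and $m$. Since $(x \cdot y - m)^2$ is a nonnegative integer that vanishes exactly when $x \cdot y = m$ and is at least $1$ otherwise, both required bullet points follow immediately.

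Expanding $(x \cdot y - m)^2 = (x \cdot y)^2 - 2m(x \cdot y) + m^2$, the construction breaks into three pieces. For the quadratic term $(x \cdot y)^2$ I would use the outer-product trick: take $x \otimes x, y \otimes y \in \{0,1\}^{d^2}$, so that $(x \otimes x) \cdot (y \otimes y) = \sum_{i,j} x_i x_j y_i y_j = (x \cdot y)^2$. For the negative linear term, which cannot be realized directly by an inner product of Boolean vectors, I would invoke Lemma~\ref{lm:encoding-trick} to obtain Boolean vectors $\psirevx(x), \psirevy(y) \in \{0,1\}^{2d}$ with $\psirevx(x) \cdot \psirevy(y) = d - x \cdot y$; concatenating $2m$ copies yields a contribution $2m(d - x \cdot y) = 2md - 2m(x \cdot y)$, which gives exactly the desired $-2m(x \cdot y)$ up to a constant shift. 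The remaining additive constant $m^2$ is absorbed into $M_{d,m}$.

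Concretely, I would define
\[
\varphi^x_{d,m}(x) \;:=\; (x \otimes x) \circ \psirevx(x)^{\otimes 2m}, \qquad \varphi^y_{d,m}(y) \;:=\; (y \otimes y) \circ \psirevy(y)^{\otimes 2m},
\]
each in $\{0,1\}^{d^2 + 4md} \subseteq \{0,1\}^{O(d^2)}$ since $m \le d$. A direct calculation then gives
\[
\varphi^x_{d,m}(x) \cdot \varphi^y_{d,m}(y) \;=\; (x \cdot y)^2 + 2m\bigl(d - x \cdot y\bigr) \;=\; (x \cdot y - m)^2 + \bigl(2md - m^2\bigr),
\]
so setting $M_{d,m} := 2md - m^2$ closes the argument. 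I do not foresee a significant obstacle here: the only ingredient beyond Lemma~\ref{lm:encoding-trick} is the outer-product representation of $(x \cdot y)^2$ as a Boolean inner product, and the choice of the coefficient $2m$ on the repeated block is forced by having to match the linear term of $(x \cdot y - m)^2$.
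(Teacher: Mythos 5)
Your proposal is correct and follows essentially the same route as the paper: the same expansion of $(x\cdot y - m)^2$, the same outer-product encoding of $(x\cdot y)^2$, the same use of Lemma~\ref{lm:encoding-trick} repeated $2m$ times for the linear term, and the same constant $M_{d,m}=2md-m^2$. (Your dimension count $d^2+4md$ is in fact slightly more careful than the paper's.)
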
 
\begin{proof}
	We remark the reduction here is essentially the same as the trick used in~\cite{Wil18}. For a vector $v \in \{0,1\}^*$, we use $v^{\otimes k}$ to denote the concatenation of $k$ copies of $v$. 
	
	Consider the following polynomial $P(x,y) := (x \cdot y - m)^2$, we have
	\[
	P(x,y) = (x \cdot y)^2 - 2  m (x \cdot y) + m^2 = (x \cdot y)^2 + 2  m (d - x \cdot y ) + m^2 - 2dm.
	\]
	
	For $x,y \in \{0,1\}^{d}$, we construct $\WT{x},\WT{y} \in \{0,1\}^{d^2}$ such that $\WT{x}_{i} = x_{ \lfloor (i-1) /d \rfloor + 1 }$ and $\WT{y}_{i} = - y_{(i \bmod{d}) + 1}$. Then we can see $\WT{x} \cdot \WT{y} = \sum_{i=1}^{d}\sum_{j=1}^d x_i \cdot y_j = (x \cdot y)^2$.  Let $\psirevx$ and $\psirevy$ be the two functions from Lemma~\ref{lm:encoding-trick}. For $x,y \in \{0,1\}^d$, we define
	\[
	\varphi^x_{d,m}(x) := (\WT{x},\psirevx(x)^{\otimes (2m)}) 
	\qquad\text{ and }\qquad 
	\varphi^y_{d,m}(y) := (\WT{y},\psirevx(y)^{\otimes (2m)}).
	\]
	
	Then we have $\varphi^x_{d,m}(x) \cdot \varphi^y_{d,m}(y) = (x \cdot y)^2 + 2  m (d - x \cdot y ) = P(x,y) + 2dm - m^2$. And we set $M_{d,m} = 2dm - m^2$.
	
	Now, if $x \cdot y = m$, we have $P(x,y) = 0$, and therefore $\varphi^x_{d,m}(x) \cdot \varphi^y_{d,m}(y) = M_{d,m}$. Otherwise, $x \cdot y \ne m$ and we have $P(x,y) > 0$, and hence $\varphi^x_{d,m}(x) \cdot \varphi^y_{d,m}(y) > M_{d,m}$. 
	
	Finally, note that $\varphi^x_{d,m}(x),\varphi^y_{d,m}(y) \in \{0,1\}^{d^2 + 2dm}$, which completes the proof.
\end{proof}

The following corollary follows directly from composing the reductions in Lemma~\ref{lm:micro-reduction-MIN} and Lemma~\ref{lm:encoding-trick}.
\begin{cor}\label{cor:micro-reduction-MAX}
	For all integers $d$ and $0 \le m \le d$, there are two mappings $\varphi^x_{d,m},\varphi^y_{d,m} : \{0,1\}^{d} \to \{0,1\}^{O(d^2)}$ and an integer $M_{d,m}$, such that for all $x,y \in \{0,1\}^d$:
	
	\begin{itemize}
		\item If $x \cdot y = m$, then $\varphi^x_{d,m}(x) \cdot \varphi^y_{d,m}(y) = M_{d,m}$.
		\item Otherwise, $\varphi^x_{d,m}(x) \cdot \varphi^y_{d,m}(y) < M_{d,m}$.
	\end{itemize}	
\end{cor}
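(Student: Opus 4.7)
The plan is to compose the two reductions already proved. Lemma~\ref{lm:micro-reduction-MIN} supplies a ``MIN-style'' encoding whose inner product equals a target $M_{d,m}$ on the distinguished event $x\cdot y = m$ and is \emph{strictly larger} otherwise, while Lemma~\ref{lm:encoding-trick} supplies a way to turn an inner product $u\cdot v$ into $D - u\cdot v$ (for $u,v$ of length $D$) by applying the coordinate-wise maps $\psirevx,\psirevy$, doubling the dimension. Composing the second on top of the first will flip the inequality and give exactly the ``MAX-style'' version demanded by Corollary~\ref{cor:micro-reduction-MAX}.

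Concretely, I would proceed as follows. First, invoke Lemma~\ref{lm:micro-reduction-MIN} on $d$ and $m$ to obtain mappings $\alpha^x_{d,m},\alpha^y_{d,m}:\{0,1\}^d\to\{0,1\}^{D}$ with $D=O(d^2)$ and a target value $M'_{d,m}$ such that $\alpha^x_{d,m}(x)\cdot\alpha^y_{d,m}(y)=M'_{d,m}$ when $x\cdot y=m$ and is $>M'_{d,m}$ otherwise. Second, apply Lemma~\ref{lm:encoding-trick} to the resulting $D$-dimensional Boolean vectors, defining
\[
\varphi^x_{d,m}(x):=\psirevx\!\left(\alpha^x_{d,m}(x)\right),\qquad \varphi^y_{d,m}(y):=\psirevy\!\left(\alpha^y_{d,m}(y)\right),
\]
so that both lie in $\{0,1\}^{2D}=\{0,1\}^{O(d^2)}$ and
\[
\varphi^x_{d,m}(x)\cdot\varphi^y_{d,m}(y) \;=\; D - \alpha^x_{d,m}(x)\cdot\alpha^y_{d,m}(y).
\]
Finally, set $M_{d,m}:=D-M'_{d,m}$. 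In the case $x\cdot y=m$, the right-hand side is exactly $D-M'_{d,m}=M_{d,m}$; in the complementary case, since $\alpha^x_{d,m}(x)\cdot\alpha^y_{d,m}(y)>M'_{d,m}$, we obtain $\varphi^x_{d,m}(x)\cdot\varphi^y_{d,m}(y)<D-M'_{d,m}=M_{d,m}$, which is precisely the desired MAX-style guarantee.

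There is essentially no obstacle here: the argument is a clean one-line composition, and the dimension bound $O(d^2)$ is preserved up to a factor of two by $\psirevx,\psirevy$. The only thing to double-check is that the strict inequality survives the subtraction, but this is immediate because $z\mapsto D-z$ is order-reversing on integers. Hence Corollary~\ref{cor:micro-reduction-MAX} follows directly from Lemma~\ref{lm:micro-reduction-MIN} and Lemma~\ref{lm:encoding-trick}.
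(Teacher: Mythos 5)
Your proof is correct and is exactly the paper's intended argument: the paper states that Corollary~\ref{cor:micro-reduction-MAX} ``follows directly from composing the reductions in Lemma~\ref{lm:micro-reduction-MIN} and Lemma~\ref{lm:encoding-trick},'' which is precisely your composition $\varphi^x_{d,m}=\psirevx\circ\alpha^x_{d,m}$, $\varphi^y_{d,m}=\psirevy\circ\alpha^y_{d,m}$ with $M_{d,m}=D-M'_{d,m}$. Your write-up in fact spells out the order-reversal and dimension bookkeeping that the paper leaves implicit.
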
 

Now we are ready to prove Lemma~\ref{lm:SYMSYM-to-Max-IP} (restated below).

\begin{reminder}{Lemma~\ref{lm:SYMSYM-to-Max-IP}.}
	Given an $\AND_3 \circ \SYM_{s} \circ \SYM$ circuit $C$, there is a $2^{n/2} \poly(s)$ time algorithm reducing it into $s^{3}$ $\MaxIP_{n,O(s^2 n^2)}$ instances.
\end{reminder}
\begin{proof}
	For simplicity, we assume $n$ is even throughout the proof. By Proposition~\ref{prop:circuit-facts-contain} (1),we can transform $C$ into an $\AND_3 \circ \DISJOR_{s} \circ \EMAJ_s \circ \SYM $ circuit $C'$, which can be in turn transformed into a $\DISJOR_{s^3} \circ \AND_3 \circ \EMAJ_s \circ \SYM$ circuit $C''$.
	
	Then, for each $\AND_3 \circ \EMAJ_s \circ \SYM$ sub-circuit $D$ of $C''$, we reduce it into a $\MaxIP_{n,O(s^2 n^2)}$ instance. For $j \in [3]$, let $D_{j}$ be the $j$-th $\EMAJ_s \circ \SYM$ sub-circuit of $D$, and let $G_1,G_2,\dotsc,G_s$ be all the $s$ $\SYM$ gates in $D_{j}$, and let $f_1,f_2,\dotsc,f_{s}$ be their corresponding functions. Let $T_j$ be the threshold of the top $\EMAJ$ gate of $D_j$.
	
	For each $x,y \in \{0,1\}^{n/2}$, we interpret $x$ and $y$ as an assignment to the first half and second half of the input to $D$ respectively. We use $X_{i}(x)$ and $Y_i(y)$ to denote the contribution of $x$ and $y$ to gate $G_i$ respectively. Then we have
	\[
	G_i(x,y) = f_i(X_i(x) + Y_i(y)).
	\]
	
	Now, for an integer $t \in \{0,1,\dotsc,n\}$ and a function $f : \{0,1,\dotsc,n\} \to \{0,1\}$, we define two mappings $\psi_x^f(t),\psi_y^f(t) \in \{0,1\}^{n}$, such that
	\[
	\psi_x^f(t)_i = \begin{cases}
	1 &\quad \text{$i = t$}\\
	0 &\quad \text{otherwise}
	\end{cases}
	\quad\text{and}\quad
	\psi_y^f(t)_i = \begin{cases}
	1 &\quad \text{$f(i+t) = 1$}\\
	0 &\quad \text{otherwise.}
	\end{cases}
	\]
	
	Then we can see for two integers $a,b \in \{0,1,\dotsc,n\}$, $\psi_x^f(a) \cdot \psi_y^f(b) = f(a+b)$.
	
	Now, for each $x,y \in \{0,1\}^{n/2}$ ,we define
	\[
	\psi^j_x(x) := \circ_{i=1}^{s} \psi_x^{f_i}(X_i(x)) \quad\text{and}\quad \psi^j_y(y) := \circ_{i=1}^{s} \psi_y^{f_i}(Y_i(y)).
	\]
	
	Therefore, we have
	\[
	\psi^j_x(x) \cdot \psi^j_y(y) = \sum_{i=1}^{s} \psi_x^{f_i}(X_i(x)) \cdot \psi_y^{f_i}(Y_i(y)) = \sum_{i=1}^{s} G_i(x,y),
	\]
	and consequently
	\[
	D_j(x,y) = \left[ \psi^j_x(x) \cdot \psi^j_y(y) = T_j \right].
	\]
	
	Note that $\psi^j_x(x),\psi^j_y(y) \in \{0,1\}^{sn}$. In order to compute the $\AND$ of $D_1$,$D_2$ and $D_3$, we make use of Corollary~\ref{cor:micro-reduction-MAX}, consider
	
	\[
	\psi_x(x) := \circ_{j=1}^{3} \left( \varphi^{x}_{sn,T_j} (\psi^j_x(x)) \right) \quad\text{and}\quad \psi_y(y) := \circ_{j=1}^{3} \left( \varphi^y_{sn,T_j}(\psi^j_y(y)) \right).
	\]
	
	Let $M =\sum_{j=1}^{3} M_{sn,T_j} $. From Corollary~\ref{cor:micro-reduction-MAX}, note that $\psi_x(x) \cdot \psi_y(y) = M$ if $D_1(x,y) \wedge D_2(x,y) \wedge D_3(x,y)$, and $\psi_x(x) \cdot \psi_y(y) < M$ otherwise. Therefore, let $A$ be the set of all $\psi_x(x)$'s for $x \in \{0,1\}^{n/2}$, and $B$ be the set of all $\psi_y(y)$'s for $y \in \{0,1\}^{n/2}$. We can see $A,B$ form a $\MaxIP_{n,O(s^2 n^2)}$ instance and $\MAX(A,B) = M$ if and only if $D$ is satisfiable.
	
	Therefore, by reducing all $O(s^3)$ $\AND_3 \circ \EMAJ_s \circ \SYM$ sub-circuits of $C''$ into $\MaxIP_{n,O(s^2 n^2)}$ instances, we solve the satisfiability problem for the equivalent $\AND_3 \circ \SYM_s \circ \SYM$ circuit $C$. This completes the proof.
\end{proof}
\section{Shaving Logs from Modest Dimension $\MaxIP$ Implies $\THRTHR$ Lower Bound}

In this section we prove Theorem~\ref{theo:Max-IP-NEXP} (restated below).

\begin{reminder}{Theorem~\ref{theo:MaxIP-THRTHR-NEXP}}
	If any of the following deterministic algorithms exists, then $\NEXP$ has no polynomial-size $\THRTHR$ circuits:
	
	\begin{enumerate}
		\item An algorithm solving $\MaxIP_{n,n^\eps}$ in $n^{2} / \log^{\omega(1)}(n)$ time, for a constant $\eps > 0$.
		
		\item An algorithm solving $\MaxIP_{n,\log^k(n)}$ in $n^{2-\eps}$ time for a constant $\eps > 0$ and any integer $k$.
	\end{enumerate}
\end{reminder}
\begin{proof}
	We first consider Item (2). We want to apply Lemma~\ref{lm:structure-THRTHR-II} to simplify the given $\THRTHR$ circuit. However, the problem here is that Lemma~\ref{lm:structure-THRTHR-II} only implies a \emph{randomized} reduction, preventing us from applying Lemma~\ref{lm:lowb-THRTHR}, as that needs a \emph{deterministic} algorithm.
	
	Fortunately, by Remark~\ref{rem:co-NTIME-work-too}, we only need to come up with a co-nondeterministic algorithm. That is, we want a nondeterministic algorithm which decides whether a $\THRTHR$ circuit of size $n^k$ is unsatisfiable in $2^{n} / n^k$ time for every integer $k$.
	
	In the following we derandomize the construction in Lemma~\ref{lm:structure-THRTHR-II} using nondeterminism. Given a $\THRTHR$ circuit $C$ of size $s = n^k$, we also construct its negation $D = \neg C$, with the same size $s$.
	
	Let $\eps_1 > 0$ be a constant to be specified later. We apply the reduction of Lemma~\ref{lm:structure-THRTHR-II} to both $C$ and $D$, and guess all random primes needed nondeterministically alone the way, which takes 
	\[
	2^{O(\eps_1n)} \cdot s^{O(1/\eps_1)}
	\]
	nondeterministic time.
	
	After that, we get two $\DISJOR \circ \MAJ \circ \MAJ$ circuits $C'$ and $D'$, whose top $\DISJOR$ gates have fan-in $2^{O(\eps_1 n)} \cdot \poly(s) = 2^{O(\eps_1 \cdot n)} $, and each sub $\MAJ \circ \MAJ$ circuit has size $s^{O(1/\eps_1)}$. We have to verify that $C$ and $D$ are indeed equivalent to $C'$ and $D'$. We claim that holds if and only if $C' \wedge D'$ are unsatisfiable.
	
	One direction is straightforward. If they are equivalent correspondingly, then $D'$ is the negation of $C'$ too, and $C' \wedge D'$ are unsatisfiable.

	For the other direction, note that the reduction of Lemma~\ref{lm:structure-THRTHR-II} only introduces \emph{one-sided} error (Remark~\ref{rem:one-sided-error}). That is, for all possible guess and $x \in \{0,1\}^{n}$, when $C(x) = 1$, we must have $C'(x) = 1$. And the same holds for $D(x)$ and $D'(x)$. Therefore, suppose $C$ is not equivalent to $C'$ (the case for $D$ and $D'$ is similar), it must be the case that there is an $x$ such that $C(x) = 0$ while $C'(x) = 1$. Since $C(x) = 0$, we have $D(x) = 1$ and therefore $D'(x) = 1$, which means $(C ' \wedge D')(x) = 1$, completes the proof of the claim.
	
	Note that $C' \wedge D'$ is an $\AND_2 \circ \DISJOR \circ \MAJ \circ \MAJ$ circuit. We can switch the order of $\AND_2$ and $\DISJOR$ by treating them as multiplication and addition respectively, and obtain an equivalent $\DISJOR \circ \AND_2 \circ \MAJ \circ \MAJ$ circuit, with top-fan in $2^{O(\eps_1 \cdot n)}$ and sizes of its $\MAJ \circ \MAJ$ sub-circuits unchanged.
	
	Applying Lemma~\ref{lm:SYMSYM-to-Max-IP}, the satisfiability of an $\AND_2 \circ \MAJ \circ \MAJ$ circuit can be reduced to $\poly(s^{O(1/\eps_1)}) = n^{O(k/\eps_1)}$ $\MaxIP_{2^{n/2},n^{O(k/\eps_1)}}$ instances. Therefore, by choosing $\eps_1$ small enough comparing to $\eps$, we can obtain a $2^{(1-\eps/2) \cdot n}$ time algorithm for the satisfiability of $C' \wedge D'$ from the algorithm in Item (2).
	
	Finally, we reject immediately if we find $C' \wedge D'$ is satisfiable. Otherwise, we know $C'$ is equivalent to $C$, using the same argument we can obtain a $2^{(1-\eps/2) \cdot n}$ time algorithm for the satisfiability of $C'$. We accept only if $C'$ is unsatisfiable.
	
	It is not hard to see the above algorithm solves the unsatisfiability problem of $\THRTHR$ circuits of size $n^k$ in $2^{(1-\eps/2) \cdot n}$ nondeterministic time for any integer $k$, which completes the proof.
	
	The case for Item (1) are roughly the same, except for that we apply Corollary~\ref{cor:structure-THRTHR-2} instead.
\end{proof}

\section{$\MAXSAT$}


In this section we show that slightly better exact algorithms for $\MAXSAT$ would have interesting circuit lower bound consequences. 

To prove Theorem~\ref{theo:MAX-SAT-NEXP} and Theorem~\ref{theo:MAX-SAT-ENP}, we first establish a simple reduction from a $\SYMAND$ circuit to an equivalent $\OR\circ\MAJ\circ \OR$ circuit. 

\begin{lemma}\label{lm:reduction}
	A size $s$ $\SYMAND$ circuit is equivalent to a $\poly(s)$ size $\OR \circ \MAJ \circ \OR$ circuit. Moreover, the latter circuit can be constructed in $\poly(s)$ time.
\end{lemma}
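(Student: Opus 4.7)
The plan is to push the top $\SYM$ gate outward first using Proposition~\ref{prop:circuit-facts-contain}(1), which gives $\SYM \circ \AND \subseteq \DISJOR \circ \EMAJ \circ \AND \subseteq \OR \circ \EMAJ \circ \AND$. It then remains to show that each inner $\EMAJ \circ \AND$ sub-circuit with threshold $t$ on $m$ AND-gate inputs $A_1,\ldots,A_m$ is equivalent to a single $\poly(s)$-size $\MAJ \circ \OR$; combining these via the outer $\OR$ yields the desired $\OR \circ \MAJ \circ \OR$ form.

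The heart of the argument is to express $[\sum_{j=1}^m A_j = t]$ via the degree-$2$ polynomial identity $(S - t)^2 \le 0$, where $S = \sum_j A_j$. Since $A_j^2 = A_j$, this expands to the weighted threshold
\[
(2t - 1) \sum_{j=1}^m A_j - 2 \sum_{j<k} A_j A_k \ge t^2.
\]
The main obstacle is the negative coefficient on the quadratic terms: a $\MAJ$ gate requires positive unit weights. I overcome it by rewriting $A_j A_k = 1 - (\neg A_j \vee \neg A_k)$; since $A_j$ and $A_k$ are ANDs of literals, $\neg A_j \vee \neg A_k$ merges into a single OR of literals. The linear term still involves ANDs, which I eliminate by applying Lemma~\ref{lm:AND-to-OR} to write each $A_j = \sum_\ell O_{j,\ell} - (k_j - 1)$ for explicit ORs of literals $O_{j,\ell}$. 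Absorbing all constants into the right-hand side, I obtain
\[
(2t - 1) \sum_{j, \ell} O_{j,\ell} + 2 \sum_{j<k} (\neg A_j \vee \neg A_k) \ge T
\]
for an explicit integer $T$, with all positive integer coefficients and only ORs of literals on the left. Duplicating each $O_{j,\ell}$ by $2t-1$ copies and each $\neg A_j \vee \neg A_k$ by $2$ copies, then padding with constant-$0$ and constant-$1$ inputs to set the MAJ threshold to exactly $T$, yields a single $\MAJ \circ \OR$ gate.

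The case $t = 0$ needs separate treatment since $(2t-1) = -1$; but directly, $[S = 0] = [\sum_j \neg A_j \ge m]$ is already a $\MAJ$ of ORs. For the size analysis, each sub-circuit has MAJ fan-in $O(s^3)$ and each bottom OR fan-in $O(s)$; since the outer $\DISJOR$ has at most $s+1$ sub-circuits, the total size is $\poly(s)$ and every step is deterministic and explicit in $\poly(s)$ time.
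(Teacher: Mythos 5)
Your proposal is correct and follows the same overall route as the paper: first apply Proposition~\ref{prop:circuit-facts-contain}(1) to get an $\OR \circ \EMAJ \circ \AND$ circuit, then convert each $\EMAJ \circ \AND$ sub-circuit into a $\MAJ \circ \OR$ circuit. The only difference is in the middle step: the paper invokes $\EMAJ \subseteq \MAJ \circ \AND_2$ (Proposition~\ref{prop:circuit-facts-contain}(10)) as a black box and then applies Lemma~\ref{lm:AND-to-OR} uniformly to all resulting $\AND$ gates, whereas you re-derive that containment inline via the $(S-t)^2 \le 0$ identity and handle the negatively-weighted quadratic terms by De Morgan ($A_jA_k = 1 - (\neg A_j \vee \neg A_k)$, which collapses to a single $\OR$ of literals) while using Lemma~\ref{lm:AND-to-OR} only for the linear terms. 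Your version is self-contained and equally valid (including the separate $t=0$ case and the $\poly(s)$ size accounting), so this is a presentational rather than a substantive divergence.
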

\begin{proof}
	Let $C'$ be the given $\SYMAND$ circuit of size $s$. We can first transform $s$ to an equivalent $\poly(s)$ size $\OR \circ \EMAJ \circ \AND$ circuit $C$ with top fan-in $s$, since $\SYM_s \subseteq \OR_s \circ \EMAJ_s$ (Proposition~\ref{prop:circuit-facts-contain} (1) ).
	
	Now, let $C_1,C_2,\dotsc,C_s$ be all the $\EMAJ \circ \AND$ sub-circuits of $C$. Since $\EMAJ_s \subseteq \MAJ_{O(s^2)} \circ \AND_2$ (Proposition~\ref{prop:circuit-facts-contain} (10)), each $C_i$ is consequently equivalent to a $\poly(s)$ size $\MAJ \circ \AND$ circuit $D_i$.
	
	Let $E_1,E_2,\dotsc,E_{t}$ be all the $\AND$ gates in $D_i$, by Lemma~\ref{lm:AND-to-OR}, supposing $E_{j}$ acts on $k$ variables, we can construct $k$ $\OR$ gates $O_1,O_2,\dotsc,O_k$, such that
	\[
	\sum_{i=1}^{k} O_{i}(x) = (k-1) + E_j(x). 
	\]
	
	Therefore, $D_i$ can be reduced to an equivalent $\MAJ \circ \OR$ circuit, and the proof is completed.
\end{proof}

Now we are ready to prove Theorem~\ref{theo:MAX-SAT-NEXP} and Theorem~\ref{theo:MAX-SAT-ENP} (restated below).

\begin{reminder}{Theorem~\ref{theo:MAX-SAT-NEXP}}
	If there is an algorithm for $\MAXSAT$ solving an instance with $2^{\log^k n}$ clauses in $2^{n} / 2^{\log^k n}$ time for every integer $k$. Then $\NEXP$ has no quasi-polynomial size $\SYMAND$ circuit.
\end{reminder}
\begin{proof}
	By Theorem~\ref{theo:NEXP-lowb-quasi-poly}, it suffices to show that the satisfiability of $2^{\log^k n}$ size $\AND_3 \circ \SYMAND$ circuits with $n+O(\log n)$ inputs can be solved in $2^{n - \log^k n}$ time for any $k$. 
	
	By Proposition~\ref{prop:circuit-facts-contain} (9), a size $s = 2^{\log^k n}$ size $\AND_3 \circ \SYMAND$ circuit can be transformed into a $\poly(s)$ size $\SYMAND$ circuit, which can in turn be transformed to a $\poly(s) = 2^{O(\log^k n)}$ size $\OR \circ \MAJ \circ \OR$ circuit by Lemma~\ref{lm:reduction}. Note that by our hypothesis, we have an algorithm solving $\MAXSAT$ with $2^{\log^{k'} n}$  clauses in $2^{n - \log^{k'} n}$ time for any $k'$, and this algorithm can be used to solve the satisfiability for $\MAJ \circ \OR$ sub-circuits. 
	
	Therefore, the satisfiability of a $2^{\log^k n}$ size $\AND_3 \circ \SYMAND$ circuit can be solved in $2^{n + O(\log n)} / 2^{\log^{k'} n } \cdot 2^{O(\log^k n)} \le 2^{n - \log^{(k'-1)} n}$ time for large enough $k'$. Then the proof is completed by applying Theorem~\ref{theo:NEXP-lowb-quasi-poly}.
\end{proof}

\begin{reminder}{Theorem~\ref{theo:MAX-SAT-ENP}}
	If there is a $2^{n - \Omega(n/\log m)}$ time algorithm for $\MAXSAT$ with $m$ clauses. Then $\ENP$ does not have $2^{o(\sqrt{n})}$-size $\SYMAND$ circuits.
\end{reminder}
\begin{proof}
	By Theorem~\ref{theo:ENP-lowb}, it suffices to show that the satisfiability of $s = 2^{o(\sqrt{n})}$ size $\AND_3 \circ \SYMAND$ circuits with $n+O(\log n)$ inputs can be solved in $2^{n} /n^{\omega(1)}$ time.
	
	Again, with the same step in the proof of Theorem~\ref{theo:MAX-SAT-NEXP}, this size $s$ $\AND_3 \circ \SYMAND$ circuit can be transformed into an equivalent $\poly(s) = 2^{o(\sqrt{n})}$ size $\OR \circ \MAJ \circ \OR$ circuit. With our $\MAXSAT$ algorithm, the satisfiability of the latter circuit can be decided in
	\[
	2^{n + O(\log n) - n/o(\sqrt{n}) + o(\sqrt{n})} = 2^{n - \omega(\sqrt{n})}
	\]
	time, which completes the proof.
\end{proof}

\section{$\KSAT$}



We need the following Lemma from~\cite{abboud2017more}.

\begin{lemma}[Lemma 4.8 in~\cite{abboud2017more}]\label{lm:TC-to-KSAT}
	There is a polynomial-time many-one reduction from \textsf{TC-SAT} to \textsf{CNF-SAT} that, given $\eps \in (0,1)$ and a depth-$d$ threshold circuit with at most $cn$ wires, with $c \ge 1$, produces a $k\textsf{-CNF}$ formula $\varphi$ on at most $(1+\eps) n$ variables and with
	\[
	k \le (2000 (c/\eps) \log(2c/\eps))^{d} + 1.
	\] 
\end{lemma}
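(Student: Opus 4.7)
The plan is to give a recursive, Tseitin-style construction that converts the depth-$d$, $cn$-wire threshold circuit into a $k$-CNF on $(1+\eps)n$ variables. The key parameter is a block size $B := 2000(c/\eps)\log(2c/\eps)$, chosen so that $cn/B \le \eps n / \log(2c/\eps)$. The recursion processes the circuit one layer at a time: at each layer, every threshold gate is replaced by a short CNF gadget, but only a carefully chosen sparse subset of intermediate gate outputs is named by auxiliary variables; all other gate outputs are inlined into the clauses of their parents.

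First I would handle a single threshold gate of fan-in $w \le cn$ with weights $a_1,\dots,a_w$ and threshold $T$. I would partition its inputs into $\lceil w/B \rceil$ blocks of size $B$, introduce $O(\log(Bw))$ auxiliary bits per block encoding the block's weighted partial sum in binary, and emit two kinds of CNF clauses: (i) a gadget of $2^{O(B)}$ clauses of width $O(B)$ enforcing that each block's auxiliary bits equal the correct partial sum (a function of $B$ literals), and (ii) a gadget adding up the partial sums to check the threshold. The total number of auxiliary variables per gate is $O((w/B)\log w)$; summing over all gates and using $\sum_g w_g \le cn$ gives at most $O((cn/B)\log(cn)) \le \eps n$ auxiliary variables overall by the choice of $B$.

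Next I would iterate up the circuit. Each gate at level $i>1$ receives, as its $w_g$ ``inputs'', either the $n$ original variables, the $\eps n$ auxiliary variables introduced so far, or \emph{un-named} outputs of gates below. For an un-named input literal $z$ appearing in a clause, I substitute in place the small CNF that defines $z$ in terms of the level below, using the distributive identity $(\ell_1\vee\cdots\vee\ell_r)\wedge (z \Leftrightarrow \bigwedge_j D_j)$ and eliminating $z$ by duplicating clauses. Each such substitution replaces one literal of width contribution $1$ by the width of one $D_j$ ($\le O(B)$), multiplying the clause width by at most $O(B)$. Since each branch of the recursion traverses at most $d$ levels of the original circuit before reaching named variables or circuit inputs, the final clause width is at most $(O(B))^d + 1$, matching $(2000(c/\eps)\log(2c/\eps))^d + 1$. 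The entire construction is polynomial time because at each level only $O(B)$-width CNF gadgets are produced, and the substitution produces at most polynomially many clauses when the number of named ``cut'' variables is chosen to keep the substitution depth from amplifying clause count super-polynomially.

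The main obstacle is managing the tradeoff between auxiliary-variable count and clause width. Naming every gate (full Tseitin) would keep clause width at $O(B)$ but use $\Theta(\text{\#gates})$ variables, blowing the $\eps n$ budget. Pure inlining would use no auxiliaries but blow up clause width by $w_g$ per level, with no control over $w_g$. The block-decomposition step of the base case is the critical trick: it first reduces every gate to effective fan-in $O(B)$ by introducing only $O(w/B)$ auxiliaries per gate, and only then does the inlining happen, so the width blowup per level is guaranteed to be $O(B)$ rather than $w_g$. Verifying that the number of introduced auxiliaries, aggregated over all gates and blocks, really fits inside the $\eps n$ budget, and that the substitution step keeps the final formula of polynomial size, is where the bulk of the careful accounting lies.
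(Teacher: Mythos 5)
First, note that the paper does not prove this lemma at all: it is imported verbatim as Lemma 4.8 of~\cite{abboud2017more}, so there is no in-paper proof to compare against, and your attempt has to stand on its own. The overall architecture you chose --- block-decompose each high fan-in gate, name only a few ``aggregate'' auxiliary variables per block, and inline all un-named gate outputs so that the clause width multiplies by roughly $B$ per layer --- is the right skeleton for a result of this shape, and your explanation of why full Tseitin and pure inlining both fail is on point. However, the two quantitative steps on which the lemma actually rests do not go through as written.

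\textbf{Gap 1: the variable budget does not close.} You charge $O(\log(Bw))$ auxiliary bits per block to write the block's weighted partial sum in binary, and conclude that the total is $O((cn/B)\log(cn)) \le \eps n$. That inequality is false: $(cn/B)\log(cn) = \Theta\bigl(\eps n \log n / \log(2c/\eps)\bigr)$, which is superlinear in $n$ for fixed $c,\eps$. To stay within the $\eps n$ budget you can afford only $O(\log(2c/\eps))$ new bits per block --- a quantity independent of $n$ --- and a binary partial sum does not fit in that many bits even for unit weights once you account for all blocks, let alone for general threshold gates, whose weights may be as large as $2^{\Theta(w\log w)}$ so that a single block's partial sum already needs $\Theta(n\log n)$ bits. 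The proposal never explains how to summarize a weighted block with $O(\log(2c/\eps))$ bits, and that compression is precisely the missing idea. \textbf{Gap 2: the width of the aggregation gadget is never bounded.} Your gadget (ii) must verify $\sum_j S_j \ge T$ across all $\Theta(w/B)$ blocks; as a CNF over the blocks' auxiliary bits this constraint has width $\Theta\bigl((w/B)\log(\cdot)\bigr)$, which for a gate of fan-in $\Theta(n)$ is polynomial in $n$. Already for $d=1$ the lemma promises width $B+1 = 2000(c/\eps)\log(2c/\eps)+1$, a bound independent of $n$, and neither your gadget (i) (width $B+\Theta(\log n)$, since each clause must mention the $\Theta(\log n)$ sum bits) nor gadget (ii) meets it. Handling this top-level comparison within width $O(B)$ --- not the layer-by-layer inlining --- is where the real content of the lemma lies, and it is exactly the step the proposal leaves as a black box.
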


\begin{theo}
	A $2^{n \cdot (1 - 1/k^{1 / \omega(\log\log k)} )}$ time algorithm for $\KSAT$ implies that for any constant $c > 1$, $\ENP$ has no $cn$-wire depth-$(c\log\log n)$ $\TC$ circuit.	
\end{theo}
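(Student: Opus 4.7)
The plan is to apply Williams' algorithm-to-lower-bound framework. By Theorem~\ref{theo:ENP-lowb}, for each constant $c > 1$ it suffices to construct a deterministic $2^n/n^{\omega(1)}$-time algorithm for the satisfiability of $h = g_1 \wedge g_2 \wedge g_3$, where each $g_i$ is a $cn$-wire depth-$(c\log\log n)$ $\TC$ circuit on $n + O(\log n)$ inputs. Note that the class of $cn$-wire depth-$(c\log\log n)$ $\TC$ circuits is efficiently closed under projections and negation. Moreover, since a top $\AND$ gate is itself a threshold gate, the conjunction $g_1 \wedge g_2 \wedge g_3$ collapses into a single $\TC$ circuit of depth $c\log\log n + 1$ with $O(cn)$ wires --- that is, a $c' n$-wire depth-$(c'\log\log n)$ $\TC$ circuit for some constant $c' = O(c)$.

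Next, I would invoke Lemma~\ref{lm:TC-to-KSAT} with a parameter $\eps = \eps(n)$ to be chosen, reducing the TC-SAT instance to a $k$-CNF-SAT instance on $(1+\eps)(n + O(\log n))$ variables, with
\[
k \leq \bigl(2000(c'/\eps)\log(2c'/\eps)\bigr)^{c'\log\log n} + 1.
\]
Feeding this into the hypothesized $\KSAT$ algorithm yields running time $2^{(1+\eps)n(1 - 1/k^{1/g(k)})}$, where $g(k) = \omega(\log\log k)$ is the function witnessing the hypothesis.

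The main obstacle is choosing $\eps$ so that the $\KSAT$ savings $n/k^{1/g(k)}$ exceed the variable blowup $\eps n$ by at least $\omega(\log n)$. With $d = c'\log\log n$, we have $\log k = \Theta(d\log(1/\eps))$ and $\log\log k = \Theta(\log\log\log n + \log\log(1/\eps))$. Since $g(k)/\log\log k \to \infty$ as $k \to \infty$, I can make $\eps$ small enough that $k$ is large enough for $g(k)$ to exceed $c'\log\log n$; concretely, setting $\eps = 1/\log^{A} n$ for a suitably chosen unbounded $A = A(n)$ makes $\log k/g(k) = o(\log k/\log\log k) = o(\log(1/\eps))$, and hence $k^{1/g(k)} = o(1/\eps)$. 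Under this choice, $(1+\eps)n/k^{1/g(k)} - \eps n = \omega(\eps n) \gg \omega(\log n)$, so the overall algorithm runs in $2^{n - \omega(\log n)} \subseteq 2^n/n^{\omega(1)}$ time. Theorem~\ref{theo:ENP-lowb} then delivers the claimed lower bound: for every constant $c > 1$, $\ENP$ has no $cn$-wire depth-$(c\log\log n)$ $\TC$ circuit.
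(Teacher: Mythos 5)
Your proposal is correct and follows essentially the same route as the paper: collapse the $\AND_3$ into one more layer of the $\TC$ circuit, apply the Abboud et al.\ reduction with a subconstant $\eps$, and balance the $\KSAT$ savings $n/k^{1/g(k)}$ against the $\eps n$ variable blowup before invoking Theorem~\ref{theo:ENP-lowb}. The only point worth making explicit is that your "suitably chosen unbounded $A(n)$" must in fact be taken as large as roughly $\log^{\Omega(1)} n/(\log\log n)^2$ so that $\log\log k = \Omega(\log\log n)$ and hence $g(k)=\omega(\log\log k)$ dominates the depth $c'\log\log n$; this makes your $\eps=1/\log^{A}n$ the same (up to reparameterization) as the paper's choice $\eps^{-1}=2^{\log^{t}n}$.
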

\begin{proof}
	For any constant $c$, suppose we are given a circuit of $(cn/3 - 1)$-wire and depth-$(c \log \log n - 1)$, in order to apply Theorem~\ref{theo:ENP-lowb}, we need to show the $\AND$ of $3$ such circuits admits a faster satisfiability algorithm. 
	
	Note that $\AND$ of $3$ such circuits is just a $\TC$ circuit of $cn$-wire and depth-$(c \log \log n)$, denote that circuit by $C$. Let $\eps$ be a parameter to be decided later, we apply Lemma~\ref{lm:TC-to-KSAT} to transform the satisfiability problem of $C$ into a $k\textsf{-CNF}$ formula $\varphi$ on $(1+\eps) n$ variables, with
	\[
	k \le (2000 (c/\eps) \lg(4c/\eps))^{ c\log\log n} + 1.
	\]
	
	Now we set $\eps$ so that $\eps^{-1} = 2^{\log^{t} n}$ for a small constant $t$. We then have $\log \log k = \Theta(\log \log n)$.
	 
	Applying the assumed $k$-$\SAT$ algorithm, the running time can be calculated as
	\begin{align*}
	2^{(1+\eps) (1 - 1/k^{1 / \omega(\log\log k)} ) n } = 2^{(1+\eps) \cdot n \cdot (1 - (\eps/c)^{o(1)})} = 2^{n (1 - \eps^{o(1)})} = 2^{n} / \log^{\omega(1)} n.
	\end{align*}
	
	The proof is completed by applying Theorem~\ref{theo:ENP-lowb}.
\end{proof}




\section*{Acknowledgment}
I would like to thank Ryan Williams for detailed comments on an early draft of this paper, countless helpful discussions and encouragements during this work, and pointing out some applications of the structure lemmas for $\THRTHR$. 

I am grateful to Ofer Grossman, Kaifeng Lv and Peilin Zhong for helpful discussions and suggestions.
	
\bibliographystyle{alpha}
\bibliography{team}

\appendix
\section{An Alternative Proof for Lemma~\ref{lm:THRTHR-to-WMaxIP}}

Here we present an alternative proof for Lemma~\ref{lm:THRTHR-to-WMaxIP}, which reduces the satisfiability problem for $\THRTHR$ to $\WMaxIP$ directly, without applying Corollary~\ref{cor:equivalent-THRTHR-THRMAJ}.

\begin{proofof}{Lemma~\ref{lm:THRTHR-to-WMaxIP}}
	We are going to apply Lemma~\ref{lm:lowb-THRTHR} by reducing the satisfiability problem for $\THRTHR$ circuits to $\WMaxIP$ or $\Hopcroft$.
	
	Our reduction here roughly follows Theorem~3.1 of~\cite{Williams14THR}, which is itself inspired by~\cite{Matousek91Dom}.
	
	Let $\LEQ : \mathbb{Z} \times \mathbb{Z} \to \{0,1\}$ be the function that $\LEQ(a,b) := 1$ if $a \le b$ and $0$ otherwise. For simplicity, we assume $n$ is even. Let $C$ be a $\THRTHR$ circuit of size $s = n^k$ and $G$ be its top $\THR$ gate. Let $W_1,W_2,\dotsc,W_{s}$, $T$ and $L$ be the weights, threshold and associate linear function of $G$. Let $G_1,G_2,\dotsc,G_{s}$ be the corresponding $\THR$ gates on the bottom layers. We use $L_1,L_2,\dotsc,L_{s}$ and $T_{1},T_{2},\dotsc,T_{s}$ to denote their associated linear functions and thresholds.
	
	For each $x,y \in \{0,1\}^{n/2}$, we interpret $x$ and $y$ as an assignment to the first half and second half of the input to $C$ respectively.
	
	For each linear functions $L_{j}$, we use $X_j(x)$ and $Y_j(y)$ to denote the contribution from $x$ and $y$ respectively. We have
	\[
	G_{j}(x,y) := \LEQ(T_j,L_{j}(x,y)) = \LEQ(T_j,X_j(x) + Y_j(y)) = \LEQ(T_j - X_j(x), Y_j(y)).
	\]
	
	And therefore
	\[
	L(x,y) = \sum_{j=1}^{s} W_j \cdot \LEQ(T_j - X_j(x), Y_j(y)).
	\]
	
	Then, for each $x \in \{0,1\}^{n/2}$, we construct the vector $A(x)$, such that $A(x)_j := T_j - X_j(x)$. Similarly, for each $y \in \{0,1\}^{n/2}$, we construct vector $B(y)$ with $B(y)_j := Y_j(y)$.
	
	Now, let $N = 2^{n/2}$. For each $j \in [s]$, let $S_j$ be the sorted list of all integers $A(x)_j$'s and $B(y)_j$'s for $x,y \in \{0,1\}^{n/2}$. If two values are the same, items from $A(x)_j$'s come first. Then we replace each $A(x)_j$'s and $B(y)_j$'s by their ranks in the list $S_j$. It is easy to see that this step reduces the weight to $[2N]$, and preserves the value of $\LEQ(A(x)_j,B(y)_j)$.
	
	Let $t$ be a parameter to be specified later, for each $j$, we partition $S_j$ into $t$ buckets, each of size at most $\lceil 2N / t \rceil$. Let $x,y \in \{0,1\}^{n/2}$ be assignments to $A$ and $B$, there are two cases:
	
	\paragraph*{There is a $j \in [s]$ such that $A(x)_j$ and $B(y)_j$ are in the same buckets.} In this case, note that for each $x \in \{0,1\}^{n/2}$, there are at most $s \cdot (2N / t)$ possibly $y$ such that $(x,y)$ belongs to this case. Hence, we can enumerate all such pairs and check them in $N^2 / t \cdot s^{c}$ time for a universal constant $c$.
	
	\paragraph*{For all $j \in [s]$, $A(x)_j$ and $B(y)_j$ are in different buckets.} In this case, we can safely replace each $A(x)_j$ and $B(y)_j$ by the indexes of their buckets, which reduces their range to $[t]$.
	
	Now we define some auxiliary vectors to ease our construction. For $k \in [t]$, we define $e^{[k]} \in \{0,1\}^{t}$ such that $e^{[k]}_i = 1$ if and only if $i = k$, we also define $o^{[k]} \in \{0,1\}^{t}$ such that $o^{[k]}_i = 1$ if and only if $k > i$. 
	
	Recall that for two vectors $u,v \in \{0,1\}^*$, we use $u \circ v$ to denote their concatenation. We define:
	\[
	u(x) := ( \circ_{j=1}^{s} e^{[A(x)_j]} ) \circ ( \circ_{j=1}^{s} e^{[A(x)_j]} ),
	\]
	\[
	v(y) := ( \circ_{j=1}^{s} o^{[B(y)_j]} ) \circ ( \circ_{j=1}^{s} e^{[B(y)_j]} ),
	\]
	\[
	w := (\circ_{j=1}^{s} (W_j)^{\otimes t} ) \circ (-M)^{\otimes (s \cdot t)}.
	\]
	
	In which $M$ denote a sufficient large number (can be treated as infinity) and $(W_j)^{\otimes t}$ denotes a vector repeating $W_j$ $t$ times.
	
	Now, consider $u(x) \odot_w v(y)$, it is straightforward to verify that that value would be very small if there exists a $j \in [s]$ such that $A(x)_j = B(y)_j$, and is equal to $L(x,y)$ otherwise.
	
	Therefore, computing the maximum of $u(x) \odot_w v(y)$ for all $x,y \in \{0,1\}^{n/2}$ solves this case, which can be reduced to a $\WMaxIP_{N,2st}$ instance.
	
	Setting $t = s^{c} \cdot n^{k'}$ for an integer $k'$. The running time becomes $O(N^2 / n^{k'})$ plus the running time for solving $\WMaxIP_{2^{n/2},O(n^{O(k) + k'})}$, which is also $O(N^{2} / n^{k'}) = O(2^{n} / n^{k'})$ by our assumption. Applying Lemma~\ref{lm:lowb-THRTHR} completes the proof.
	
	
	The reduction to $\Hopcroft$ works roughly the same, with the only modification that we transform the $\THRTHR$ circuit into an equivalent $\DISJOR\circ\ETHR\circ\THR$ at the beginning (via Proposition~\ref{prop:circuit-facts-contain} (3)), and solve each $\ETHR \circ \THR$ sub-circuits separately via a similar reduction to $\Hopcroft$.
\end{proofof}
\section{Applications of Structure Lemma I in Communication Complexity}\label{app:cc}

In this section we prove Theorem~\ref{theo:new-protocols-THRTHR}. First we introduce the formal definition of $\RP \cdot \UPPcc$ protocols.

\begin{defi}[$\RP \cdot \UPPcc$ Protocols]	
	For a problem $\Pi$ with inputs $x,y$ of length $n$ (Alice holds $x$ and Bob holds $y$), we say a communication protocol is a $\RP \cdot \UPP$ communication protocol with cost $c$ if the following holds.
	
	\begin{itemize}
		\item Alice and Bob jointly toss $c$ public coins to get a string $z \in \{0,1\}^c$.
		
		\item Given $y$ and $z$, Bob sends Alice $c$ bits, and Alice decides to accept or not.\footnote{In $\UPP$, one-way communication is equivalent to the seemingly more powerful one in which they communicate~\cite{paturi1986probabilistic}.} They have an unlimited supply of private random coins (not public, which is important) during their conversation. 
		
		\begin{itemize}
			\item If $\Pi(x,y) = 1$, for at least half $z$'s from $\{0,1\}^c$, Alice accepts with probability $> 1/2$.
			\item Otherwise, for all $z$ from $\{0,1\}^c$, Alice accepts with probability $< 1/2$.
		\end{itemize}
		
	\end{itemize}
\end{defi}

Also, we need the following standard fact about $\THR \circ \MAJ$ circuits.

\begin{lemma}[\cite{ForsterKLMSS01}]\label{lm:THR-MAJ-UPP}
	For a function $F : \{0,1\}^{n} \times \{0,1\}^n \to \{0,1\}$, suppose it admits a $\THR\circ\MAJ$ circuit of size $s$, then it also admits a $\UPP^\cc$ protocol of cost $O(\log s)$.
\end{lemma}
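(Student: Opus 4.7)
The plan is to show any $\THR\circ\MAJ$ circuit of size $s$ has sign-rank $\poly(s)$, and then invoke the classical Paturi--Simon theorem~\cite{paturi1986probabilistic} saying that sign-rank $r$ implies a $\UPPcc$ protocol of cost $O(\log r)$. Plugging in $r = \poly(s)$ yields the stated $O(\log s)$ bound.

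For the sign-rank bound, write $F(x,y) = \mathrm{sgn}\!\left(\sum_{i=1}^s w_i g_i(x,y) - T\right)$ where each $g_i$ is a $\MAJ$ gate. Because $\MAJ$ uses unit weights on its literal inputs, the linear function of $g_i$ splits additively as $v_i(x,y) = a_i(x) + b_i(y) - T_i$, with $a_i(x), b_i(y)$ integer-valued in $[-s,s]$. The elementary identity
\[
[\,a_i(x) + b_i(y) \geq T_i\,] \;=\; \sum_{k=-s}^{s} [a_i(x) = k] \cdot [b_i(y) \geq T_i - k]
\]
then rewrites the top linear form as an inner product $\langle \phi(x), \psi(y) \rangle$ where $\phi(x)$ packs all indicators $[a_i(x)=k]$ together with a constant $1$-coordinate, and $\psi(y)$ packs $w_i \cdot [b_i(y) \geq T_i - k]$ together with a $-T$-coordinate. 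The total dimension is $d = O(s^2)$, so the sign-rank of $F$ is at most $d$.

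After normalizing $\psi$ by its $\ell_\infty$-norm so that both feature maps land in $[-1,1]$ without changing the sign of the inner product, the Paturi--Simon protocol runs in cost $O(\log d) = O(\log s)$: jointly sample $j \in [d]$ with $O(\log s)$ public coins; Alice privately outputs a $\pm 1$-bit $b_A$ with mean $\phi(x)_j$, Bob similarly outputs $b_B$ with mean $\psi(y)_j$, and they accept iff $b_A b_B = +1$. Since $\mathbb{E}[b_A b_B \mid j] = \phi(x)_j \psi(y)_j$, averaging over $j$ makes the bias of acceptance proportional to $\langle \phi(x), \psi(y) \rangle$, so the protocol is a valid $\UPPcc$ protocol.

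The only mild technical nuisance is the possibly exponential magnitude of the top-gate weights $w_i$ and threshold $T$, but these affect only the rescaling constant applied to $\psi$, not the dimension or the sign of the inner product. The genuinely nontrivial step is really the additive $x$/$y$ split of each $\MAJ$ gate's associated linear function, which is immediate from $\MAJ$'s unit-weight structure and is exactly the feature that would fail if the bottom gates were arbitrary $\THR$ gates (where weights could entangle $x$ and $y$ multiplicatively through large coefficients and blow up the effective dimension).
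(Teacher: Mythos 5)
Your proposal is correct and is essentially the standard argument of Forster et al.\ that the paper merely cites for this lemma: split each $\MAJ$ gate's unit-weight linear form additively between $x$ and $y$, expand via the indicator identity into an inner product of $O(s^2)$-dimensional feature vectors, and invoke Paturi--Simon. Two small points to tidy. First, since $\THR$ is defined by a non-strict inequality, the inner product $\langle\phi(x),\psi(y)\rangle$ could equal zero, making the acceptance probability exactly $1/2$, which satisfies neither case of the $\UPP$ promise; as all quantities are integers, replace $T$ by $T-\tfrac12$ so the inner product is always at least $\tfrac12$ in absolute value. Second, $\UPP^{\cc}$ is a \emph{private}-coin model (with public coins and unbounded error every function is trivial), so rather than sampling $j$ with public coins you should have Alice sample $j$ privately and send it to Bob, which costs the same $O(\log s)$ bits and does not change the analysis.
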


Then Theorem~\ref{theo:new-protocols-THRTHR} follows directly from Lemma~\ref{lm:structure-THRTHR-I} and Lemma~\ref{lm:THR-MAJ-UPP}.

\begin{proofof}{Theorem~\ref{theo:new-protocols-THRTHR}}
	Given a $\THRTHR$ circuit $C$ of size $s$ computing $F$, by Lemma~\ref{lm:structure-THRTHR-I}, it has an equivalent $\GAPOR \circ \THR \circ \MAJ$ circuit $C'$ of size $\poly(s)$. Alice and Bob first toss $O(\log s)$ public coins to select a $\THR \circ \MAJ$ sub-circuit of $C'$ at uniformly random, and simulate the $O(\log s)$ cost $\UPPcc$ protocols for it by Lemma~\ref{lm:THR-MAJ-UPP}.
\end{proofof}
	
\end{document}